\newtheorem{thm}{Theorem}
\newtheorem{cor}{Corollary}
\newtheorem{lem}{Lemma}
\newcommand{\entropy}[1]{\mathsf{h}\left(#1\right)}
\begin{document}

\title{Interference Channels with Coordinated Multi-Point Transmission: Degrees of Freedom, Message Assignment, and Fractional Reuse}
\author{{\large{Aly El Gamal, {\em Student Member, IEEE}, V. Sreekanth Annapureddy, {\em Student Member, IEEE}, and Venugopal V. Veeravalli, {\em Fellow, IEEE}}}
\thanks{The authors are with the Coordinated Science Laboratory and the
Department of Electrical and Computer Engineering,
University of Illinois at Urbana-Champaign, Urbana, IL 61801 USA (e-mail: elgamal1@illinois.com, sreekanthav@gmail.com, vvv@illinois.edu).}
\thanks{This paper was presented in part at the $46^{th}$ Annual Conference on Information Sciences and Systems (CISS), Princeton, NJ, Mar. 2012, and in part at the IEEE International Conference on Communications (ICC), Ottawa, ON, Jun. 2012.}
\thanks{This research was supported in part by the NSF award CCF-0904619, through the University of Illinois at Urbana-Champaign, and grants from Intel and Motorola Solutions. }
}
\maketitle

\begin{abstract}
Coordinated Multi-Point (CoMP) transmission is an infrastructural enhancement under consideration for next generation wireless networks. In this work, the capacity gain achieved through CoMP transmission is studied in various models of wireless networks that have practical significance. The capacity gain is analyzed through the degrees of freedom (DoF) criterion. The DoF available for communication provides an analytically tractable way to characterize the capacity of interference channels. The considered channel model has $K$ transmitter/receiver pairs, and each receiver is interested in one unique message from a set of $K$ independent messages. Each message can be available at more than one transmitter. The maximum number of transmitters at which each message can be available, is defined as the \emph{cooperation order} $M$. For fully connected interference channels, it is shown that the asymptotic per user DoF, as $K$ goes to infinity, remains at $\frac{1}{2}$ as $M$ is increased from $1$ to $2$. Furthermore, the same negative result is shown to hold for all $M \geq 2$ for any message assignment that satisfies a \emph{local cooperation} constraint. On the other hand, when the assumption of full connectivity is relaxed to \emph{local connectivity}, and each transmitter is connected only to its own receiver as well as $L$ neighboring receivers, it is shown that local cooperation is optimal. The asymptotic per user DoF is shown to be at least $\max\left\{\frac{1}{2},\frac{2M}{2M+L}\right\}$ for locally connected channels, and is shown to be $\frac{2M}{2M+1}$ for the special case of \emph{Wyner's asymmetric model} where $L=1$. An interesting feature of the proposed achievability scheme is that it relies on simple zero-forcing transmit beams and does not require symbol extensions. Also, to achieve the optimal per user DoF for Wyner's model, messages are assigned to transmitters in an asymmetric fashion unlike traditional assignments where message $i$ has to be available at transmitter $i$. It is also worth noting that some receivers have to be inactive, and \emph{fractional reuse} is needed to achieve equal DoF for all users.
\end{abstract}

\section{Introduction}\label{sec:introduction}


In the past decade, there has been a significant growth in the usage of wireless networks, and in particular, cellular networks, because of the increased data demands. This has been the driver of recent research for new ways of managing interference in wireless networks.


Due to the superposition and broadcast properties of the wireless medium, interfering signals pose a significant limitation to the rate of communication of users in a wireless network. Hence, it is of interest to understand the fundamental limits of communication in interference channels and to capture the effect of interference on optimal encoding and decoding schemes. The problem of finding the capacity region of even the simple $2-$user Gaussian interference channel is still an open problem. However, approximations exist in the literature, where the capacity region or the sum capacity is known in the special scenarios of strong and low interference (\cite{Carleial-IT77},~\cite{Shang-Kramer-Chen-IT09},~\cite{Motahari-Khandani-IT09},~\cite{Annapureddy-Veeravalli-IT09}). Another effective approximation that simplifies the problem is to consider only the sum degrees of freedom (DoF) or the pre-log factor of the sum capacity at high signal-to-noise ratio (SNR). The DoF criterion provides an analytically tractable way to characterize the sum capacity and captures the number of interference-free sessions that can be supported in a given multi-user channel. 


In~\cite{Madsen-Nosratinia}, the DoF per user of the fully connected Gaussian interference channel was shown to be upper bounded by $1/2$. This was shown to be achievable through the interference alignment (IA) scheme in~\cite{Cadambe-IA}. However, this achievable DoF many not be sufficient to meet the demands of wireless applications in many scenarios of practical interest, and hence, it is of interest to study ways to enhance the infrastructure of wireless networks in order to increase the rate of communication.


The considered infrastructural enhancement in this work is the deployment of a backhaul link, through which the transmitters/base stations can exchange messages that they wish to deliver in a cellular downlink session\footnote{The considered scenario has more practical relevance than the the cellular uplink model where base station receivers can cooperate by sharing analog signals. Nevertheless, as discussed in~\cite{Gomadam-Cadambe-Jafar-IT11} for fully connected channels, the results obtained using linear schemes in our transmitter cooperation model can be obtained in the dual receiver cooperation model.}. In order to model the finite capacity of the backhaul link, we impose a cooperation constraint where each message can be available at a maximum of $M$ transmitters. We call $M$ the \emph{cooperation order}. The availability of each message at more than one transmitter allows for Coordinated Multi-Point (CoMP) transmission~\cite{CoMP-book}. In~\cite{Annapureddy-ElGamal-Veeravalli-IT11}, a CoMP transmission model for the fully connected $K$-user interference channel was considered. Each message was assumed to be available at the transmitter carrying the same index as the message as well as $M-1$ succeeding transmitters. Using an extension of the asymptotic interference alignment scheme of~\cite{Cadambe-IA}, the DoF of the channel in this setting was shown to be lower bounded by $\frac{K+M-1}{2}, \forall K < 10$, and it was conjectured that this lower bound is valid for all values of $K$. It was then shown that this lower bound is within one degree of freedom of the maximum achievable DoF. We note that this DoF cooperation gain beyond $K/2$ does not scale linearly with $K$ as $K$ goes to infinity. In other words, the asymptotic per user DoF remains $1/2$. In Section~\ref{sec:fc}, we study whether there exists an assignment of messages satisfying the cooperation order constraint that enables the achievability of an asymptotic per user DoF that is strictly greater than $1/2$.


The assumption of full connectivity is key to the results obtained in~\cite{Madsen-Nosratinia},~\cite{Cadambe-IA},~\cite{Annapureddy-ElGamal-Veeravalli-IT11}, and in Section~\ref{sec:fc} of this work. For the fully connected interference channel, interference mitigating schemes are designed to avoid the interference caused by all other transmitters in the network. However, in practice, each receiver gets most of the destructive interference from a few dominant interfering transmitters. For example, in cellular networks, the number of dominant interfering transmit signals at each receiver ranges from two to seven. All the interference from the remaining transmitters may contribute to the interference floor, and the improvement obtained by including them in the dominant interferers set may not justify the corresponding overhead.  For this reason, we study locally connected channels in Section~\ref{sec:lc}, where the channel coefficients between transmitters and receivers that lie at a distance that is greater than some threshold are approximated to equal zero.


For the locally connected channel model, we assume that each transmitter is connected to $L$ neighboring receivers as well as the receiver carrying its own index, $\left \lfloor \frac{L}{2} \right \rfloor$ preceding receivers and $\left \lceil \frac{L}{2} \right \rceil$ succeeding receivers. The special case of this model where $L=1$ is Wyner's asymmetric model~\cite{Wyner}. This special case was considered in~\cite{Lapidoth-Shamai-Wigger-ISIT07}, and it was assumed that each message is available at the transmitter carrying the same index as well as $M-1$ succeeding transmitters. The asymptotic per user DoF was characterized under this setting as $\frac{M}{M+1}$. The achieving scheme relies only on zero-forcing transmit beamforming. In Section~\ref{sec:lc}, we extend this result and characterize the asymptotic per user DoF for Wyner's asymmetric model as $\frac{2M}{2M+1}$ under a general cooperation order constraint. The message assignment enabling this result uses only local cooperation, that is, each message is available only at neighboring transmitters. The size of the neighborhood does not scale linearly with the size of the network, and therefore, our assignment scheme enjoys the same advantage as the message assignment considered in~\cite{Lapidoth-Shamai-Wigger-ISIT07}. 

\subsection{Document Organization}
The remainder of this work is organized as follows. Related work is summarized in Section~\ref{sec:relatedwork}. The problem setup is then introduced in Section~\ref{sec:problemsetup}. An informal summary of results is provided in Section~\ref{sec:informalsummary}. The asymptotic per user DoF of the fully connected interference channel with CoMP transmission is studied in Section~\ref{sec:fc}. The locally connected channel model is considered in Section~\ref{sec:lc}. The introduced results are then discussed in Section~\ref{sec:discussion}, and the paper is concluded with some final remarks in Section~\ref{sec:conclusions}.

\subsection{Related Work}\label{sec:relatedwork}

Many existing works studying interference networks with cooperating transmitters use the term \emph{cognitive radios} (e.g.~\cite{Devroye-Mitran-Tarokh-IT06},~\cite{Lapidoth-Shamai-Wigger-ITW07},~\cite{Jovicic-Viswanath-IT09},~\cite{Lapidoth-Levy-Shamai-Wigger-ISIT09}, ~\cite{Lapidoth-Levy-Shamai-Wigger-arXiv12}). Cooperation through cumulative message sharing is studied for the fully connected channel in~\cite{Maamari-Tuninetti-Devroye-arXiv13}, where each message is available at the transmitter carrying the same index and all following transmitters. We use a similar setting of cooperation to that of cumulative message sharing in the coding scheme for locally connected channels in Section~\ref{sec:dofgains}. In another body of work, unlike the considered setting where we assume that transmitters cooperate by sharing complete messages, cooperation through sharing partial message information that is considered as side information is studied (see e.g.,~\cite{Devroye-Sharif-ISIT07}). In~\cite{Wang-Tse-ISIT10} and~\cite{Prabhakaran-Viswanath-IT11}, the transmitters are allowed to cooperate through noise free bit pipes or over the air, respectively. 

Communication scenarios with cooperating multiple antenna transmitters have been considered in~\cite{Ali-Motahari-Khandani-IT08} and~\cite{Jafar-Shamai-IT08} under the umbrella of the x-channel. However, in the x-channel, mutually exclusive parts of each message are given to different transmitters. This is extended in~\cite{Annapureddy-ElGamal-Veeravalli-ISIT11} to allow each part of each message to be available at more than one transmitter, and in~\cite{Devroye-Sharif-ISIT07} the MIMO x-channel is studied in the setting where transmitters share further side information.

Finally, it is worth noting that in the considered setting, we implicitly assume the coordinated design of the transmit beams between all transmitters. This kind of coordination is also referred to in the literature as \emph{transmitter cooperation}, even without the sharing of messages (see e.g.~\cite{Zhang-Cui-SP10}).

\section{Problem Setup}\label{sec:problemsetup}

We use the standard model for the $K-$user interference channel with single-antenna transmitters and receivers,
\begin{equation}
Y_i(t) = \sum_{j=1}^{K} H_{ij}(t) X_j(t) + Z_i(t),
\end{equation}
where $t$ is the time index, $X_j(t)$ is the transmitted signal of transmitter $j$, $Y_i(t)$ is the received signal at receiver $i$, $Z_i(t)$ is the zero mean unit variance Gaussian noise at receiver $i$, and $H_{ij}(t)$ is the channel coefficient from transmitter $j$ to receiver $i$ over the time slot $t$. We remove the time index in the rest of the paper for brevity unless it is needed. 

We use $[K]$ to denote the set $\{1,2,\ldots,K\}$. For any set ${\cal A} \subseteq [K]$, we define the complement set $\bar{\cal A} = \{i: i\in[K], i\notin {\cal A}\}$. For each $i \in [K]$, let $W_i$ be the message intended for receiver $i$. We use the abbreviations $W_{\cal A}$, $X_{\cal A}$, $Y_{\cal A}$, and $Z_{\cal A}$ to denote the sets $\{W_i, i\in {\cal A}\}$, $\left\{X_i, i\in {\cal A}\right\}$, $\left\{Y_i, i\in {\cal A}\right\}$, and $\left\{Z_i, i\in {\cal A}\right\}$, respectively, and the abbreviations $X_{\cal A}^n$, $Y_{\cal A}^n$, and $Z_{\cal A}^n$ to denote the sets $\left\{X_i(t), i\in {\cal A},t\in[n]\right\}$, $\left\{Y_i(t), i\in {\cal A},t\in[n]\right\}$, and $\left\{Z_i(t), i\in {\cal A},t\in[n]\right\}$, respectively. Finally, for ${\cal A},{\cal B}\subseteq[K]$, we let ${\bm H}_{{\cal A},{\cal B}}$ be the $|{\cal A}| \times |{\cal B}|$ matrix of channel coefficients between $X_{\cal B}$ and $Y_{\cal A}$.

\subsection{Channel Model}\label{sec:channelmodel}

We consider two different channel models in the sequel. First, we consider a fully connected interference channel where all channel coefficients are drawn independently from a continuous distribution. We next consider a locally connected channel model where channel coefficients between well separated nodes are approximated to be identically zero. The locally connected channel model is a function of the number of interferers $L$ as follows:

\begin{equation}\label{eq:channel}
H_{ij} \text{ is not identically } 0 \text { if and only if } j \in \left[i- \left \lceil \frac{L}{2} \right \rceil , i+ \left \lfloor \frac{L}{2} \right \rfloor \right],
\end{equation}
and all channel coefficients that are not identically zero are drawn independently from a continuous distribution.
We note that for values of $L=1$ and $L=2$, the locally connected channel reduces to the commonly known Wyner's asymmetric and symmetric linear models, respectively~\cite{Wyner}. We illustrate examples for the described fully and locally connected channel models in Figure~\ref{fig:channelmodel}.

\begin{figure}
  \centering
\subfloat[]{\label{fig:fc}\includegraphics[height=0.25\textwidth]{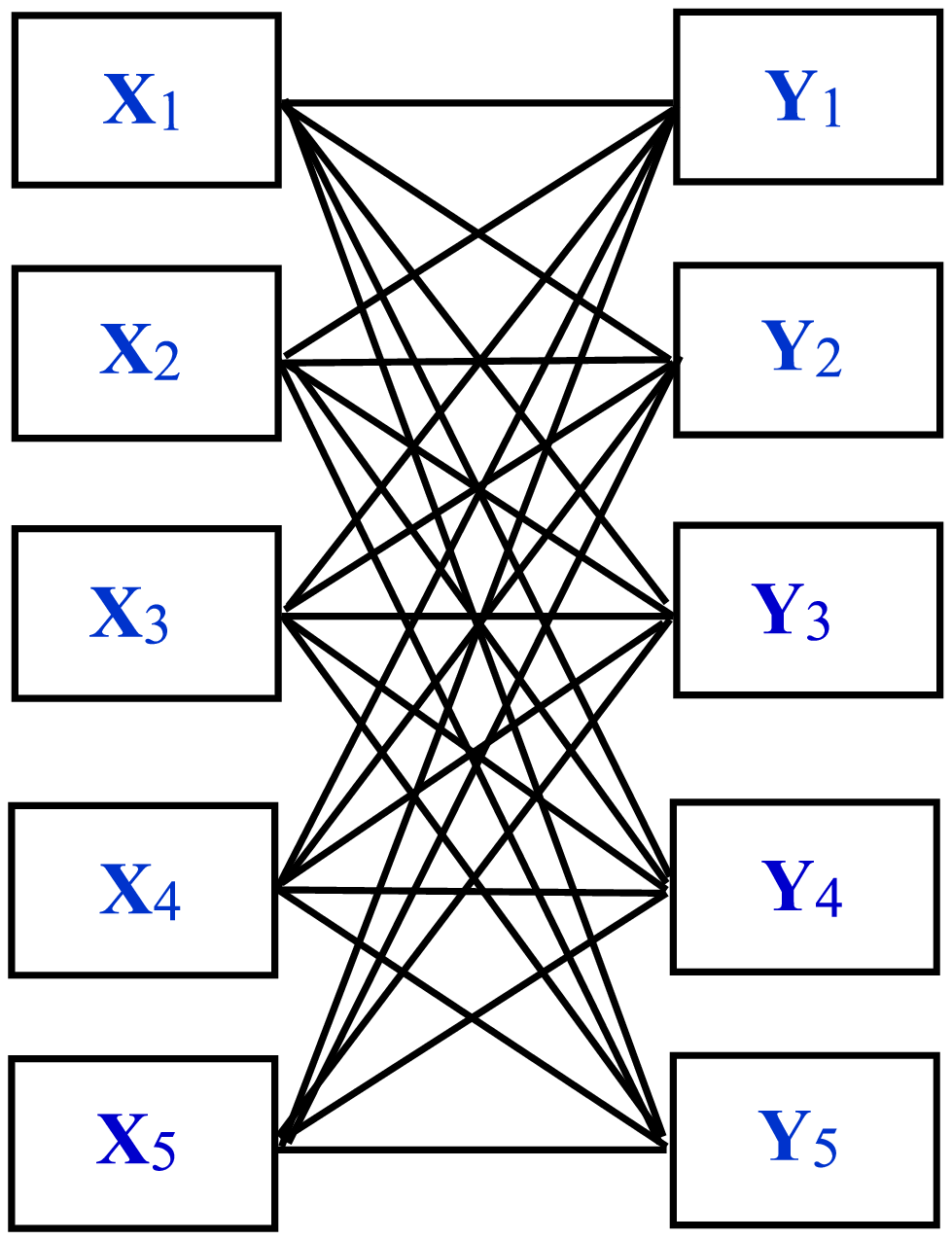}}                
\quad\quad\quad\quad\subfloat[]{\label{fig:lc}\includegraphics[width=0.188\textwidth]{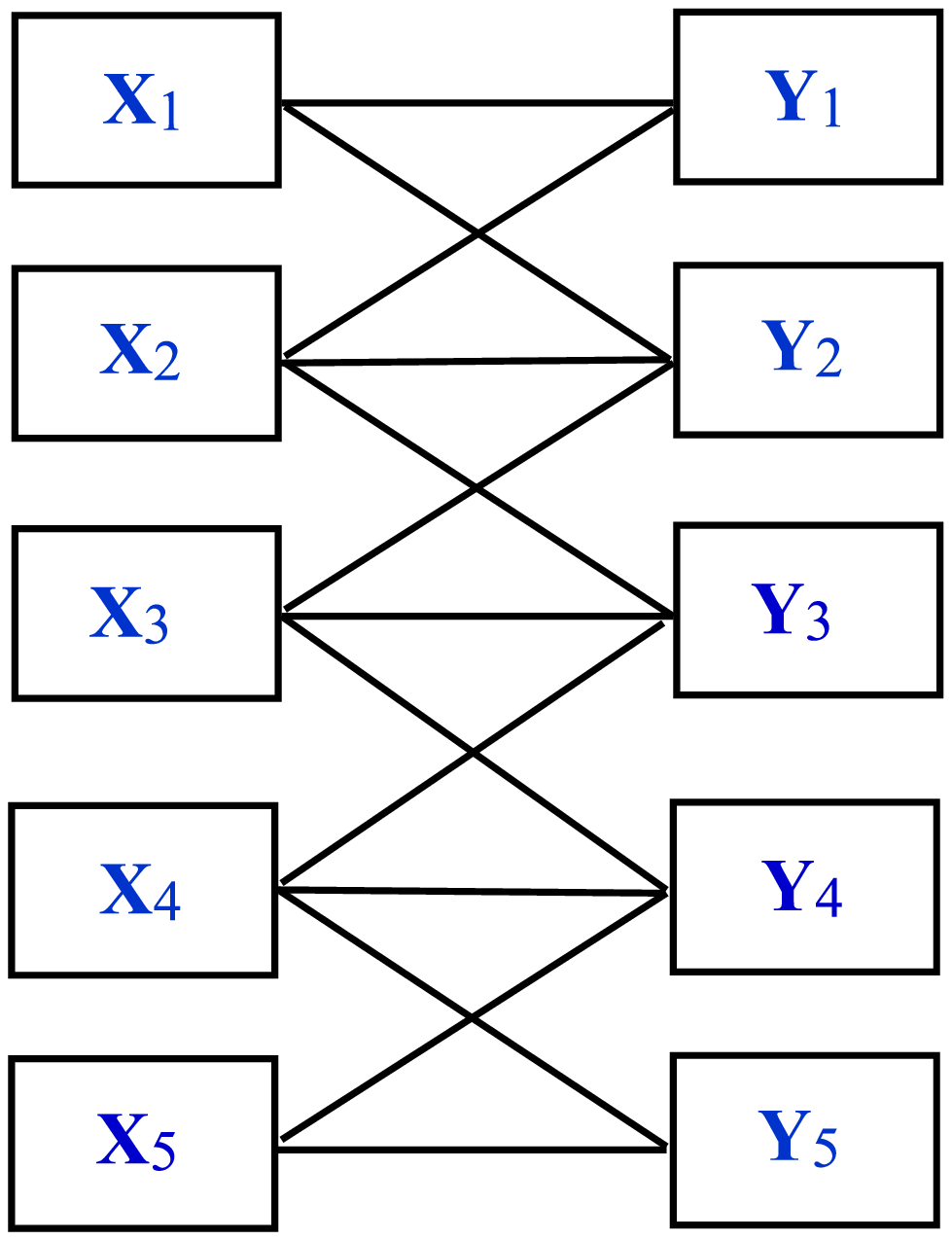}}
  \caption{Figure showing examples of the considered channel models with a number of users $K=5$. In $(a)$, a fully connected channel model is shown. In ($b$), a locally connected channel model with connectivity parameter $L=2$ is shown.}
  \label{fig:channelmodel}
\end{figure}

\subsection{Cooperation Model}

For each $i \in [K]$, let ${\cal T}_i \subseteq [K]$ be the transmit set of receiver $i$, i.e., those transmitters with the knowledge of $W_i$. The transmitters in ${\cal T}_i$ cooperatively transmit the message $W_i$ to the receiver $i$. The messages $\{W_i\}$ are assumed to be independent of each other. The \emph{cooperation order} $M$ is defined to be the maximum transmit set size,
\begin{equation}\label{eq:coop_order}
M = \max_i |{\cal T}_i|.
\end{equation}
For any set ${\cal A} \subseteq [K]$, we define $C_{\cal A}$ as the set of messages carried by transmitters with indices in ${\cal A}$, i.e., the set $\{i: {\cal T}_i \cap {\cal A} \neq \phi\}$.

\subsection{Degrees of Freedom}\label{sec:dofmodel}
Let $P$ be the average transmit power constraint at each transmitter, and let ${\cal W}_i$ denote the alphabet for message $W_i$. Then the rates $R_i(P) = \frac{\log|{\cal W}_i|}{n}$ are achievable if the decoding error probabilities of all messages can be simultaneously made arbitrarily small for large enough $n$, and this holds for almost all channel realizations. The degrees of freedom $d_i, i\in[K],$ are defined as $d_i=\lim_{P \rightarrow \infty} \frac{R_i(P)}{\log P}$. The DoF region ${\cal D}$ is the closure of the set of all achievable DoF tuples. The total number of degrees of freedom ($\eta$) is the maximum value of the sum of the achievable degrees of freedom, $\eta=\max_{\cal D} \sum_{i \in [K]} d_i$.

For a $K$-user channel, we define $\eta(K,M)$ as the best achievable $\eta$ over all choices of transmit sets satisfying the cooperation order constraint in \eqref{eq:coop_order}. Similarly, we define $\eta_L(K,M)$ for a locally connected channel with $L$ interfering signals per receiver. 

In order to simplify our analysis, we define the asymptotic per user DoF $\tau(M)$, and $\tau_L(M)$ to measure how $\eta(K,M)$, and $\eta_L(K,M)$ scale with $K$, respectively, while all other parameters are fixed,
\begin{equation}
\tau(M) = \lim_{K\rightarrow \infty} \frac{\eta(K,M)}{K},
\end{equation}
\begin{equation}
\tau_L(M) = \lim_{K\rightarrow \infty} \frac{\eta_L(K,M)}{K}.
\end{equation}

For the locally connected channel model where $L>1$, let $x=\left \lfloor \frac{L}{2} \right \rfloor$. We silence the first $x$ transmitters, deactivate the last $x$ receivers, and relabel the transmit signals to obtain a ($K-x$)-user channel, where the transmitter $i$ is connected to receivers in the set $\{Y_k: k\in\{i,i+1,\ldots,i+L\}\}$. We note that the new channel model gives the same value of $\tau_L(M)$ as the original one, since $x = o(K)$. Unless explicitly stated otherwise, we will be using this equivalent model in the rest of the paper. We show an example construction of the equivalent channel model in Figure~\ref{fig:locallyconnected}.

\begin{figure}
  \centering
\subfloat[]{\label{fig:fc}\includegraphics[height=0.25\textwidth]{locallyconnected.eps}}                
\quad\quad\quad\quad\subfloat[]{\label{fig:lc}\includegraphics[width=0.188\textwidth]{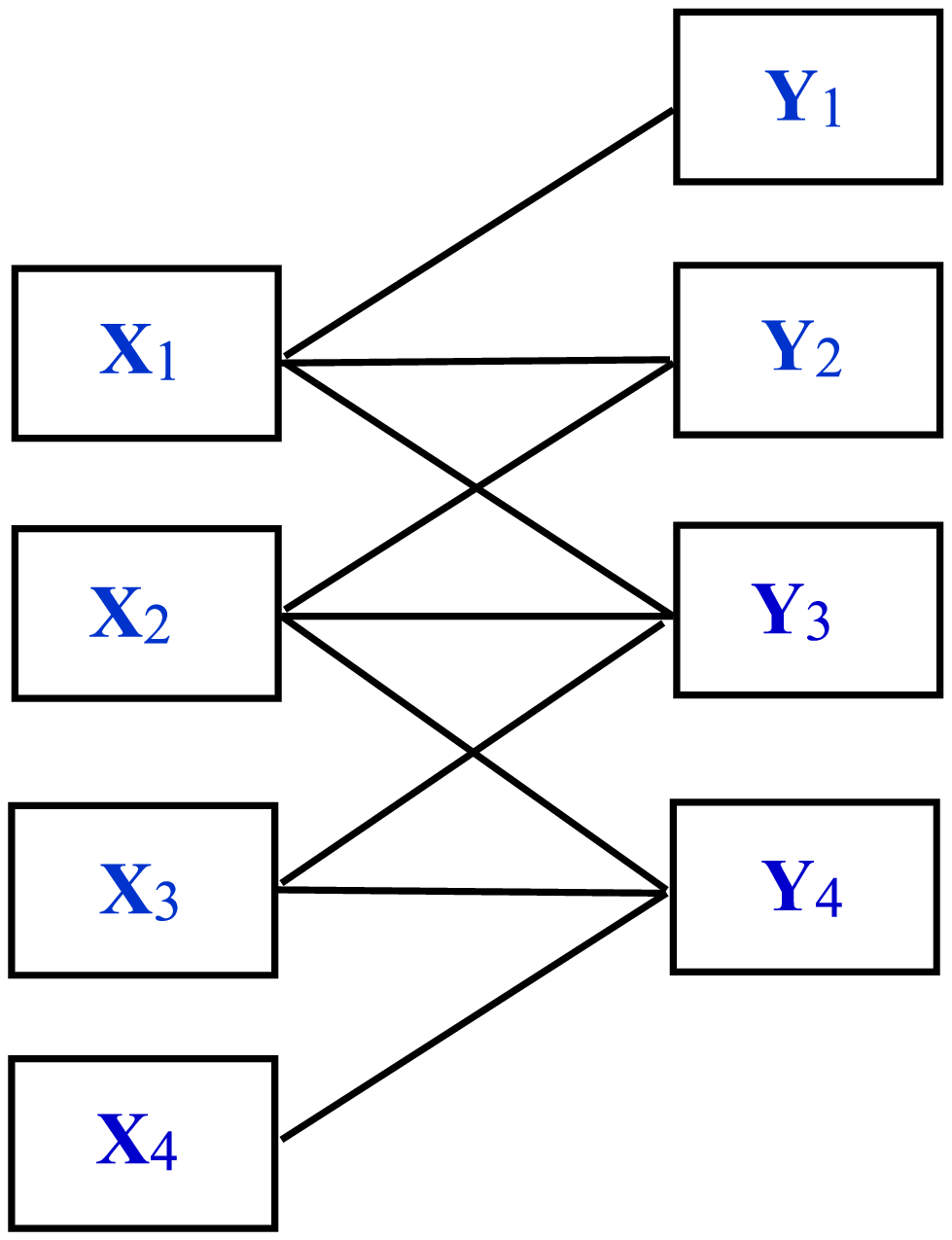}}
  \caption{Construction of the equivalent locally connected channel model with a number of users $K=5$ and connectivity parameter $L=2$. In $(a)$, the original model of~\eqref{eq:channel} is shown. In ($b$), the new model is shown.}
  \label{fig:locallyconnected}
\end{figure}

\subsection{Message Assignment Strategy}

A message assignment strategy is defined by a sequence of transmit sets $({\cal T}_{i,K}), i\in[K], K\in\{1,2,\ldots\}$. For each positive integer $K$ and $\forall i\in[K]$,  ${\cal T}_{i,K} \subseteq [K], |{\cal T}_{i,K}| \leq M$. We use message assignment strategies to define the transmit sets for a sequence of $K-$user channels. The $k^{\mathrm{th}}$ channel in the sequence has $k$ users, and the transmit sets for this channel are defined as follows. The transmit set of receiver $i$ in the $k^{\mathrm{th}}$ channel in the sequence is the transmit set ${\cal T}_{i,k}$ of the message assignment strategy. 

We call a message assignment strategy \emph{optimal} for a sequence of $K-$user fully connected channels, $K\in\{1,2,\ldots\}$,  if and only if there exists a sequence of coding schemes achieving $\tau(M)$ using the transmit sets defined by the message assignment strategy. A similar definition applies for locally connected channels.

\subsection{Local Cooperation}\label{sec:localcooperation}

We say that a message assignment strategy satisfies the local cooperation constraint, if and only if there exists a function $r(K)$ such that $r(K)=o(K)$, and
\begin{equation}\label{eq:localcooperation}
{\cal T}_{i,K} \subseteq \{i-r(K),i-r(K)+1,\ldots,i+r(K)\}, \forall i \in [K], \forall K\in {\bm Z}^{+}.
\end{equation}
Let $\tau^{\scriptscriptstyle \mathrm{(loc)}}(M)$ and $\tau_L^{\scriptscriptstyle \mathrm{(loc)}}(M)$ be the maximum achievable asymptotic per user DoF $\tau(M)$ and $\tau_L(M)$ under the additional local cooperation constraint, respectively.

\section{Informal Summary of Results}\label{sec:informalsummary}
In this paper, we study the benefit of CoMP transmission via the asymptotic per user DoF. In particular, we investigate whether the asymptotic per user DoF increases by allowing CoMP transmission, i.e., by allowing a cooperation order $M > 1$, and characterize this improvement, if it exists, as a function of $M$. 

The considered problem is completely described by two system parameters, namely, the channel connectivity and the cooperation order $M$. We attempt to find an answer by setting two design parameters: the message assignment strategy satisfying the cooperation order constraint, and the achievable scheme.  


We know from the results in~\cite{Madsen-Nosratinia} and~\cite{Cadambe-IA} that the asymptotic per user DoF of the fully connected channel is $\frac{1}{2}$ if each message is available only at the transmitter carrying the same index; it is straightforward to extend this result to the case where each message can be available at any single transmitter, i.e., the case where $M=1$, and hence, we know that $\tau(M=1)=\frac{1}{2}$. In~\cite{Annapureddy-ElGamal-Veeravalli-IT11}, it was shown that CoMP transmission achieves a DoF gain for the fully connected channel. However, this gain does not scale linearly with the number of users $K$. The considered message assignment strategy in~\cite{Annapureddy-ElGamal-Veeravalli-IT11} is the spiral strategy where each message is assigned to the transmitter carrying the same index as well as $M-1$ succeeding transmitters. We note that this strategy satisfies the local cooperation constraint defined in Section~\ref{sec:localcooperation}, and show in Section~\ref{sec:asymptoticpudof} that local cooperation cannot achieve an asymptotic per user DoF gain for the fully connected channel. More precisely, we show that
\begin{equation}
\tau^{(\textrm{loc})}(M)=\tau(1)=\frac{1}{2}, \forall M.
\end{equation}
Furthermore, we extend this negative conclusion in Section~\ref{sec:asymptoticpudof} to all message assignments that are restricted to assign each message to at most two transmitters, i.e.,
\begin{equation}
\tau(M=2)=\tau(M=1)=\frac{1}{2}.
\end{equation}
In general, we prove in Theorem~\ref{thm:tauouterbound} the following upper bound on the asymptotic per user DoF for any value of $M$,
\begin{equation}\label{eq:summarytaubound}
\tau(M) \leq \frac{M-1}{M}.
\end{equation}
We then show that the tightness of~\eqref{eq:summarytaubound} for the case where $M=2$ does not generalize. In particular, we prove in Theorem~\ref{thm:taufthree} the following tighter bound for the case where $M=3$,
\begin{equation}
\tau(M=3) \leq \frac{5}{8}.
\end{equation}
We summarize the results obtained for the fully connected channel model in Table~\ref{tab:fc}.
\begin{table}
\begin{center}
    \begin{tabular}{ | l | l | l | l | l |}
    \hline
     & $M=1$ & $M=2$ & $M=3$ & General Values of $M$ \\ \hline
    Local Cooperation & $\tau^{(\textrm{loc})}(M=1)=\frac{1}{2}$ & $\tau^{(\textrm{loc})}(M=2)=\frac{1}{2}$ & $\tau^{(\textrm{loc})}(M=3)=\frac{1}{2}$ & $\tau^{(\textrm{loc})}(M)=\frac{1}{2}$ \\ \hline
    General Cooperation & $\tau(M=1)=\frac{1}{2}$ & $\tau(M=2)=\frac{1}{2}$ & $\tau(M=3)\leq\frac{5}{8}$ & $\tau(M) \leq \frac{M-1}{M}$ \\ \hline
    \end{tabular}
\end{center}
\caption{Summary of results for fully connected channels.}
\label{tab:fc}
\end{table}

In Section~\ref{sec:lc}, we illustrate how the result introduced in~\cite{Lapidoth-Shamai-Wigger-ISIT07} shows that asymptotic per user DoF gains are possible for the special case of the locally connected channel where each transmitter is connected to the receiver with the same index as well as one succeeding receiver ($L=1$). In particular, the enabling message assignment strategy is the spiral strategy that satisfies the local cooperation constraint. We then introduce in Section~\ref{sec:dofgains} a simple zero-forcing transmit beamforming scheme that achieves a higher asymptotic per user DoF than that shown in~\cite{Lapidoth-Shamai-Wigger-ISIT07}. In particular, we show that
\begin{equation}
\tau_L(M) \geq \max\left\{\frac{1}{2},\frac{2M}{2M+L}\right\}, \forall M,L.
\end{equation}
Moreover, this lower bound is optimal if we restrict ourselves to the class of schemes that satisfy an interference avoidance constraint. We then provide an upper bound in Section~\ref{sec:lcupperbound} that completes the characterization of the asymptotic per user DoF for the case where $L=1$, i.e., showing that
\begin{equation}\label{eq:loneresult}
\tau_1(M)=\frac{2M}{2M+1}, \forall M.
\end{equation}
In particular, the optimal message assignment strategy for the case where $L=1$ satisfies a local cooperation constraint. We show in Section~\ref{sec:lcusefulmsgassignment} that local cooperation is optimal for all locally connected channels, thereby establishing that the negative result regarding local cooperation for the fully connected channel is due only to the assumption of full connectivity.

We note that~\eqref{eq:loneresult} implies that the asymptotic per user DoF for Wyner's asymmetric model is strictly greater than $\frac{1}{2}$ even for the case of \emph{no cooperation} (i.e., $M=1$). We show however in Section~\ref{sec:lcupperbound} that this is only the case for $L=1$, and does not hold for all other locally connected channels. We summarize the results obtained for the locally connected channel model in Table~\ref{tab:lc}.
\begin{table}
\begin{center}
    \begin{tabular}{ | l | l | l | l | l |}
    \hline
     & $M=1$ & $M=2$ & General Values of $M$ \\ \hline
    $L=1$ & $\tau_1(M=1)=\frac{2}{3}$ & $\tau_1(M=2)=\frac{4}{5}$ & $\tau_1(M)=\frac{2M}{2M+1}$ \\ \hline
    $L=2$ & $\tau_2(M=1)=\frac{1}{2}$ & $\tau_2(M=2)\geq\frac{2}{3}$ & $\tau_2(M) \geq \frac{M}{M+1}$ \\ \hline
    General Values of $L$ & $\tau_L(1)=
\begin{cases}
\frac{2}{3},\quad  &\text{if} \quad L=1,\\
\frac{1}{2},\quad &\text{if} \quad L \geq 2.
\end{cases}$ & $\tau_L(M=2)\geq\max\left\{\frac{1}{2},\frac{4}{4+L}\right\}$ & $\tau_L(M) \geq \max\left\{\frac{1}{2},\frac{2M}{2M+L}\right\}$ \\ \hline
    \end{tabular}
\end{center}
\caption{Summary of results for locally connected channels.}
\label{tab:lc}
\end{table}


\subsection{Proof Techniques}
In Section~\ref{sec:fcupperbound}, we restate from~\cite{Annapureddy-ElGamal-Veeravalli-IT11} a necessary condition on any point in the DoF region of the fully connected channel with a fixed message assignment. We then use this condition to derive Corollary~\ref{cor:dofouterbound} that implies directly all of the provided DoF upper bounds for the fully connected channel. Moreover, Corollary~\ref{cor:dofouterbound} can be used to shed insight on the open problem of determining whether $\tau(M)>\frac{1}{2}$ for $M\geq3$. In~\cite{ElGamal-Annapureddy-Veeravalli-CISS12}, we showed that if Corollary~\ref{cor:dofouterbound} is tight for all instances of the problem, then scalable DoF cooperation gains are achievable for $M \geq 3$, i.e., it would follow that $\tau(M)>\frac{1}{2}$ for $M\geq3$. 

It is obvious for locally connected channels that some message assignments are \emph{reducible}. For example, for the case where $M=1$, any assignment of a message $W_i$ to a transmitter that is not connected to the $i^{\mathrm{th}}$ receiver cannot achieve a positive rate for communication of that message. To prove DoF upper bounds for locally connected channels with CoMP transmission, we use a characterization of necessary conditions on \emph{irreducible message assignments}, as discussed in Section~\ref{sec:lcusefulmsgassignment}.

\section{Fully Connected Interference Channel}\label{sec:fc}

 In this section, we investigate whether $\tau(M) > \frac{1}{2}$ for $M > 1$, and message assignment strategies that may lead to a positive conclusion. 

\subsection{Prior Work}\label{sec:fcpriorwork}
We know from~\cite{Madsen-Nosratinia}, and~\cite{Cadambe-IA} that the per user DoF of a fully connected interference channel without cooperation is $\frac{1}{2}$, i.e., $\tau(1)=\frac{1}{2}$. In~\cite{Annapureddy-ElGamal-Veeravalli-IT11}, the following spiral message assignment strategy was considered for $M \geq 1$:
\vspace{5 mm}
${\cal T}_{i,K}=
\begin{cases}
\{i,i+1,\ldots,i+M-1\}, \quad &\forall i \in [K-(M-1)],\\
\{i,i+1,\ldots,K,1,2,\ldots,M-(K-i+1)\},\quad &\forall  i \in \{K-(M-2),K-(M-2)+1,\ldots,K\},
\end{cases}$

Using this message assignment strategy and an asymptotic interference alignment scheme, it was shown in \cite[Theorem $5$]{Annapureddy-ElGamal-Veeravalli-IT11} that

\begin{equation}
\eta(K,M) \geq \frac{K+M-1}{2}, \forall M \leq K < 10,
\end{equation}
and it was shown in \cite{Annapureddy-ElGamal-Veeravalli-IT11} that this lower bound is within one degree of freedom from the maximum achievable DoF using the spiral message assignment strategy. However, even if $\eta(K,M)=\frac{K+M-1}{2}$ for all values of $K$, the DoF gain due to CoMP transmission (beyond $\frac{K}{2}$) does not scale with the number of users $K$. Hence, the question of whether $\tau(M) > \frac{1}{2}$ for $M>1$ remains open. Here, we note that the spiral message assignment strategy satisfies the local cooperation constraint and in Section~\ref{sec:asymptoticpudof}, we generalize the negative conclusion of~\cite{Annapureddy-ElGamal-Veeravalli-IT11} to all message assignment strategies satisfying the local cooperation constraint.

\subsection{DoF Upper Bound}\label{sec:fcupperbound}

In order to characterize the DoF of the channel $\tau(M)$, we need to consider all possible strategies for message assignments satisfying the cooperation order constraint defined in~\eqref{eq:coop_order}. Through the following corollary of~\cite[Theorem $1$]{Annapureddy-ElGamal-Veeravalli-IT11}, we provide a way to upper bound the maximum achievable DoF for each such assignment, thereby, introducing a criterion for comparing different message assignments satisfying~\eqref{eq:coop_order} using the special cases where this bound holds tightly. We first restate the following result from~\cite{Annapureddy-ElGamal-Veeravalli-IT11}.

\begin{lem} [\cite{Annapureddy-ElGamal-Veeravalli-IT11}] \label{thm:dofouterbound}
Any point $(d_1,d_2,\ldots,d_K)$ in the DoF region of a $K-$user fully connected channel satisfies the inequalities:
\begin{equation}\label{eq:comp-dof-ob-region} 
\begin{split} 
 \sum_{k: {\cal T}_{k} \subseteq {\cal A} \text{ or } k \in {\cal B}} d_{k} \leq \max(|{\cal A}|,|{\cal B}|),  \forall {\cal A}, {\cal B} \subseteq [K].
\end{split} 
\end{equation}
\end{lem}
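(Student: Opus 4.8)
The plan is to prove \eqref{eq:comp-dof-ob-region} by a genie-aided converse that reconstructs the transmitted signals of the transmitters outside ${\cal A}$ and then inverts the channel from ${\cal A}$. Let ${\cal S}=\{k:{\cal T}_k\subseteq{\cal A}\}\cup{\cal B}$ be the index set appearing on the left-hand side, and write ${\cal S}_1=\{k:{\cal T}_k\subseteq{\cal A}\}$. I would first record two structural facts that drive the whole argument: (i) every message in ${\cal S}_1$ is carried exclusively by transmitters in ${\cal A}$, so $X_{\bar{\cal A}}^n$ does not depend on $W_{{\cal S}_1}$; and (ii) every message carried by some transmitter outside ${\cal A}$ has ${\cal T}_k\not\subseteq{\cal A}$, hence is either outside ${\cal S}$ (and is handed over by the genie) or lies in ${\cal B}$ (and is decoded at its own receiver).

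The construction is as follows. Choose $\max(|{\cal A}|,|{\cal B}|)\le K$ receivers by taking all of ${\cal B}$ together with an arbitrary set ${\cal E}$ disjoint from ${\cal B}$ of size $\max(|{\cal A}|,|{\cal B}|)-|{\cal B}|$, so that ${\cal E}=\phi$ whenever $|{\cal B}|\ge|{\cal A}|$. Hand a genie-aided decoder the messages $W_{\bar{\cal S}}$ and the signals $Y_{{\cal B}\cup{\cal E}}^n$, and show it can recover all of $W_{\cal S}$: it first decodes $W_{\cal B}$ from $Y_{\cal B}^n$ (these are reliably decoded by their own receivers in the actual code); by fact (ii) the known messages $W_{\bar{\cal S}}\cup W_{\cal B}$ determine every $X_j^n$ with $j\notin{\cal A}$, so their contribution can be reconstructed and subtracted from $Y_{{\cal B}\cup{\cal E}}^n$, leaving ${\bm H}_{{\cal B}\cup{\cal E},{\cal A}}X_{\cal A}^n$ plus noise. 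Since $|{\cal B}\cup{\cal E}|\ge|{\cal A}|$ and the coefficients are drawn from a continuous distribution, ${\bm H}_{{\cal B}\cup{\cal E},{\cal A}}$ has full column rank for almost all realizations, so $X_{\cal A}^n$ can be inverted out (noiselessly in the DoF sense) and fed to the original decoders of the messages in ${\cal S}_1$, which by fact (i) depend on $X_{\cal A}^n$ alone.

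The DoF bound then follows from a routine entropy computation. Using independence and Fano's inequality for the genie-aided decoder, $n\sum_{k\in{\cal S}}R_k=H(W_{\cal S}\mid W_{\bar{\cal S}})\le I(W_{\cal S};Y_{{\cal B}\cup{\cal E}}^n\mid W_{\bar{\cal S}})+n\epsilon_n\le h(Y_{{\cal B}\cup{\cal E}}^n)+n\epsilon_n$, and bounding the differential entropy of each received coordinate by $\tfrac12\log P+O(1)$ gives $\sum_{k\in{\cal S}}R_k\le\max(|{\cal A}|,|{\cal B}|)\,(\tfrac12\log P+O(1))+\epsilon_n$; dividing by $\log P$ and letting $P\to\infty$ yields \eqref{eq:comp-dof-ob-region}.

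I expect the crux to be obtaining the $\max$ rather than the naive sum $|{\cal A}|+|{\cal B}|$ that results from simply handing the decoder both $X_{\cal A}^n$ and $Y_{\cal B}^n$. The point that makes $\max$ achievable is fact (ii): the transmitted signals outside ${\cal A}$ are reconstructable for free, so only a single observation set of dimension $\max(|{\cal A}|,|{\cal B}|)$ is ever charged, and the padding antennas ${\cal E}$ are needed precisely to guarantee left-invertibility of ${\bm H}_{{\cal B}\cup{\cal E},{\cal A}}$ when $|{\cal A}|>|{\cal B}|$. A secondary technical point is that transmit signals are not invertible functions of the messages, so I would take care to recover $W_{{\cal S}_1}$ by reconstructing $X_{\cal A}^n$ and re-running the original decoders, rather than trying to separate individual message contributions inside a single transmitted signal.
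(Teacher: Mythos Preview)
Your argument is correct and is precisely the genie-aided converse that underlies \cite[Theorem~1]{Annapureddy-ElGamal-Veeravalli-IT11}, which the paper simply cites rather than reproves; the padding set ${\cal E}$ to make ${\bm H}_{{\cal B}\cup{\cal E},{\cal A}}$ left-invertible is exactly the device that yields $\max(|{\cal A}|,|{\cal B}|)$ instead of $|{\cal A}|+|{\cal B}|$. Two cosmetic points: the residual conditional entropy after reconstruction is $n\cdot o(\log P)$ rather than $n\epsilon_n$ (since $\hat Y_k$ differs from $Y_k$ by bounded-variance noise, Fano does not apply directly---you bound $h(Y_{{\cal S}_1\setminus{\cal B}}^n\mid W_{\bar{\cal S}\cup{\cal B}},Y_{{\cal B}\cup{\cal E}}^n)$ by the entropy of that bounded noise, exactly as in the paper's proof of Lemma~\ref{lem:dofouterbound}); and the decoders for $W_{{\cal S}_1}$ act on $Y_k$, which depends on all transmit signals, not on $X_{\cal A}$ alone---but since $X_{\bar{\cal A}}$ is known exactly this is only a phrasing issue.
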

\begin{proof}
The proof is availble in~\cite[Theorem $1$]{Annapureddy-ElGamal-Veeravalli-IT11}.
\end{proof}
We now provide the following corollary for bounding the DoF number $\eta$ of a $K-$user fully connected channel with a fixed message assignment. Recall that for a set of transmitter indices ${\cal S}$, the set $C_{\cal S}$ is the set of messages carried by transmitters in ${\cal S}$. 

\begin{cor}\label{cor:dofouterbound}
For any $m,\bar{m} : m+\bar{m} \geq K$, if there exists a set ${\cal S}$ of indices for transmitters carrying no more than $m$ messages, and $|{\cal S}|=\bar{m}$, then $\eta \leq m$, or more precisely, 
\begin{equation}
\eta \leq \min_{{\cal S} \subseteq [K]}\max (|C_{\cal S}|, K-|{\cal S}|).
\end{equation} 
\end{cor}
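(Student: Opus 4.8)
The plan is to derive the corollary directly from Lemma~\ref{thm:dofouterbound} by a single, well-chosen instantiation of the sets $\mathcal{A}$ and $\mathcal{B}$. The key observation is that the set $C_{\mathcal{S}}$ of messages \emph{carried} by transmitters in $\mathcal{S}$ is exactly the set of messages whose transmit sets are \emph{not} entirely contained in $\bar{\mathcal{S}}$; equivalently, if $\mathcal{T}_k \subseteq \bar{\mathcal{S}}$ then $k \notin C_{\mathcal{S}}$. So I would fix an arbitrary $\mathcal{S} \subseteq [K]$ and apply~\eqref{eq:comp-dof-ob-region} with $\mathcal{A} = \bar{\mathcal{S}}$ and $\mathcal{B} = C_{\mathcal{S}}$.

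First I would identify which indices $k$ appear in the sum on the left-hand side of~\eqref{eq:comp-dof-ob-region} under this choice. The condition is $\mathcal{T}_k \subseteq \mathcal{A}$ or $k \in \mathcal{B}$, i.e.\ $\mathcal{T}_k \subseteq \bar{\mathcal{S}}$ or $k \in C_{\mathcal{S}}$. If $k \notin C_{\mathcal{S}}$, then by definition no transmitter in $\mathcal{S}$ carries $W_k$, which means $\mathcal{T}_k \cap \mathcal{S} = \phi$, i.e.\ $\mathcal{T}_k \subseteq \bar{\mathcal{S}}$; hence the first clause fires. If instead $k \in C_{\mathcal{S}}$, the second clause fires. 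Therefore \emph{every} index $k \in [K]$ satisfies the selection condition, so the left-hand side is $\sum_{k \in [K]} d_k = \sum_i d_i$. The right-hand side is $\max(|\mathcal{A}|,|\mathcal{B}|) = \max(|\bar{\mathcal{S}}|, |C_{\mathcal{S}}|) = \max(K - |\mathcal{S}|, |C_{\mathcal{S}}|)$. Since this holds for every achievable DoF tuple and $\eta$ is the maximum of $\sum_i d_i$ over the DoF region, I obtain
\[
\eta \leq \max\bigl(|C_{\mathcal{S}}|,\, K - |\mathcal{S}|\bigr).
\]
Because $\mathcal{S}$ was arbitrary, I may minimize the right-hand side over all $\mathcal{S} \subseteq [K]$, yielding the displayed inequality in the corollary.

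For the first, less precise assertion, I would specialize the above to the hypothesized $\mathcal{S}$: if $|C_{\mathcal{S}}| \leq m$ and $|\mathcal{S}| = \bar{m}$ with $m + \bar{m} \geq K$, then $K - |\mathcal{S}| = K - \bar{m} \leq m$, so both arguments of the $\max$ are at most $m$, giving $\eta \leq m$ immediately. This is just the bound evaluated at the single set $\mathcal{S}$, weakened by the cardinality hypotheses.

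The only real subtlety — and what I expect to be the crux of the write-up rather than any genuine obstacle — is the set-theoretic equivalence $k \notin C_{\mathcal{S}} \iff \mathcal{T}_k \subseteq \bar{\mathcal{S}}$, which must be stated carefully since it is precisely what guarantees the selection condition captures all of $[K]$ and collapses the left-hand side to the full DoF sum. Everything else is a direct substitution into Lemma~\ref{thm:dofouterbound} followed by taking the maximum over the region and the minimum over $\mathcal{S}$; no symbol extensions, alignment arguments, or continuity-of-channel hypotheses are needed beyond what Lemma~\ref{thm:dofouterbound} already provides.
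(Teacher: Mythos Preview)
Your proposal is correct and matches the paper's own proof: the paper likewise applies Lemma~\ref{thm:dofouterbound} with $\mathcal{A}=\bar{\mathcal{S}}$ and $\mathcal{B}=C_{\mathcal{S}}$ for each $\mathcal{S}\subseteq[K]$. Your write-up simply spells out in more detail the set-theoretic equivalence ensuring the left-hand side captures all of $[K]$, and unpacks the first assertion about $m,\bar{m}$; both are straightforward once the choice of $\mathcal{A},\mathcal{B}$ is made.
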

\begin{proof}
For each subset of transmitter indices ${\cal S} \subseteq [K]$, we apply Lemma~\ref{thm:dofouterbound} with the sets ${\cal A}$ and ${\cal B}$ assigned as follows, ${\cal A}=\bar{\cal S}$ and ${\cal B}=C_{\cal S}$. 
\end{proof}

We refer the reader to Figure~\ref{fig:dofouterbound} for an example illustration of Corollary~\ref{cor:dofouterbound}.
\begin{figure}
\centering
\includegraphics[width=0.3\columnwidth]{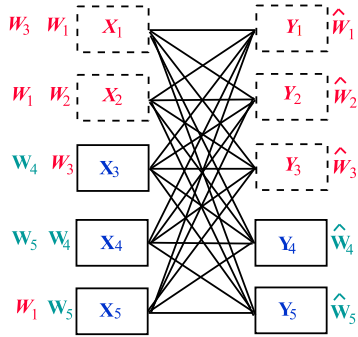}
\caption{Example application of Corollary~\ref{cor:dofouterbound}, with ${\cal S}=\{1,2\}$ and $C_{\cal S}=\{1,2,3\}$. Transmit signals with indices in ${\cal S}$, and messages as well as receive signals with indices in $C_{\cal S}$ are shown in tilted red font and dashed boxes. The DoF $\eta \leq |C_{\cal S}|=K-|{\cal S}|=3$.}
\label{fig:dofouterbound}
\end{figure}

\subsection{Asymptotic DoF Cooperation Gain}\label{sec:asymptoticpudof}
We now use Corollary~\ref{cor:dofouterbound} to prove upper bounds on the asymptotic per user DoF $\tau(M)$. 

In an attempt to reduce the complexity of the problem of finding an optimal message assignment strategy, we begin by considering message assignment strategies satisfying the local cooperation constraint defined in Section~\ref{sec:localcooperation}. We now show that a scalable cooperation DoF gain cannot be achieved using local cooperation.
  
\begin{thm}\label{thm:localcooperation}
Any message assignment strategy satisfying the local cooperation constraint of~\eqref{eq:localcooperation} cannot be used to achieve an asymptotic per user DoF greater than that achieved without cooperation. More precisely,
\begin{equation}
\tau^{(\scriptscriptstyle \mathrm{loc})}(M) = \frac{1}{2}, \text{ for all } M.
\end{equation}
\end{thm}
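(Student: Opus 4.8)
The plan is to use Corollary~\ref{cor:dofouterbound} to construct, for any message assignment strategy satisfying the local cooperation constraint, a set $\cal S$ of transmitter indices whose carried message set $C_{\cal S}$ is not much larger than $|{\cal S}|$, forcing the per user DoF down to $\frac{1}{2}$ asymptotically. Since $\tau^{(\mathrm{loc})}(M) \geq \tau(1) = \frac{1}{2}$ is immediate (we can always ignore cooperation), the real content is the matching upper bound $\tau^{(\mathrm{loc})}(M) \leq \frac{1}{2}$.

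The key idea is to exploit the locality of the transmit sets. Under local cooperation there is a function $r(K) = o(K)$ such that ${\cal T}_{i,K} \subseteq \{i-r(K),\ldots,i+r(K)\}$ for every $i$. First I would partition $[K]$ into consecutive blocks and choose $\cal S$ to be a union of alternating blocks — roughly, take $\cal S$ to consist of the transmitters in every other block of a suitable width. The crucial observation is that a message $W_k$ is counted in $C_{\cal S}$ only if some transmitter in ${\cal T}_{k,K}$ lies in $\cal S$; because each transmit set spans an interval of width at most $2r(K)+1$, the messages whose transmit sets touch $\cal S$ are confined to an interval-thickening of $\cal S$ by $r(K)$ on each side. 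If I pick $\cal S$ to be the first half of $[K]$ (say ${\cal S} = \{1,\ldots,\lceil K/2\rceil\}$), then every message indexed by $k \leq \lceil K/2\rceil - r(K)$ has its entire transmit set inside $\cal S$ (contributing to the constraint via $k \in {\cal A}$ is not what we want; rather, these are the $k$ with ${\cal T}_k \subseteq {\cal A} = \bar{\cal S}$), while the messages with transmit sets intersecting $\cal S$ are those with $k \geq 1 - r(K)$ up to $\lceil K/2 \rceil + r(K)$. I would set this up so that $|C_{\cal S}|$ and $K - |{\cal S}|$ are both close to $K/2$, with the discrepancy governed by $r(K)$.

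Concretely, taking ${\cal S} = \{1,2,\ldots,\lceil K/2 \rceil\}$ gives $K - |{\cal S}| = \lfloor K/2 \rfloor$, and $C_{\cal S} \subseteq \{1,\ldots,\lceil K/2\rceil + r(K)\}$ since any message $W_k$ carried by a transmitter $j \in {\cal S}$ must satisfy $k \geq j - r(K) \geq 1 - r(K)$ and $k \leq j + r(K) \leq \lceil K/2\rceil + r(K)$. Hence $|C_{\cal S}| \leq \lceil K/2\rceil + r(K)$. Applying Corollary~\ref{cor:dofouterbound} yields $\eta(K,M) \leq \max(|C_{\cal S}|, K-|{\cal S}|) = \lceil K/2\rceil + r(K)$. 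Dividing by $K$ and taking $K \to \infty$, the $r(K)/K \to 0$ term vanishes because $r(K) = o(K)$, giving $\tau^{(\mathrm{loc})}(M) \leq \frac{1}{2}$. Combined with the trivial lower bound this proves the theorem.

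The main obstacle I anticipate is not the algebra but the bookkeeping in verifying the condition $m + \bar m \geq K$ required by Corollary~\ref{cor:dofouterbound} and correctly identifying $C_{\cal S}$. One must be careful that $C_{\cal S}$, defined as $\{k : {\cal T}_{k,K} \cap {\cal S} \neq \phi\}$, is bounded as claimed using only the local cooperation containment and not some stronger structural assumption; the argument hinges entirely on each transmit set being contained in a window of width $o(K)$, so the single set $\cal S$ equal to a half-interval suffices and no delicate alternating-block construction is actually needed. The only subtlety is ensuring that the $o(K)$ slack does not accumulate — but since $r(K)$ is a fixed function of $K$ and we take a single limit in $K$, this is clean.
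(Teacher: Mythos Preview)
Your proposal is correct and is essentially identical to the paper's own proof: the paper also applies Corollary~\ref{cor:dofouterbound} with the half-interval ${\cal S}=\{1,\ldots,\lceil K/2\rceil\}$, observes that $C_{\cal S}\subseteq\{1,\ldots,\lceil K/2\rceil+r(K)\}$ by the local cooperation constraint, and concludes $\eta(K,M)\leq \lceil K/2\rceil+r(K)$, with the lower bound from~\cite{Cadambe-IA}. Your additional discussion of alternating blocks and bookkeeping is superfluous but harmless.
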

\begin{proof}
Fix $M \in {\bm Z}^{+}$. For any value of $K\in {\bm Z}^{+}$, we use Corollary~\ref{cor:dofouterbound} with the set ${\cal S}=\{1,2,\ldots,\left \lceil \frac{K}{2} \right \rceil \}$. Note that $C_{\cal S} \subseteq \{1,2,\ldots, \left \lceil \frac{K}{2} \right \rceil + r(K)\}$, and hence, it follows that $\eta(K,M) \leq \left \lceil \frac{K}{2} \right \rceil + r(K)$. Finally, $\tau(M) = \lim_{K \rightarrow \infty} \frac{\eta(K,M)}{K} \leq \frac{1}{2}$. The lower bound follows from~\cite{Cadambe-IA} without cooperation.
\end{proof}

We now investigate if it is possible for the cooperation gain to scale linearly with $K$ for fixed $M$.  It was shown in Theorem~\ref{thm:localcooperation} that such a gain is not possible for message assignment strategies that satisfy the local cooperation constraint. Here, we only impose the cooperation order constraint in~\eqref{eq:coop_order} and prove in Theorem~\ref{thm:tauouterbound} an upper bound on $\tau(M)$ that is tight enough for finding $\tau(2)$. 

\begin{thm}\label{thm:tauouterbound}
For any cooperation order constraint $M \geq 2$, the following upper bound holds for the asymptotic per user DoF,
\begin{equation}
\tau(M) \leq \frac{M-1}{M}.
\end{equation}
\end{thm}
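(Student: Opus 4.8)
The plan is to derive the bound from Corollary~\ref{cor:dofouterbound}. For this it suffices to exhibit, for every message assignment obeying~\eqref{eq:coop_order} and every $K$, a subcollection ${\cal F}$ of at least $\lceil K/M\rceil$ messages whose transmit sets have a union $U=\bigcup_{i\in{\cal F}}{\cal T}_i$ of size $|U|\leq \frac{M-1}{M}K+o(K)$. Indeed, setting ${\cal S}=[K]\setminus U$, every message in ${\cal F}$ lies entirely outside ${\cal S}$, so $C_{\cal S}\subseteq[K]\setminus{\cal F}$ and $|C_{\cal S}|\leq K-\lceil K/M\rceil=\frac{M-1}{M}K$, while $K-|{\cal S}|=|U|\leq\frac{M-1}{M}K+o(K)$. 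Corollary~\ref{cor:dofouterbound} then gives $\eta(K,M)\leq\max(|C_{\cal S}|,K-|{\cal S}|)\leq\frac{M-1}{M}K+o(K)$, and dividing by $K$ and letting $K\to\infty$ yields $\tau(M)\leq\frac{M-1}{M}$.

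To build such an ${\cal F}$ I would view the transmit sets ${\cal T}_1,\ldots,{\cal T}_K$ as the hyperedges (each of size at most $M$) of a hypergraph on the $K$ transmitters. The key observation is that inside any connected component the edges can be listed $f_1,f_2,\ldots$ so that every $f_j$ with $j\geq 2$ shares at least one transmitter with $f_1\cup\cdots\cup f_{j-1}$ (a breadth-first exploration of the component); such an $f_j$ contributes at most $M-1$ new transmitters, so any $\ell$ edges taken this way occupy at most $(M-1)\ell+1$ transmitters. I would therefore assemble $\lceil K/M\rceil$ messages by drawing edges from the components in decreasing order of their edge counts, exhausting each component through such an ordering before passing to the next. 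If the chosen edges come from $p$ distinct components, the union obeys $|U|\leq(M-1)\lceil K/M\rceil+p=\frac{M-1}{M}K+p+O(1)$, so the whole argument reduces to showing that $p=o(K)$ components already contain $\lceil K/M\rceil$ of the $K$ edges.

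The step I expect to be the main obstacle is exactly this control of the overhead $p$: a priori the edges could be scattered over $\Theta(K)$ tiny components, which would make the estimate vacuous. This cannot happen, because of the transmitter budget. Messages with $|{\cal T}_i|=1$ and messages sharing an identical transmit set are essentially free (they add at most one, respectively zero, new transmitters), so if there are already $\lceil K/M\rceil$ of them the claim is immediate; otherwise they may be set aside. For the remaining messages every edge has size at least $2$, and since the components are vertex-disjoint and live on only $K$ transmitters, a component that supplies few edges also occupies correspondingly few transmitters, which caps how many such components can coexist. Making this quantitative---bounding, uniformly over all assignments, the number of components needed to accumulate $K/M$ edges by $o(K)$---is the crux; note that naive averaging over a uniformly random ${\cal S}$ of size $\lceil K/M\rceil$ is provably too weak, as it only yields the bound $1-(1-1/M)^M$, which exceeds $\frac{M-1}{M}$ for small $M$ (e.g.\ $M\in\{2,3\}$), so the grouping of edges into dense components is essential. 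Once $p=o(K)$ is established the proof is complete.
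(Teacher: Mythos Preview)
Your reduction to Corollary~\ref{cor:dofouterbound} via a family ${\cal F}$ of $\lceil K/M\rceil$ messages with small union $U$ is correct and is in fact equivalent to the paper's formulation: the paper produces ${\cal S}$ with $|{\cal S}|\approx K/M$ and $|C_{\cal S}|\leq K-|{\cal S}|$, and taking ${\cal F}=[K]\setminus C_{\cal S}$ gives exactly your ${\cal F}$ with $U\subseteq\bar{\cal S}$. So the target is right; the issue is how you propose to reach it.

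The gap is your ``crux'': the claim that $p=o(K)$ components suffice is false, even after discarding singletons and duplicates. Take $M=2$ and partition the $K$ transmitters into $K/3$ disjoint triples; on each triple place three messages, one on each of the three pairs. There are no singleton or repeated transmit sets, every component has exactly three edges, and to collect $\lceil K/2\rceil$ edges you must touch at least $K/6=\Theta(K)$ components. Your bound $|U|\leq (M-1)\lceil K/M\rceil+p$ then gives only $|U|\leq K/2+K/6$, which is too weak, even though the true optimum here is $|U|=K/2$ (take $K/6$ whole triangles). The slack is entirely in the estimate $v_C\leq (M-1)e_C+1$, not in the choice of components, so bounding $p$ is the wrong target; what one would need instead is control of $\sum_{\text{used}} v_C$ directly, which your outline does not provide.

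The paper avoids this entirely by running the greedy on the dual side: it picks \emph{transmitters} one at a time and controls the number of \emph{messages} they carry. The point is that the pigeonhole step works globally there. If ${\cal A}$ has $n$ transmitters with $|C_{\cal A}|\leq (M-1)n+1$, then every message outside $C_{\cal A}$ has its entire transmit set in $[K]\setminus{\cal A}$, so the $\leq M(K-|C_{\cal A}|)$ instances of such messages are spread over $K-n$ transmitters, and one of them carries at most $M-1$ new messages (Lemmas~\ref{lem:basis}--\ref{lem:genineq}). No component structure is needed, and the $+1$ appears exactly once rather than once per component. If you want to keep your message-side viewpoint, the cleanest route is simply to dualize this argument rather than to try to bound $p$.
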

\begin{proof}
For any value of $M$ and $K$, we show that $\eta(K,M) \leq \frac{K(M-1)}{M} + o(K)$. For every value of $K$ such that $\frac{K-1}{M}$ is an integer, we show that $\eta(K,M) \leq \frac{K(M-1)+1}{M}$. When $\frac{K-1}{M}$ is not an integer, we add $x=o(K)$ extra users such that $\frac{K+x-1}{M}$ is an integer, and bound the DoF as follows,
\begin{eqnarray}
\eta(K,M) &\leq& \eta(K+x,M)
\\&\leq& \frac{(K+x)(M-1)+1}{M}
\\&=&\frac{K(M-1)}{M} + o(K)
\end{eqnarray}
It then suffice to consider the case where $\frac{K-1}{M}$ is an integer. The idea is to show that for any assignment of messages satisfying the cooperation order constraint, there exists a set of indices ${\cal S} \subseteq [K]$ for $\frac{K-1}{M}$ transmitters that do not carry more than $K-\frac{K-1}{M}$ messages, and then the DoF upper bound follows by applying Corollary~\ref{cor:dofouterbound}. More precisely, it suffices to show that the following holds,
\begin{equation}\label{eq:dofouterboundproof}
\forall K: \frac{K-1}{M}\in{\bm Z}^+, \exists {\cal S} \subseteq [K]: |{\cal S}|=\frac{K-1}{M}, |{\cal C}_{\cal S}|=\frac{K(M-1)+1}{M}=K-|{\cal S}|.
\end{equation}
We first illustrate simple examples that demonstrate the validity of~\eqref{eq:dofouterboundproof}. Consider the case where $K=3$, $M=2$, we need to show in this case that there exists a transmitter that does not carry more than two messages, which follows by the pigeonhole principle since each message can only be available at a maximum of two transmitters. Now, consider the slightly more complex example of $K=5$, $M=2$, we need to show in this case that there exists a set of two transmitters that do not carry more than three messages. We know that there is a transmitter carrying at most two messages, and we select this transmitter as the first element of the desired set. Without loss of generality, let the two messages available at the selected transmitter be $W_1$ and $W_2$. Now, we need to find another transmitter that carries at most one message among the messages in the set $\{W_3,W_4,W_5\}$. Since each of these three messages can be available at a maximum of two transmitters, and we have four transmitters to choose from, one of these transmitters has to carry at most one of these messages. By adding the transmitter satisfying this condition as the second element of the set, we obtain a set of two transmitters carrying no more than three messages, and~\eqref{eq:dofouterboundproof} holds. 

We extend the argument used in the above examples through Lemmas~\ref{lem:basis} and~\ref{lem:inductionstep} in the Appendix. We know by induction using these lemmas that~\eqref{eq:dofouterboundproof} holds, and the theorem statement follows. 

%
%
\end{proof}

Together with the achievability result in~\cite{Cadambe-IA}, the statement in Theorem~\ref{thm:tauouterbound} implies the following corollary.

\begin{cor}\label{cor:tauftwo}
For any message assignment strategy such that each message is available at a maximum of two transmitters, the asymptotic per user DoF is the same as that achieved without cooperation. More precisely,
\begin{equation}
\tau(2) = \frac{1}{2}.
\end{equation}
\end{cor}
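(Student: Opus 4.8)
The plan is to sandwich $\tau(2)$ between matching upper and lower bounds, both of which are already essentially in hand. For the upper bound I would simply specialize Theorem~\ref{thm:tauouterbound} to the value $M=2$, which immediately gives $\tau(2) \leq \frac{M-1}{M} = \frac{2-1}{2} = \frac{1}{2}$. No additional work is needed here, since Theorem~\ref{thm:tauouterbound} was established for every $M \geq 2$ and the case $M=2$ is the tightest instance of that bound.

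For the lower bound, the key observation I would exploit is that the cooperation order constraint in~\eqref{eq:coop_order} is defined as a \emph{maximum}, $M = \max_i |{\cal T}_i|$, rather than an exact cardinality requirement. Hence any message assignment strategy that is admissible under the no-cooperation constraint $M=1$ (each message placed at a single transmitter) remains admissible when $M=2$, simply by using transmit sets of size one. Consequently any coding scheme achieving a given $\eta$ for $M=1$ also achieves that $\eta$ under the weaker $M=2$ constraint, so that $\eta(K,2) \geq \eta(K,1)$ for every $K$. Dividing by $K$ and passing to the limit yields $\tau(2) \geq \tau(1)$, and invoking the interference alignment achievability result of~\cite{Cadambe-IA} (recall from Section~\ref{sec:fcpriorwork} that $\tau(1) = \frac{1}{2}$) gives $\tau(2) \geq \frac{1}{2}$.

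Combining the two bounds produces $\frac{1}{2} \leq \tau(2) \leq \frac{1}{2}$, and therefore $\tau(2) = \frac{1}{2}$. I do not expect any genuine obstacle in this argument: all of the combinatorial difficulty has already been absorbed into the proof of Theorem~\ref{thm:tauouterbound}, specifically the inductive construction of the set ${\cal S}$ certifying~\eqref{eq:dofouterboundproof}. The only point warranting a moment's care is verifying that the lower bound transfers correctly, i.e.\ that permitting cooperation can never reduce the achievable DoF; this is immediate once one notes that $M$ is a ceiling on the transmit set sizes, so the $M=1$ regime is a special case of the $M=2$ regime and its achievability carries over verbatim.
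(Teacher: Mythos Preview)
Your proposal is correct and follows essentially the same approach as the paper: the upper bound is the $M=2$ instance of Theorem~\ref{thm:tauouterbound}, and the lower bound is the interference-alignment achievability of~\cite{Cadambe-IA}. Your explicit monotonicity argument ($\tau(2) \geq \tau(1)$ because the $M=1$ transmit sets remain admissible under $M=2$) is just a slightly more detailed justification of what the paper states in a single sentence.
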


The characterization of $\tau(M)$ for values of $M > 2$ remains an open question, as Theorem~\ref{thm:tauouterbound} is only an upper bound. Moreover, the following result shows that the upper bound in Theorem~\ref{thm:tauouterbound} is loose for $M=3$. 

\begin{thm}\label{thm:taufthree}
For any message assignment strategy such that each message is available at a maximum of three transmitters, the following bound holds for the asymptotic per user DoF,
\begin{equation}\label{eq:taufthree}
\tau(3) \leq \frac{5}{8}.
\end{equation}
\end{thm}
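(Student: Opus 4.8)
The plan is to deduce the bound from Corollary~\ref{cor:dofouterbound}, following the template of Theorem~\ref{thm:tauouterbound}. Concretely, it suffices to show that for every message assignment with $M=3$ there is a set of transmitters ${\cal S} \subseteq [K]$ with $|{\cal S}| \approx \frac{3K}{8}$ that collectively carry at most $\frac{5K}{8}$ messages, i.e.\ $|C_{\cal S}| \le \frac{5K}{8}+o(K)$. Given such a set, Corollary~\ref{cor:dofouterbound} yields $\eta \le \max(|C_{\cal S}|, K-|{\cal S}|) \le \frac{5K}{8}+o(K)$, and dividing by $K$ and letting $K\to\infty$ gives $\tau(3)\le\frac{5}{8}$. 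This is the natural strengthening of the combinatorial target \eqref{eq:dofouterboundproof}: whereas the $\frac{M-1}{M}$ bound only needed a message-lean transmitter set of relative size $\frac{1}{3}$, here I would need to push the size up to $\frac{3}{8}$ while keeping the carried-message count proportionally small.

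For the construction I would refine the counting in Theorem~\ref{thm:tauouterbound}. Using the incidence budget $\sum_{i} |{\cal T}_i| \le 3K$ implied by \eqref{eq:coop_order}, I would grow ${\cal S}$ one transmitter at a time, each time adding a transmitter incident to the fewest messages not yet in $C_{\cal S}$, and track the pair $(|{\cal S}|, |C_{\cal S}|)$. The target is an invariant guaranteeing that each added transmitter inflates $C_{\cal S}$ by at most $\frac{5}{3}$ on average, so that $|C_{\cal S}| \le \frac{5}{3}|{\cal S}| = \frac{5}{3}\cdot\frac{3K}{8} = \frac{5K}{8}$ once $|{\cal S}| = \frac{3K}{8}$. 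The refinement over Theorem~\ref{thm:tauouterbound} would be to classify the still-uncovered messages by how many of their (at most three) copies already sit inside ${\cal S}$, since a message with two copies already inside is ``cheap'' to finish; a correspondingly graded pigeonhole/induction should replace the inductive lemmas used there.

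The hard part, and the reason this is genuinely more delicate than Theorem~\ref{thm:tauouterbound}, is that a single greedy cut cannot by itself reach $\frac{5}{8}$. If the assignment spreads every message over three transmitters with little overlap (a $3$-regular, expander-like incidence pattern), then \emph{every} set of $\frac{3K}{8}$ transmitters already touches about $\big(1-(\tfrac{5}{8})^3\big)K \approx 0.76\,K$ messages, strictly more than $\frac{5K}{8}$; the greedy recursion only obeys $|C_{\cal S}|$ growing like the solution of $\frac{d|C_{\cal S}|}{d|{\cal S}|} \approx \frac{3(K-|C_{\cal S}|)}{K-|{\cal S}|}$, which reproduces the weaker $\frac{2}{3}$-type balance rather than $\frac{5}{8}$. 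Thus the extra gain must be extracted from the \emph{global} incidence structure, not from one cut: I would expect to either argue that such maximally spread assignments are suboptimal and may be reshaped to concentrate messages without losing DoF, or to run an inductive argument that peels off bounded sub-configurations on which the cut bound is provably below $\frac{2}{3}$ and aggregates them, possibly by summing several instances of the more general Lemma~\ref{thm:dofouterbound} over a well-chosen family of set pairs $({\cal A},{\cal B})$. Pinning down that structure, and proving it is forced in every $M=3$ assignment, is where I expect the real difficulty to lie.
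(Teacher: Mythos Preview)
Your overall plan is exactly right and matches the paper's: use Corollary~\ref{cor:dofouterbound} with a single set ${\cal S}$ of size about $\tfrac{3K}{8}$ for which $|C_{\cal S}|\le \tfrac{5K}{8}+o(K)$, built greedily one transmitter at a time. The gap is in your diagnosis of why this is hard. Your ODE heuristic $\tfrac{d|C_{\cal S}|}{d|{\cal S}|}\approx \tfrac{3(K-|C_{\cal S}|)}{K-|{\cal S}|}$ and the expander calculation both track the \emph{average} number of new messages a fresh transmitter introduces; the greedy step takes the \emph{minimum}, which is bounded by the integer floor of that average. This floor effect is what the paper exploits, and it is strong enough that a single greedy cut does reach $\tfrac58$; no restructuring of assignments and no averaging over a family of $({\cal A},{\cal B})$ pairs is needed.

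Concretely, the paper runs the greedy in two phases. Phase~1 reuses Lemmas~\ref{lem:basis} and~\ref{lem:inductionstep} verbatim to maintain $|C_{\cal S}|\le 2|{\cal S}|+1$ up to $|{\cal S}|=\tfrac{K+1}{4}$; this is just the $M=3$ instance of the argument behind Theorem~\ref{thm:tauouterbound}. At that point $|C_{\cal S}|\le \tfrac{K+1}{2}+1$, and the key observation (Lemma~\ref{lem:mthreeinductionstep}) is that from here on the average number of uncovered messages per remaining transmitter is \emph{strictly} below $2$: with $|{\cal S}|=n\ge \tfrac{K+1}{4}$ and $|C_{\cal S}|\le n+\tfrac{K+1}{4}+1$ one checks $3(K-|C_{\cal S}|)<2(K-n)$, so some remaining transmitter contributes at most one new message. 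Phase~2 therefore maintains the invariant $|C_{\cal S}|\le |{\cal S}|+\tfrac{K+1}{4}+1$ while $|{\cal S}|$ grows from $\tfrac{K+1}{4}$ to $\tfrac{K+1}{4}+\tfrac{K-7}{8}=\tfrac{3K-5}{8}$, at which point $|C_{\cal S}|\le K-|{\cal S}|$ and Corollary~\ref{cor:dofouterbound} finishes the proof.

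So the refinement you anticipated is precisely a sharper induction step kicking in once $|{\cal S}|$ passes $K/4$; your instinct that ``a correspondingly graded pigeonhole/induction should replace the inductive lemmas'' is the right one, and the two alternative routes you sketch in the last paragraph are unnecessary detours.
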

\begin{proof}
In a similar fashion to the proof of Theorem~\ref{thm:tauouterbound}, we prove the statement by induction. The idea is to prove the existence of a set ${\cal S}$ with approximately $\frac{3K}{8}$ transmitter indices, and these transmitters are carrying no more than approximately $\frac{5K}{8}=K-|{\cal S}|+o(K)$ messages, and then use Corollary~\ref{cor:dofouterbound} to derive the DoF outer bound. In the proof of Theorem~\ref{thm:tauouterbound}, we used Lemmas~\ref{lem:basis} and~\ref{lem:inductionstep} in the Appendix, to provide the basis and induction step of the proof, respectively. Here, we follow the same path until we show that there exists a set ${\cal S}$ such that $|{\cal S}|=\frac{K+1}{4}$ and $|C_{\cal S}|\leq (M-1)|{\cal S}|+1$, and then we use Lemma~\ref{lem:mthreeinductionstep} in the Appendix to provide a stronger induction step that establishes a tighter bound on the size of the set $C_{\cal S}$.  

We note that it suffices to show that $\eta(K,3) \leq \frac{5K}{8}+o(K)$ for all values of $K$ such that $\frac{K+1}{4}$ is an even positive integer, and hence, we make that assumption for $K$. Define the following,
\begin{equation}
x_1 = \frac{K+1}{4},
\end{equation}
\begin{equation}
x_2=\frac{K-7}{8},
\end{equation}
\begin{equation}
x_3 = 2 x_1 + 1 + x_2.
\end{equation}
Now, we note that,
\begin{equation}\label{eq:zequality}
x_3 = K -  (x_1 + x_2) ,
\end{equation}
and by induction, it follows from Lemmas~\ref{lem:basis} and~\ref{lem:inductionstep} that $\exists {\cal S}_1 \subset [K]$, $|{\cal S}_1|= x_1$, $|C_{{\cal S}_1}| \leq 2x_1+1$. We now apply induction again with the set ${\cal S}_1$ as a basis, and use Lemma~\ref{lem:mthreeinductionstep} for the induction step to show that $\exists {\cal S}_2 \subset [K]$, $|{\cal S}_2|= x_1+x_2$, $|C_{{\cal S}_2}| \leq x_3=K-|{\cal S}_2|$. Hence, we get the following upper bound using Corollary~\ref{cor:dofouterbound},
\begin{eqnarray}
\eta(K,3) &\leq& x_3\nonumber 
\\&=&\frac{5(K+1)}{8},
\end{eqnarray}
from which~\eqref{eq:taufthree} holds.
\end{proof}

We note that all the DoF upper bounding proofs used so far employ Corollary~\ref{cor:dofouterbound}. In~\cite{ElGamal-Annapureddy-Veeravalli-CISS12}, we showed that under the hypothesis that the upper bound in Corollary~\ref{cor:dofouterbound} is tight for any $K$-user fully connected interference channel with a cooperation order constraint $M$, then scalable DoF cooperation gains are achievable for any value of $M \geq 3$. Hence, a solution to the general problem necessitates the discovery of either new upper bounding techniques or new coding schemes. However, as we show next, scalable DoF cooperation gains are possible when the assumption of full connectivity is relaxed.

\section{Locally Connected Interference Channels}\label{sec:lc}

In Section~\ref{sec:channelmodel}, we defined the locally connected channel model as a function of the number of dominant interferers per receiver $L$, by connecting each transmitter to $\left \lfloor \frac{L}{2} \right \rfloor$ preceding receivers and $\left \lceil \frac{L}{2} \right \rceil$ succeeding receivers, and in Section~\ref{sec:dofmodel}, we illustrated an equivalent model in terms of the asymptotic per user DoF $\tau_L(M)$. In the equivalent model, each transmitter is connected to $L$ succeeding receivers. More precisely, we consider the following channel model,
\begin{equation}\label{eq:equivalentchannel}
H_{ij} \text{ is not identically } 0 \text { if and only if } j \in \left[i, i+1,\ldots,i+L\right],
\end{equation}
and all non-zero channel coefficients are generic.

\subsection{Prior Work}\label{sec:lcpriorwork}
In~\cite{Lapidoth-Shamai-Wigger-ISIT07}, the special case of Wyner's asymmetric model ($L=1$) was considered, and the spiral message assignment strategy mentioned in Section~\ref{sec:fcpriorwork} was fixed, i.e., each message is assigned to its own transmitter as well as $M-1$ following transmitters. The asymptotic per user DoF was then characterized as $\frac{M}{M+1}$. This shows for our problem that,
\begin{equation}\label{eq:tauwiggerone}
\tau_1(M) \geq \frac{M}{M+1}.
\end{equation}
In \cite[Remark $2$]{Shamai-Wigger-ISIT11}, a message assignment strategy was described to enable the achievability of an asymptotic per user DoF as high as $\frac{2M-1}{2M}$, it can be easily verified that this is indeed true, and hence, we know that,
\begin{equation}\label{eq:tauwiggertwo}
\tau_1(M) \geq \frac{2M-1}{2M}.
\end{equation} 
The main difference in the strategy described in \cite[Remark $2$]{Shamai-Wigger-ISIT11} from the spiral message assignment strategy considered in \cite{Lapidoth-Shamai-Wigger-ISIT07}, is that unlike the spiral strategy, messages are assigned to transmitters in an asymmetric fashion, where we say that a message assignment is symmetric if and only if for all $j,i\in[K], j > i$, the transmit set ${\cal T}_j$ is obtained by shifting forward the indices of the elements of the transmit set ${\cal T}_i$ by $(j-i)$.

We show that both the message assignment startegy analyzed in~\cite{Lapidoth-Shamai-Wigger-ISIT07} and the one suggested in~\cite{Shamai-Wigger-ISIT11} are suboptimal for $L=1$, and the value of $\tau_1(M)$ is in fact strictly higher than the bounds in~\eqref{eq:tauwiggerone} and~\eqref{eq:tauwiggertwo}. The key idea enabling our result is that each message need not be available at the transmitter carrying its own index. We start by illustrating a simple example for the case of no cooperation ($M=1$) that highlights the idea behind our scheme. 

\subsection{Example: $M=L=1$}\label{sec:lcexample}
\begin{figure}
\centering
\includegraphics[width=0.5\columnwidth]{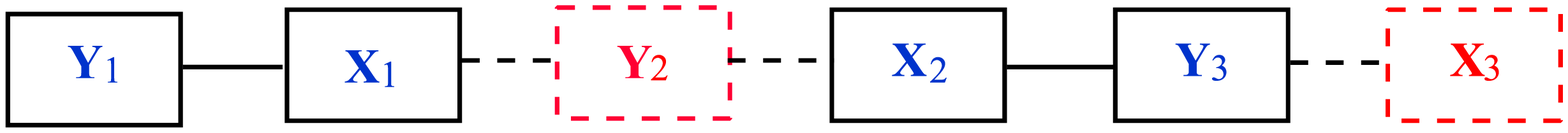}
\caption{Achieving $2/3$ per user DoF for $M=L=1$. Each transmitter is carrying a message for the receiver connected to it by a solid line. Figure showing only signals corresponding to the first $3$ users in a general $K-$user network. Signals in dashed boxes are deactivated. Note that the deactivation of $X_3$ splits this part of the network from the rest.}
\label{fig:examplemone}
\end{figure}

Let $W_1$ be available at the first transmitter, $W_3$ be available at the second transmitter, and deactivate both the second receiver and the third transmitter. Then it is easily seen that messages $W_1$ and $W_3$ can be received without interfering signals at their corresponding receivers. Moreover, the deactivation of $X_3$ splits this part of the network from the rest. i.e., the same scheme can be repeated by assigning $W_4, W_6$, to the transmitters with transmit signals $X_4, X_5$, respectively, and so on. Thus, $2$ degrees of freedom can be achieved for each set of $3$ users, thereby, achieving an asymptotic per user DoF of $\frac{2}{3}$. The described message assignment is depicted in Figure~\ref{fig:examplemone}. It is evident now that a constraint that is only a function of the load on the {\em backhaul} link may lead to a discovery of better message assignments than the one considered in~\cite{Lapidoth-Shamai-Wigger-ISIT07}. In the following section, we show that the optimal message assignment strategy under the  cooperation order constraint~\eqref{eq:coop_order} is different from the spiral strategy. The above described message assignment strategy and coding scheme for the special case of $M=L=1$ are shown to be optimal. It is worth noting that the optimality of a TDMA scheme in this case follows as a special case from a general result in~\cite{Maleki-Jafar-arXiv13}, where necessary and sufficient conditions on channel connectivity and message assignment are derived for TDMA schemes to be optimal.

\subsection{Achieving Scalable DoF Cooperation Gains}\label{sec:dofgains}
In this Section, we specialize the scheme introduced for multiple-antenna transmitters in~\cite[Section IV]{ElGamal-Annapureddy-Veeravalli-ISIT12} to our setting. We consider a simple linear precoding coding scheme, where each message is assigned to a set of transmitters with successive indices, and a zero-forcing transmit beamforming strategy is employed. The transmit signal at the $j^{\mathrm{th}}$ transmitter is given by,
\begin{equation}\label{eq:linear-precoding}
X_j = \sum_{i: j \in {\cal T}_i} X_{j,i},
\end{equation} 
where $X_{j,i}$ depends only on message $W_i$.

Using simple zero-forcing transmit beams with a fractional reuse scheme that activates only a subset of transmitters and receivers in each channel use, we extend the example in Section~\ref{sec:lcexample} to achieve scalable DoF cooperation gains for any value of $M > \frac{L}{2}$.

\begin{thm}\label{thm:lowerbound}
The following lower bound holds for the asymptotic per user DoF of a locally connected channel with connectivity parameter $L$,
\begin{equation}\label{eq:lowerbound}
\tau_L(M) \geq \max\left\{\frac{1}{2},\frac{2M}{2M+L}\right\},\forall M\in{\bm Z}^+.
\end{equation}
\end{thm}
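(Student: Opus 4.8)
The plan is to prove the two lower bounds $\tau_L(M)\ge \tfrac12$ and $\tau_L(M)\ge \tfrac{2M}{2M+L}$ separately, after which the claimed $\max$ follows immediately. The first bound is just the no-cooperation baseline: fixing $M=1$ and applying the asymptotic interference alignment scheme of~\cite{Cadambe-IA} yields $\tfrac12$ degree of freedom per user even when some cross links are identically zero (vanishing links only remove interference), and this term is binding only in the regime $2M<L$. The remaining work is to establish the cooperative term $\tfrac{2M}{2M+L}$ by a single periodic zero-forcing scheme with \emph{fractional reuse}: I would fix one activation pattern of period $2M+L$, use it in every channel use, deliver $2M$ interference-free symbols per period, and then divide by the period length and let $K\to\infty$. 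This generalizes the $M=L=1$ construction of Section~\ref{sec:lcexample}.

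Concretely, I would partition the users into consecutive blocks of $2M+L$ and, after relabelling each block as $1,\dots,2M+L$, activate transmitters $\{1,\dots,2M\}$ while silencing the last $L$ transmitters $\{2M+1,\dots,2M+L\}$, and serve the receivers $\{1,\dots,M\}\cup\{M+L+1,\dots,2M+L\}$ while deactivating the $L$ \emph{middle} receivers $\{M+1,\dots,M+L\}$. Using the linear precoder in~\eqref{eq:linear-precoding}, the first cluster of transmitters $\{1,\dots,M\}$ serves the first block of $M$ active receivers and the second cluster $\{M+1,\dots,2M\}$ serves the last block of $M$ active receivers. For $M=L=1$ this is exactly the assignment in Section~\ref{sec:lcexample} (serve receivers $1$ and $3$, silence receiver $2$ and transmitter $3$), which is the sanity check I would run first.

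The technical core is a single zero-forcing lemma: for a cluster of $M$ consecutive transmitters assigned to $M$ receivers whose indices either match the transmitters (cluster one) or are shifted by $L$ (cluster two), the $M\times M$ channel submatrix is triangular with a generically nonzero diagonal, because by~\eqref{eq:equivalentchannel} receiver $k$ hears exactly transmitters $k-L,\dots,k$. Such a matrix is generically invertible, so taking the transmit beams to be the columns of its inverse delivers every message cleanly; moreover, since the inverse of a triangular matrix is triangular, each message occupies at most $M$ transmitters of successive index, which meets the cooperation-order constraint~\eqref{eq:coop_order}. I would then check the decoupling: cluster one reaches receivers $\{1,\dots,M+L\}$ and cluster two reaches $\{M+1,\dots,2M+L\}$, so their overlap is precisely the silenced middle $\{M+1,\dots,M+L\}$, each cluster's served receivers lie outside the other cluster's reach, and the $L$ silenced transmitters prevent any leakage into the following block. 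Hence every active receiver sees only its intended symbol, giving $2M$ degrees of freedom per block and $\eta_L(K,M)\ge 2M\lfloor K/(2M+L)\rfloor$, so $\tau_L(M)\ge \tfrac{2M}{2M+L}$.

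The hard part will be discovering and then verifying this exact activation pattern rather than the calculations it entails. The non-obvious structural point is that the deactivated receivers must sit in the \emph{middle} of each period and be \emph{shared} by the two clusters: a single guard of $L$ users per period (not per cluster) is what confines all cross-cluster and inter-block interference onto inactive receivers while keeping both served channel matrices triangular, and this sharing is exactly what upgrades the naive $\tfrac{M}{M+L}$ (one cluster per guard) to $\tfrac{2M}{2M+L}$. A secondary point to confirm is genericity, namely that the triangular submatrices are invertible and the desired signals are nonzero for almost all channel realizations, which follows since the relevant determinants are nonzero polynomials in the channel coefficients.
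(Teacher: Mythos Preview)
Your proposal is correct and essentially identical to the paper's proof: the same block period $2M+L$, the same set of silenced transmitters $\{2M+1,\dots,2M+L\}$ and silenced receivers $\{M+1,\dots,M+L\}$, the same two clusters of transmitters serving $\mathcal{S}_1=[M]$ and $\mathcal{S}_2=\{M+L+1,\dots,2M+L\}$, and the same decoupling check. The only cosmetic difference is that the paper presents the zero-forcing as successive interference cancellation (design $X_{j,i}$ one transmitter at a time to null $W_i$ at receiver $j$), whereas you phrase it as inverting the triangular $M\times M$ channel submatrix; these are the same computation, and your observation that the inverse stays triangular is exactly what yields the paper's transmit sets $\mathcal{T}_i=\{i,\dots,M\}$ and $\mathcal{T}_i=\{M+1,\dots,i-L\}$ with $|\mathcal{T}_i|\le M$.
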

 \begin{proof}

Showing that $\tau_L(M) \geq \frac{1}{2}, \forall M \geq 1$ follows by a straightforward extension of the asymptotic interference alignment scheme of~\cite{Cadambe-IA}, and hence, it suffices to show that $\tau_L(M) \geq \frac{2M}{2M+L}$. 

We treat the network as a set of clusters, each consisting of consecutive $2M+L$ transceivers. The last $L$ transmitters of each cluster are deactivated to eliminate {\em inter-cluster} interference. It then suffices to show that $2M$ DoF can be achieved in each cluster. Without loss of generality, consider the cluster with users of indices in the set $[2M+L]$. We define the following subsets of $[2M+L]$,
\begin{eqnarray*}
{\cal S}_1 &=& [M],
\\{\cal S}_2 &=& \{L+M+1,L+M+2,\ldots,L+2M\}.
\end{eqnarray*}
We next show that each user in ${\cal S}_1 \cup {\cal S}_2$ achieves one degree of freedom, while messages $\{W_{M+1},W_{M+2},\ldots,W_{L+M}\}$ are not transmitted. In the proposed scheme, users in the set ${\cal S}_1$ are served by transmitters in the set $\{X_1,X_2,\ldots,X_M\}$ and users in the set ${\cal S}_2$ are served by transmitters in the set $\{X_{M+1},X_{M+2},\ldots,X_{2M}\}$. Let the message assignments be as follows.\\

${\cal T}_{i}=
\begin{cases}
\{i,i+1,\ldots,M\}, \quad &\forall i \in {\cal S}_1,\\
\{i-L,i-L-1,\ldots,M+1\},\quad &\forall  i \in {\cal S}_2.
\end{cases}$\\

Now, we note that messages with indices in ${\cal S}_1$ are not available outside transmitters with indices in $[M]$, and hence, do not cause interference at receivers with indices in ${\cal S}_2$. Also, messages with indices in ${\cal S}_2$ are not available at transmitters with indices in $[M]$, and hence, do not cause interference at receivers with indices in ${\cal S}_1$. 

In order to complete the proof by showing that each user in ${\cal S}_1 \cup {\cal S}_2$ achieves one degree of freedom, we next show that transmissions corresponding to messages with indices in ${\cal S}_1$(${\cal S}_2$) do not cause interference at receivers with indices in the same set. To avoid redundancy, we only describe in detail the design of transmit beams for message $W_1$ to cancel its interference at all receivers in ${\cal S}_1$ except its own receiver. First, the encoding of $W_1$ into $X_{1,1}$ at the first transmitter is done in a way that is oblivious to the existence of other receivers in the network except the first receiver, and a capacity achieving code for the point-to-point link $H_{1,1}$ is used. We then design $X_{2,1}$ at the second transmitter to cancel the interference caused by $W_1$ at the second receiver, i.e.,
\begin{equation}
X_{2,1}= -\frac{H_{2,1}}{H_{2,2}} X_{1,1}.
\end{equation}
Similarly, the transmit beam $X_{3,1}$ is then designed to cancel the interference caused by $W_1$ at the third receiver. The transmit beams $X_{i,1}, i\in\{2,3,\ldots,M\}$ are successively designed with respect to order of the index $i$ such that the received signal due to $X_{i,1}$ at the $i^{\mathrm{th}}$ receiver cancels the interference caused by $W_1$. 

In general, the availability of channel state information at the transmitters allows a design for the transmit beams for message $W_i$ that delivers it to the $i^{\mathrm{th}}$ receiver with a capacity achieving point-to-point code and simultaneously cancels its effect at receivers with indices in the set ${\cal C}_i$, where,\\

${\cal C}_{i}=
\begin{cases}
\{i+1, i+2, \ldots,M\},\quad  &\forall i \in {\cal S}_1,\\
\{i-1, i-2, \ldots,L+M+1\},\quad &\forall  i \in {\cal S}_2.
\end{cases}$\\

Note that both ${\cal C}_M$ and ${\cal C}_{L+M+1}$ equal the empty set, because both $W_M$ and $W_{L+M+1}$ do not contribute to interfering signals at receivers with indices in the set ${\cal S}_1 \cup {\cal S}_2$. 
We conclude that each receiver with index in the set ${\cal S}_1\cup{\cal S}_2$ suffers only from Gaussian noise, thereby enjoying one degree of freedom.

\end{proof}

We refer the reader to Figure~\ref{fig:codingscheme} for an illustration of the above described coding scheme. We note that in the above coding scheme, some messages are not being transmitted in order to allow for interference-free communication for the remaining messages. It is worth noting that this can be done while maintaining fairness in the allocation of the available DoF over all users through \emph{fractional reuse} in a system where multiple sessions of communication take place, and different sets of receivers are deactivated in different sessions, e.g., in different time slots or different sub-carriers (in an OFDM system).

\begin{figure}
  \centering
\subfloat[]{\label{fig:mthreelone}\includegraphics[height=0.345\textwidth]{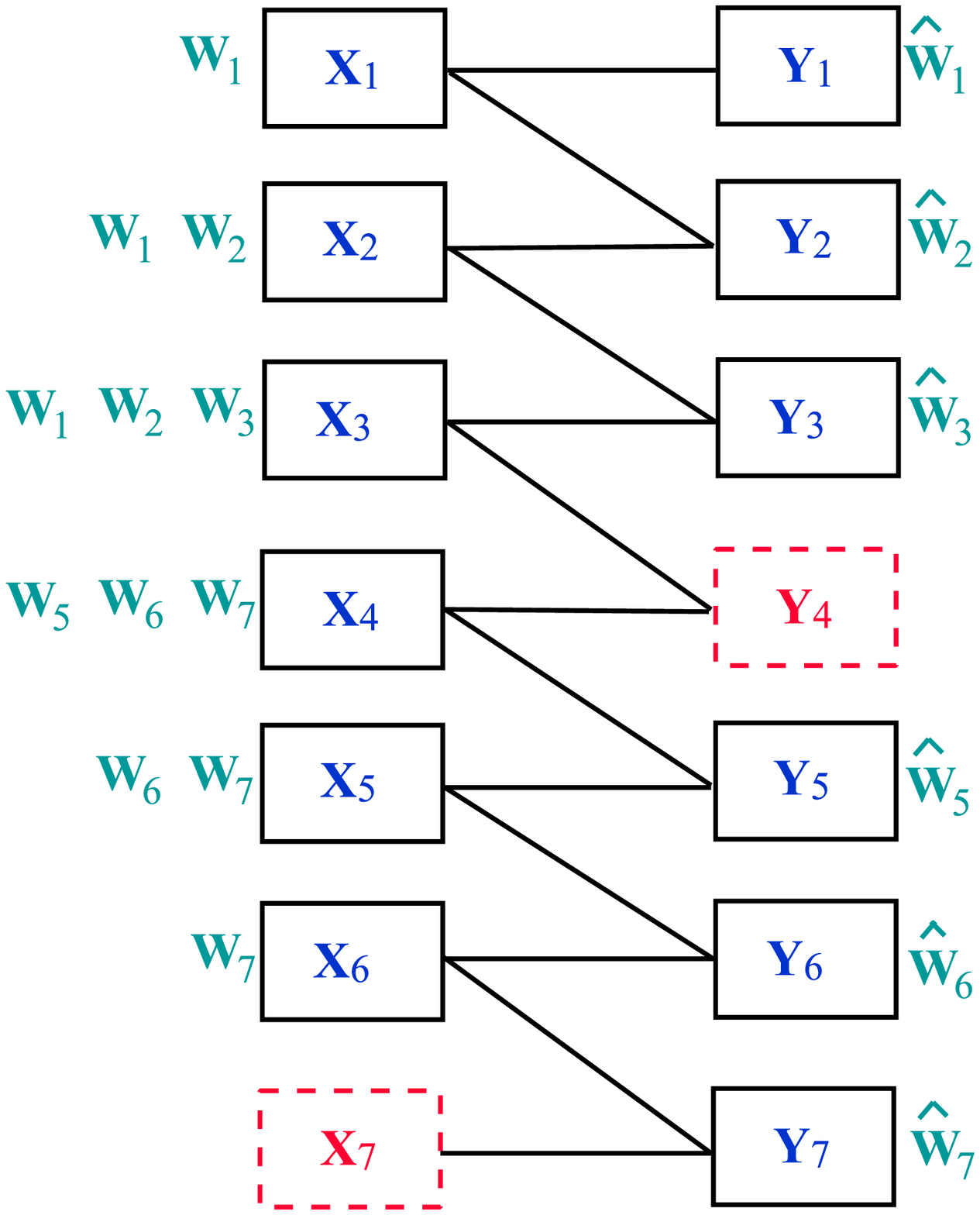}}                
\quad\quad\quad\quad\subfloat[]{\label{fig:mtwoltwo}\includegraphics[width=0.3\textwidth]{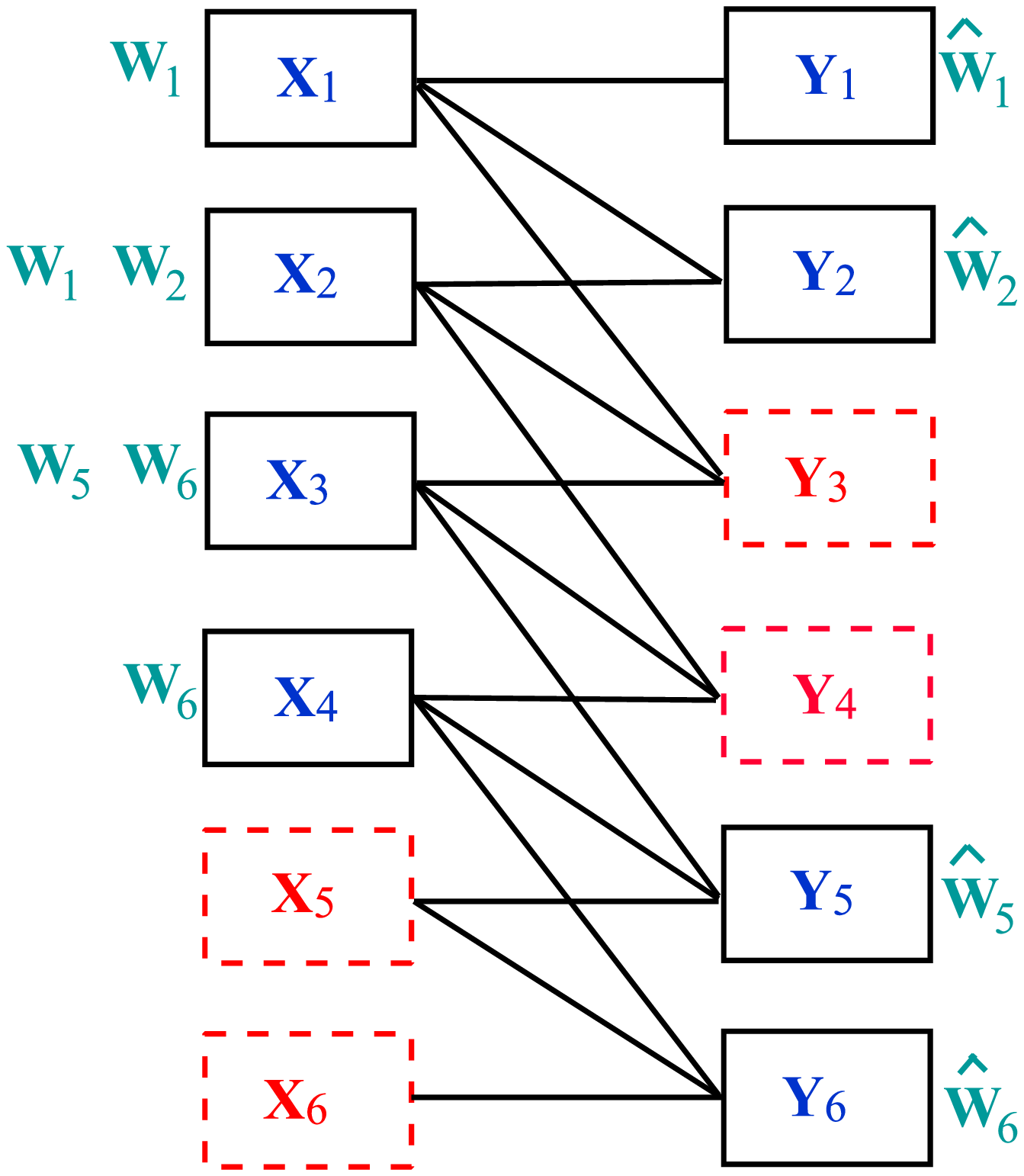}}
  \caption{Figure showing the assignment of messages in the proof of Theorem~\ref{thm:lowerbound} for the case where $M=3,L=1$ in ($a$), and the case where $M=L=2$ in ($b$). Only signals corresponding to the first cluster are shown. Signals in dashed red boxes are deactivated. Note that the last $L$ transmit signals are deactivated to eliminate inter-cluster interference. Also, messages $\{W_{M+1},\ldots,W_{M+L}\}$ are not transmitted, while each other message with indices in $\{1,2,\ldots,2M+L\}$ has one degree of freedom.}
  \label{fig:codingscheme}
\end{figure}

\subsection{Irreducible Message Assignments and Optimality of Local Cooperation}\label{sec:lcusefulmsgassignment}

In order to find an upper bound on the per user DoF $\tau_L(M)$, we have to consider all possible message assignment strategies satisfying the cooperation order constraint~\eqref{eq:coop_order}. In this section, we characterize necessary conditions for the optimal message assignment. The constraints we provide for transmit sets are governed by the connectivity pattern of the channel. For example, for the case where $M=1$, any assignment of message $W_i$ to a transmitter that is not connected to $Y_i$ is \emph{reducible}, i.e., the rate of transmitting message $W_i$ has to be zero for these assignments, and hence, removing $W_i$ from its carrying transmitter does not reduce the sum rate in these cases. 

We now introduce a graph theoretic representation that simplifies the presentation of the necessary conditions on irreducible message assignments. For message $W_i$, and a fixed transmit set ${\cal T}_i$, we construct the following graph $G_{W_i,{\cal T}_i}$ that has $[K]$ as its set of vertices, and an edge exists between any given pair of vertices $x,y \in [K]$ if and only if:
\begin{itemize}
\item  $x,y \in {\cal T}_i$,
\item  $|x-y| \leq L$.
\end{itemize}

Vertices corresponding to transmitters connected to $Y_i$ are given a special mark, i.e., vertices with labels in the set $\{i,i-1,\ldots,i-L\}$ are marked for the considered channel model. 

We now have the following statement.


\begin{lem}\label{lem:usefulmsgdistribution}
For any $k \in {\cal T}_i$ such that the vertex $k$ in $G_{W_i,{\cal T}_i}$ is not connected to a marked vertex, removing $k$ from ${\cal T}_i$ does not decrease the sum rate.
\end{lem}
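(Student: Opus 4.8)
The plan is to prove the statement by an achievability-preserving reduction: I will show that every rate tuple achievable with the transmit set $\mathcal{T}_i$ is still achievable after the \emph{entire} connected component of $k$ in $G_{W_i,\mathcal{T}_i}$ drops $W_i$. Removing a message from a transmitter can only shrink the family of admissible coding schemes, so the sum rate certainly cannot increase; the content of the lemma is the opposite direction, that it cannot decrease, and this is what the reduction will deliver. Proving it for the whole component is no harder than for $k$ alone and yields the stated single-vertex claim by monotonicity, since the resulting transmit set $\mathcal{T}_i\setminus\mathcal{C}\subseteq\mathcal{T}_i\setminus\{k\}$, and a scheme valid for the smaller set is a fortiori valid for the larger one.

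First I would fix the relevant geometry. For a transmitter $x$ let $N(x)\subseteq[K]$ be the set of receivers connected to it; by local connectivity $N(x)$ is an interval of $L+1$ consecutive indices, and two transmitters $x,y$ share a common connected receiver exactly when $|x-y|\le L$, which is precisely the edge relation of $G_{W_i,\mathcal{T}_i}$. Let $\mathcal{C}$ be the connected component of $k$ and let $\mathcal{D}=\bigcup_{x\in\mathcal{C}}N(x)$ be the receivers it reaches. I would then record three facts: (i) $\mathcal{C}\subseteq\mathcal{T}_i$, since every edge has both endpoints in $\mathcal{T}_i$; (ii) $i\notin\mathcal{D}$, because $\mathcal{C}$ contains no marked vertex, i.e. no element of $\{i,i-1,\dots,i-L\}$, which is exactly the set of transmitters connected to $Y_i$; and (iii) $\mathcal{D}$ is disjoint from the receivers reached by $\mathcal{T}_i\setminus\mathcal{C}$, for if some receiver $r$ were reached by both $x\in\mathcal{C}$ and $y\in\mathcal{T}_i\setminus\mathcal{C}$ then $|x-y|\le L$ would force an edge $xy$, contradicting maximality of $\mathcal{C}$. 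Fact (iii) gives the two consequences I really need: for every $r\in\mathcal{D}$ the only transmitters of $\mathcal{T}_i$ that $r$ hears lie in $\mathcal{C}$, and no receiver outside $\mathcal{D}$ hears any transmitter of $\mathcal{C}$ at all.

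Next I would carry out the transformation. Starting from a scheme $\Sigma$ on $\mathcal{T}_i$ with vanishing average error, I build $\Sigma'$ that is identical except that every transmitter in $\mathcal{C}$ encodes as if $W_i$ equalled a fixed default value $\hat w$ and drops $W_i$ from its transmit set. By facts (ii) and (iii) the received signal at every receiver outside $\mathcal{D}$, in particular at $Y_i$, is literally unchanged, so those decoders still succeed; and for every $r\in\mathcal{D}$ the received signal in $\Sigma'$ coincides with the one $\Sigma$ would produce with $W_i$ frozen to $\hat w$, since the only $W_i$-bearing transmitters $r$ hears are in $\mathcal{C}$ and the others heard by $r$ do not depend on $W_i$ at all. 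The remaining task is to bound the error at the receivers in $\mathcal{D}$, and this is the step I expect to be the main obstacle: the guarantee on $\Sigma$ is only that the error is small after averaging over a uniform $W_i$, which does not by itself control the error for a single frozen value. I would resolve this by choosing $\hat w$ to minimize the total conditional error over the receivers in $\mathcal{D}$; the minimum over $w$ is at most the average over $w$, which is exactly the $\mathcal{D}$-restricted average error of $\Sigma$ and hence vanishes. With this choice every decoder in $\Sigma'$ is reliable, so $\Sigma'$ achieves the same rate tuple using $\mathcal{T}_i\setminus\mathcal{C}$, and the monotonicity argument above then shows that removing $k$ does not decrease the sum rate.
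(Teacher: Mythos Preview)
Your proposal is correct and follows essentially the same approach as the paper: both isolate the connected component $\mathcal{C}$ of $k$ (the paper's set $\mathcal{S}$), identify the receivers $\mathcal{D}$ it touches (the paper's $\mathcal{S}'$), and replace the use of $W_i$ at $\mathcal{C}$ by a dummy that carries no information about the true $W_i$. The only technical difference is that the paper substitutes an independent random variable $Q$ with the same law as $W_i$ and argues via equality of joint distributions plus Fano, whereas you freeze $W_i$ to a deterministic $\hat w$ chosen by the $\min\le\text{average}$ trick on the conditional error; both devices are standard and yield the same conclusion.
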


\begin{proof}
Let ${\cal S}$ denote the set of indices of vertices in a component with no marked vertices. We need to show that removing any transmitter index in ${\cal S}$ from ${\cal T}_i$ does not decrease the sum rate. Let ${\cal S}'$ be the set of indices of received signals that are connected to at least one transmitter with an index in ${\cal S}$. To prove the lemma, we consider two scenarios, where we add a {\em tilde} over symbols denoting signals belonging to the second scenario. For the first scenario, $W_i$ is made available at all transmitters with indices in ${\cal S}$. Let $Q$ be a random variable that is independent of all messages and has the same distribution as $W_i$, then for the second scenario, $W_i$ is not available at any transmitter with an index in ${\cal S}$, and a realization $q$ of $Q$ is generated and given to all transmitters with indices in ${\cal S}$ before communication starts. Moreover, the given realization $Q=q$ contributes to the encoding of $\tilde{X}_{\cal S}$ in the same fashion as a message $W_i=q$ contributes to $X_{\cal S}$. Assuming a reliable communication scheme for the first scenario that uses a large block length $n$, the following argument shows that the achievable sum rate is also achievable after removing $W_i$ from the designated transmitters. And therefore, proving that removing any transmitter in ${\cal S}$ from ${\cal T}_i$ does not decrease the sum rate.
\begin{eqnarray*}
n\sum_j R_j &=& \sum_j \textsf{H}(W_j) \notag
\\&\overset{(a)}{\leq}& \sum_j I(W_j;Y_j^n) + o(n) \notag
\\&=& \sum_{j \in {\cal S}'^c} I(W_j,Y_j^n) + \sum_{j \in {\cal S}'} I(W_j;Y_j^n)+ o(n) \notag
\\&\overset{(b)}{=}& \sum_{j \in {\cal S}'^c} I(W_j,\tilde{Y}_j^n) + \sum_{j \in {\cal S}'} I(W_j;Y_j^n)+ o(n) \notag
\\&\overset{(c)}{=}&\sum_{j \in {\cal S}'^c} I(W_j,\tilde{Y}_j^n) + \sum_{j \in {\cal S}'} I(W_j;\tilde{Y}_j^n)+ o(n) \notag
\\&=& \sum_j \textsf{H}(W_j) - \textsf{H}(W_j|\tilde{Y}_j^n) + o(n),
\end{eqnarray*}
where $\textsf{H}(.)$ is the entropy function for discrete random variables, $(a)$ follows from Fano's inequality, $(b)$ follows as the difference between the two scenarios lies in the encoding of $X_{\cal S}$ which affects only $Y_{{\cal S}'}$, and $(c)$ holds because any two transmitters carrying $W_i$ and connected to a receiver whose index is in ${\cal S'}$ must belong to the same component, and hence, $W_i$ contributes to $Y_{\cal S'}$ only through $X_{\cal S}$, it follows that $(W_j,Y_j^n)$ has the same joint distribution as $(W_j,\tilde{Y}_j^n)$ for every $j \in S'$. Now, it follows that,
\begin{equation}
\sum_j \textsf{H}(W_j|\tilde{Y}_j^n) = o(n),
\end{equation}
and hence, the rates $R_j, j\in[K]$ are achievable in the second scenario.
\end{proof}
\begin{figure}
\centering
\includegraphics[width=0.5\columnwidth]{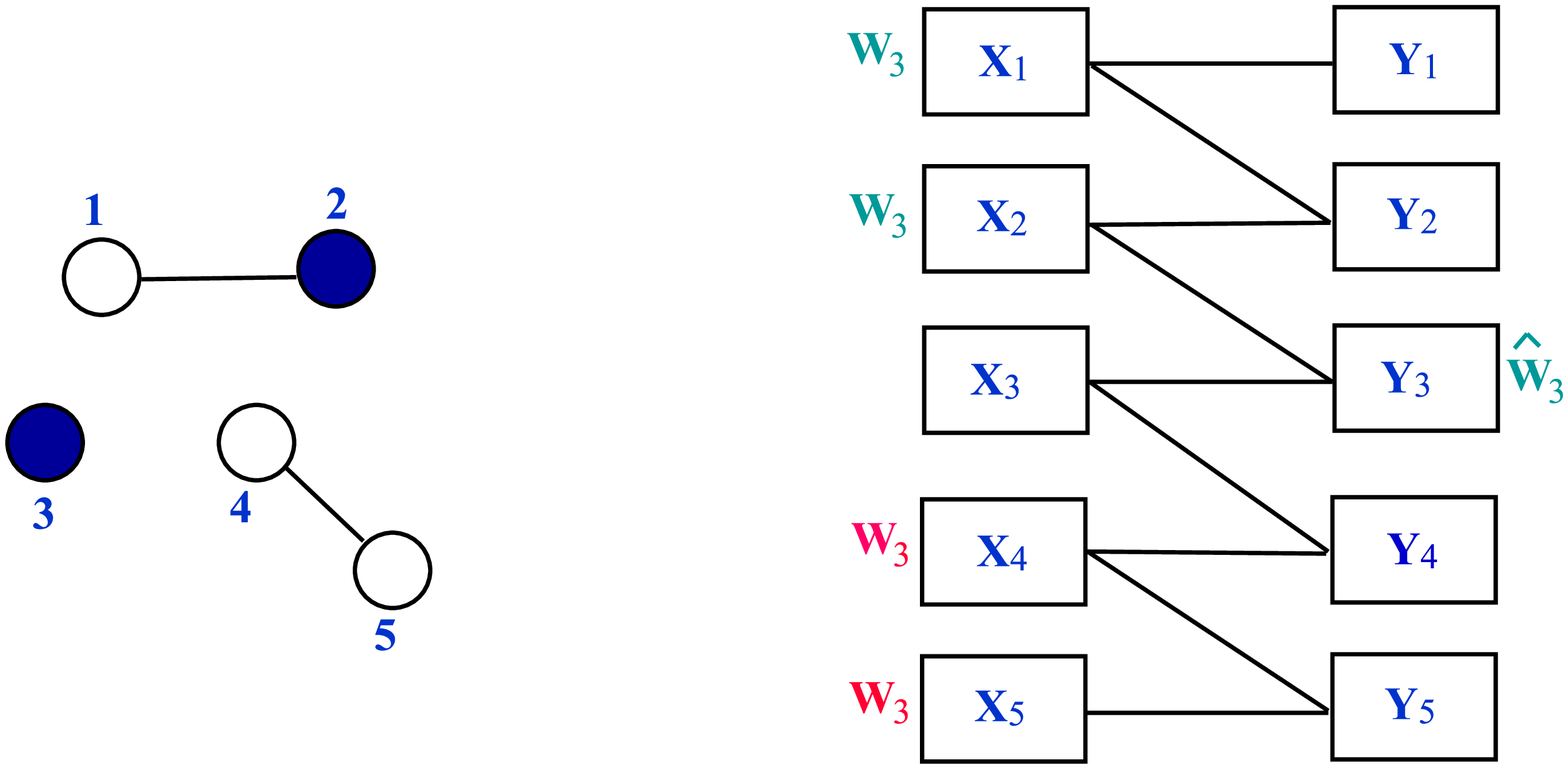}
\caption{Figure showing the construction of $G_{W_3,{\cal T}_3}$ in a $5-$user channel with $L=1$. Marked vertices are represented with filled circles. $W_3$ can be removed at both $X_4$ and $X_5$ without decreasing the sum rate, as the corresponding vertices lie in a component that does not contain a marked vertex.}
\label{fig:usefulmsgassignment}
\end{figure}
We call a message assignment \emph{irreducible} if no element in it can be removed without decreasing the sum rate. The following corollary to the above lemma characterizes a necessary condition for any message assignment satisfying the cooperation order constraint in~\eqref{eq:coop_order} to be irreducible. Recall that two vertices in a graph $G$ are at a distance $d$ if and only if the shortest path in $G$ between the two vertices has $d$ edges. 
\begin{cor}\label{cor:usefulmsgdistribution}
Let ${\cal T}_i$ be an irreducible message assignment and $|{\cal T}_i| \leq M$, then $\forall k\in[K], k\in{\cal T}_i$ only if the vertex $k$ in $G_{W_i,{\cal T}_i}$ lies at a distance that is less than or equal $M-1$ from a marked vertex.
\end{cor}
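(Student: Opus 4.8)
The plan is to combine Lemma~\ref{lem:usefulmsgdistribution} with the cardinality constraint $|{\cal T}_i| \leq M$ and a standard bound on distances in a connected graph. First I would observe that irreducibility, together with Lemma~\ref{lem:usefulmsgdistribution}, forces every $k \in {\cal T}_i$ to lie in a connected component of $G_{W_i,{\cal T}_i}$ that contains a marked vertex: if the component of $k$ contained no marked vertex, the lemma would let us remove $k$ from ${\cal T}_i$ without decreasing the sum rate, contradicting the assumption that ${\cal T}_i$ is irreducible.

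The second step is to place $k$'s component inside ${\cal T}_i$. By the construction of $G_{W_i,{\cal T}_i}$, an edge is drawn only between a pair of vertices that \emph{both} belong to ${\cal T}_i$, so every vertex of $[K]\setminus{\cal T}_i$ is isolated. Consequently the connected component containing $k$ is entirely contained in ${\cal T}_i$, and the marked vertex guaranteed by the first step must itself lie in ${\cal T}_i \cap \{i-L,\ldots,i\}$.

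For the final step I would invoke the elementary fact that in a connected graph on $N$ vertices any two vertices are joined by a path using at most $N-1$ edges, since a shortest path repeats no vertex. Because the component of $k$ is a subset of ${\cal T}_i$, it has at most $|{\cal T}_i| \leq M$ vertices, so the distance in $G_{W_i,{\cal T}_i}$ from $k$ to the marked vertex in its component is at most $M-1$, which is exactly the claimed bound.

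I expect the only real subtlety to be the bookkeeping in the second step: one must note carefully that edges exist solely among vertices of ${\cal T}_i$, so that the component size is controlled by $|{\cal T}_i|$ rather than by $K$. Once this is in place, the distance bound is routine, and no analytic work beyond Lemma~\ref{lem:usefulmsgdistribution} is required.
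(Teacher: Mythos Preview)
Your proposal is correct. The paper states Corollary~\ref{cor:usefulmsgdistribution} without an explicit proof, treating it as an immediate consequence of Lemma~\ref{lem:usefulmsgdistribution}; your three steps spell out precisely the intended argument (irreducibility forces $k$'s component to contain a marked vertex, that component lies inside ${\cal T}_i$ because only ${\cal T}_i$-vertices have edges, and a connected graph on at most $M$ vertices has diameter at most $M-1$).
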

Note that in the considered channel model, the above result implies that ${\cal T}_i \subseteq \{i-ML,i-ML+1,\ldots,i+(M-1)L\}$, from which we obtain the following result.
\begin{thm}
Local cooperation is optimal for locally connected channels, 
\begin{equation}
\tau_L^{(\textrm{loc})}(M) = \tau_L(M), \forall M,L\in{\bm Z}^{+}.
\end{equation} 
\end{thm}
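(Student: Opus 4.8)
The plan is to establish the nontrivial inequality $\tau_L^{(\textrm{loc})}(M) \geq \tau_L(M)$; the reverse inequality $\tau_L^{(\textrm{loc})}(M) \leq \tau_L(M)$ is immediate, since imposing the local cooperation constraint only shrinks the set of admissible message assignment strategies. To obtain the forward direction I would argue that \emph{any} DoF achievable under the bare cooperation order constraint~\eqref{eq:coop_order} is already achievable using a strategy that satisfies local cooperation, by transforming an arbitrary (near-)optimal assignment into an irreducible one at no loss in sum rate.

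Concretely, I would start from a message assignment strategy whose associated coding schemes achieve (or approach) $\tau_L(M)$. Fixing $K$, I consider the transmit sets $\{{\cal T}_{i,K}\}$ for that channel and repeatedly apply Lemma~\ref{lem:usefulmsgdistribution}: as long as some vertex $k\in{\cal T}_i$ lies in a connected component of $G_{W_i,{\cal T}_i}$ containing no marked vertex, I remove $k$ from ${\cal T}_i$. By the lemma this does not decrease the sum rate, and since removals only shrink transmit sets, the cooperation order constraint $|{\cal T}_i|\leq M$ is preserved throughout. Iterating until no further removal is possible yields an irreducible assignment achieving the same sum DoF for that $K$.

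The next step is to pin down the support of the irreducible assignment. By Corollary~\ref{cor:usefulmsgdistribution}, every $k\in{\cal T}_i$ then lies within graph distance $M-1$ of a marked vertex; since edges of $G_{W_i,{\cal T}_i}$ join indices differing by at most $L$, while marked vertices have labels in $\{i-L,\ldots,i\}$, this translates into the spatial containment ${\cal T}_{i,K}\subseteq\{i-ML,i-ML+1,\ldots,i+(M-1)L\}$ noted after the corollary. Crucially, the window half-width $ML$ is a constant independent of $K$. Hence taking $r(K)=ML$ gives $r(K)=o(K)$, and the reduced strategy satisfies the local cooperation constraint~\eqref{eq:localcooperation} for every $K$ simultaneously. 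Since this strategy achieves the same per-user DoF in the limit, $\tau_L^{(\textrm{loc})}(M)\geq\tau_L(M)$, and equality follows.

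The genuine content resides in the two imported facts — that reduction to irreducibility is free in sum rate (Lemma~\ref{lem:usefulmsgdistribution}) and that irreducibility forces a bounded support (Corollary~\ref{cor:usefulmsgdistribution}). Given those, the only point requiring care here is the translation from the \emph{combinatorial} distance bound $M-1$ in $G_{W_i,{\cal T}_i}$ to a \emph{fixed} spatial window of radius $ML$, independent of $K$; it is precisely this $K$-independence that makes the resulting $r(K)$ an $o(K)$ function and thus certifies local cooperation. I expect no further obstacle, as the per-$K$ sum-rate preservation passes directly to the asymptotic per-user limit.
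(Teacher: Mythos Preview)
Your proposal is correct and follows essentially the same approach as the paper: reduce an arbitrary (near-)optimal assignment to an irreducible one at no loss in sum rate via Lemma~\ref{lem:usefulmsgdistribution}, then invoke Corollary~\ref{cor:usefulmsgdistribution} to obtain the fixed spatial window ${\cal T}_i\subseteq\{i-ML,\ldots,i+(M-1)L\}$, whose $K$-independent radius $ML$ certifies the local cooperation constraint with $r(K)=ML=o(K)$. The paper's own argument is a single sentence deriving the theorem from this same containment, so your write-up is simply a more explicit rendering of the intended proof.
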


And so we note that even though local cooperation does not achieve a scalable DoF gain for the fully connected channel, not only does it achieve a scalable gain when the connectivity assumption is relaxed to local connectivity, but the confinement to local cooperation no longer results in a loss in the available DoF.

\subsection{DoF Upper Bounds}\label{sec:lcupperbound}
In this section, we prove upper bounds on $\tau_1(M)$ and $\tau_L(1)$ that establishes the tightness of the lower bound in Theorem~\ref{thm:lowerbound} for the special cases where either $L=1$ or $M=1$. First, in order to assess the optimality of the coding scheme introduced in Section~\ref{sec:dofgains} for arbitrary values of the system parameters, we prove a general upper bound for a class of coding schemes that only employs a zero-forcing transmit beam-forming strategy.

\subsubsection{ZF Transmit Beam-Forming}
 Consider only coding schemes with transmit signals of the form~\eqref{eq:linear-precoding} and each message is either not transmitted or allocated one degree of freedom. More precisely, let $\tilde{Y}_j=Y_j-Z_j, \forall j\in[K]$, then in addition to the constraint in~\eqref{eq:linear-precoding}, it is either case that the mutual information $I(\tilde{Y}_j;W_j)=0$ or it is the case that $W_j$ completely determines $\tilde{Y}_j$. Note that $\tilde{Y}_j$ can be determined from $W_j$ for the case where user $j$ enjoys interference-free communication and $I(W_j;\tilde{Y}_j)=0$ for the other case where $W_j$ is not transmitted. We say that the $j^{\mathrm{th}}$ receiver is \emph{active} if and only if $I(\tilde{Y}_j;W_j)>0$. Note that using zero-forcing transmit beamforming, if the $j^{\mathrm{th}}$ receiver is active, then $I(W_i;Y_j)=0, \forall i \neq j$. 

Let $\tau_{L}^{(\textrm{zf})}(M)$ denote the asymptotic characterization of the per user DoF under the restriction to the above described class of coding schemes. In Theorem~\ref{thm:zfupperbound} below, we show that the coding scheme in the proof of Theorem~\ref{thm:lowerbound} achieves the optimal value of $\tau_{L}^{(\textrm{zf})}(M)$. We first prove Lemma~\ref{lem:lemone} that bounds the number of receivers at which the interference of a given message can be cancelled. 

For a set ${\cal S}\subseteq [K]$, let ${\cal V}_{\cal S}$ be the set of indices for active receivers connected to transmitters with indices in ${\cal S}$. More precisely, ${\cal V}_{\cal S} = \{j: I(\tilde{Y}_j;W_j)>0, {\cal S} \cap \{j,j-1,\ldots,j-L\} \neq \phi \}$, where $\phi$ is the empty set. To obtain the following results, we assume that for each transmitter in ${\cal T}_i$, message $W_i$ contributes to the transmit signal of  this transmitter. i.e., $\forall j \in {\cal T}_i, I(W_i,X_j) > 0$. Note that this assumption does not introduce a loss in generality, because otherwise the transmitter can be removed from ${\cal T}_i$. We need the following lemma for the proof of the upper bound on $\tau_{L}^{(\textrm{zf})}(M)$ in Theorem~\ref{thm:zfupperbound}.

\begin{lem}\label{lem:lemone}
For any message $W_i$, the number of active receivers connected to at least one transmitter carrying the message is no greater than the number of transmitters carrying the message.
\begin{equation}\label{eq:lemoneproof}
|{\cal V}_{{\cal T}_i}| \leq |{\cal T}_i|.
\end{equation}
\begin{proof}
We only consider the non-trivial case where ${\cal T}_i \neq \phi$. For each receiver $j \in{\cal V}_{{\cal T}_i}$, there exists a transmit signal $X_{k,i}$, $k\in[K]$ such that conditioned on all other transmit signals, the received signal $Y_j$ is correlated with the message $W_i$. More precisely, $I\left(W_i;Y_j|\{X_{v,i}, v\in[K], v\neq k\}\right)>0$. Now, since we impose the constraint $I(W_i;Y_j)=0, \forall j\in{\cal V}_{{\cal T}_i}$, the interference seen at all receivers in ${\cal V}_{{\cal T}_i}$ has to be cancelled. Finally, since the probability of a zero Lebesgue measure set of channel realizations is zero, the $|{\cal T}_i|$ transmit signals carrying $W_i$ cannot be designed to cancel $W_i$ at more than $|{\cal T}_i|-1$ receivers for almost all channel realizations.   
\end{proof}
\end{lem}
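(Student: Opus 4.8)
The plan is to convert the information-theoretic zero-forcing conditions into a deterministic dimension count on the $|{\cal T}_i|$ transmit beams that carry $W_i$, and then close the argument with genericity of the channel coefficients. First I would fix $W_i$ and isolate the beams $\{X_{k,i} : k \in {\cal T}_i\}$, which by the precoding structure in~\eqref{eq:linear-precoding} are the only signals through which $W_i$ reaches any receiver. For a receiver $j$ connected to at least one transmitter of ${\cal T}_i$, the noise-free signal $\tilde{Y}_j$ contains the $W_i$-bearing term $\sum_{k \in {\cal T}_i \cap \{j,j-1,\ldots,j-L\}} H_{jk} X_{k,i}$, a fixed linear combination of the beams whose coefficients are products of the generic gains $H_{jk}$. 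Under the linear zero-forcing restriction each beam is a scalar multiple of the message symbol, so this term is itself a linear functional of the vector of beamforming coefficients.

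Next I would invoke the active-receiver hypothesis. For every active receiver $j \in {\cal V}_{{\cal T}_i}$ with $j \neq i$, the zero-forcing property recalled just above the lemma gives $I(W_i;Y_j)=0$, and since each beam genuinely depends on $W_i$ (the standing assumption $I(W_i,X_k)>0$ for $k\in{\cal T}_i$), this is exactly one nontrivial homogeneous linear constraint on the beam vector. At the intended receiver $i$, activeness forces the $W_i$-bearing term to be \emph{nonzero}, so the beam vector must avoid the null direction associated with receiver $i$, and in particular must itself be nonzero.

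The core step is then a rank/dimension count. The beam vector lives in a space of dimension $|{\cal T}_i|$, and the cancellation requirements contribute $|{\cal V}_{{\cal T}_i}|-1$ homogeneous linear equations (one per active receiver $j\neq i$). Because the channel coefficients are drawn from a continuous distribution these equations are generically independent, so the solution space has dimension $|{\cal T}_i|-(|{\cal V}_{{\cal T}_i}|-1)$ for almost all realizations and collapses to $\{0\}$ once $|{\cal V}_{{\cal T}_i}|-1 \ge |{\cal T}_i|$. A nonzero beam vector that still delivers $W_i$ to receiver $i$ can therefore coexist with the cancellations only when $|{\cal V}_{{\cal T}_i}|-1 \le |{\cal T}_i|-1$, i.e.\ $|{\cal V}_{{\cal T}_i}| \le |{\cal T}_i|$; the offending realizations form a set of Lebesgue measure zero. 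If receiver $i$ happens not to lie in ${\cal V}_{{\cal T}_i}$, then all $|{\cal V}_{{\cal T}_i}|$ receivers require cancellation and the identical count yields the (stronger) conclusion, so the bound holds in either case.

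I expect the main obstacle to be the genericity argument rather than the counting: I must argue that the $|{\cal V}_{{\cal T}_i}|-1$ cancellation constraints are almost surely of full rank given the \emph{local} connectivity pattern, where distinct receivers observe different but overlapping subsets of the beams, so that the full-rank claim is not automatic and must be verified from the zero/nonzero structure of the coefficient matrix. The companion subtlety is handling the measure-zero exceptional set cleanly, consistent with the genericity convention used throughout the paper, which is precisely what makes the statement ``at most $|{\cal T}_i|-1$ receivers can be cancelled'' rigorous.
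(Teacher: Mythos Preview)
Your proposal is correct and follows essentially the same approach as the paper: reduce the zero-forcing conditions to homogeneous linear constraints on the $|{\cal T}_i|$-dimensional beam vector, then invoke genericity of the channel coefficients to conclude that at most $|{\cal T}_i|-1$ receivers can have $W_i$ cancelled while the beam remains nonzero. If anything, your version is more careful than the paper's, which writes $I(W_i;Y_j)=0,\ \forall j\in{\cal V}_{{\cal T}_i}$ without separating out $j=i$ and does not comment on why the local connectivity pattern still yields generically independent constraints; your explicit handling of the $i\in{\cal V}_{{\cal T}_i}$ case and your flagging of the full-rank issue are exactly the points one would want to spell out.
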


\begin{thm}\label{thm:zfupperbound}
Under the restriction to ZF Transmit Beam-Forming coding schemes (interference avoidance), the asymptotic per user DoF of a locally connected channel with connectivity parameter $L$ is given by,
\begin{equation}\label{eq:zfupperbound}
\tau_{L}^{(\textrm{zf})}(M) = \frac{2M}{2M+L}.
\end{equation}
\end{thm}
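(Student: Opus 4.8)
The plan is to prove the two inequalities separately. The lower bound $\tau_L^{(\textrm{zf})}(M) \geq \frac{2M}{2M+L}$ is immediate: the achievability scheme constructed in the proof of Theorem~\ref{thm:lowerbound} uses only the linear precoding~\eqref{eq:linear-precoding} with zero-forcing transmit beams, and in it every message is either silenced or delivered interference-free with exactly one degree of freedom. Hence that scheme already belongs to the restricted class and attains $\frac{2M}{2M+L}$. It therefore remains to establish the matching upper bound $\tau_L^{(\textrm{zf})}(M) \leq \frac{2M}{2M+L}$. Since in this class each active receiver contributes one degree of freedom and each inactive receiver contributes none, the optimal scheme satisfies $\eta_L(K,M) = |A|$, where $A$ denotes the set of active receivers, so the task reduces to bounding the asymptotic \emph{density} of active receivers by $\frac{2M}{2M+L}$.

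The key tool is Lemma~\ref{lem:lemone}: a message carried by ${\cal T}_i$ reaches at most $|{\cal T}_i| \leq M$ active receivers. Combined with Corollary~\ref{cor:usefulmsgdistribution}, which lets me restrict attention to irreducible assignments with ${\cal T}_i \subseteq \{i-ML,\ldots,i+(M-1)L\}$, every transmitted message can interfere only with active receivers lying within distance $ML$ of its own index, so the entire interference structure is \emph{local}. The plan is then to prove the local statement that every window of $2M+L$ consecutive receiver indices contains at most $2M$ active receivers (equivalently, at least $L$ inactive ones), and to average over all $K-(2M+L)+1$ windows: each interior index lies in exactly $2M+L$ windows, so summing the per-window counts gives $(2M+L)\,|A| \leq 2M\,K + O(1)$, whence $|A| \leq \frac{2M}{2M+L}K + o(K)$ and the bound on $\tau_L^{(\textrm{zf})}(M)$ follows.

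To prove the per-window claim I would argue by a cancellation \emph{budget} count, in the spirit of the inductive arguments used in Theorems~\ref{thm:tauouterbound} and~\ref{thm:taufthree}. Assuming for contradiction that some window holds $2M+1$ active receivers, each active receiver demands that all foreign transmitted messages it hears be zero-forced, while the supply is limited because a message with $|{\cal T}_i|$ carrying transmitters can cancel its footprint at only $|{\cal T}_i|-1$ receivers (Lemma~\ref{lem:lemone}). The binding constraint appears at the right boundary of a dense block of active receivers: the messages serving those rightmost receivers cannot recruit cancellation capacity from transmitters further to the right without leaving uncancelled interference at the next active receivers, and tracking this deficit as the active count passes $2M$ yields a contradiction. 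I expect to isolate this bookkeeping in an auxiliary combinatorial lemma in the Appendix.

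The main obstacle is pinning the per-window constant to exactly $2M$. The supply-versus-demand count must be made tight while \emph{allowing} the asymmetric assignments that make the scheme powerful in the first place (messages placed to the left of their own receivers, as in the example of Section~\ref{sec:lcexample}), and while correctly handling the finitely many boundary windows, which perturb the estimate only by $o(K)$. Matching the cancellation supply to the interference demand exactly — rather than off by a multiplicative or additive slack — is where the real work lies.
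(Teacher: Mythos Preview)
Your overall plan coincides with the paper's: the lower bound is inherited verbatim from the zero-forcing construction of Theorem~\ref{thm:lowerbound}, and the upper bound is reduced to the per-window statement that any $2M+L$ consecutive indices hold at most $2M$ active receivers. (The paper simply partitions $[K]$ into disjoint blocks of length $2M+L$; your sliding-window average is an equivalent device.)

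Where you diverge---and where your proposal is genuinely incomplete---is the proof of the per-window claim itself. You sketch a global ``cancellation budget'' count driven by the boundary of a dense block, invoke only the bounding-box consequence of Corollary~\ref{cor:usefulmsgdistribution}, and explicitly flag that nailing the constant $2M$ is the ``main obstacle.'' The paper avoids all bookkeeping by arguing from a \emph{single} message. Suppose a window ${\cal S}$ of $2M+L$ users held $2M+1$ active receivers; let $i$ be the $(M{+}1)^{\mathrm{st}}$ active user in ${\cal S}$, so that at least $M$ active receivers of ${\cal S}$ lie on each side of $i$. Now examine ${\cal T}_i$. If some transmitter in ${\cal T}_i$ is connected to the leftmost receiver $s_{\min}$ of ${\cal S}$, then the full connected-component structure of Lemma~\ref{lem:usefulmsgdistribution} (not merely its corollary) forces the receivers touched by ${\cal T}_i$ to cover every index from $s_{\min}$ through $i$, hence ${\cal V}_{{\cal T}_i}\supseteq{\cal D}_i\cup\{i\}$ and $|{\cal V}_{{\cal T}_i}|\ge M+1>|{\cal T}_i|$, contradicting Lemma~\ref{lem:lemone}. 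The case where ${\cal T}_i$ reaches $s_{\max}$ is symmetric. Otherwise every receiver touched by ${\cal T}_i$ lies inside ${\cal S}$; since these form a block of at least $|{\cal T}_i|+L$ consecutive indices and ${\cal S}$ has at most $L-1$ inactive receivers, at least $|{\cal T}_i|+1$ of them are active, again contradicting Lemma~\ref{lem:lemone}.

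This ``middle active user'' selection is the missing idea: it pins the constant to exactly $2M$ with a three-case analysis and no induction. Your budget argument might be made to work, but it is not yet an argument, and the paper's route is both shorter and tighter than what you outline.
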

\begin{proof}
The proof of the lower bound is the same as the proof of Theorem~\ref{thm:lowerbound} for the case where $\frac{2M}{2M+L} > \frac{1}{2}$. It then suffices to show that $\tau_{L}^{(\textrm{zf})}(M) \leq \frac{2M}{2M+L}$. 

In order to prove the upper bound, we show that the sum degree of freedom in each set ${\cal S} \subseteq [K]$ of consecutive $2M+L$ users is bounded by $2M$. We now focus on proving this statement by fixing a set ${\cal S}$ of consecutive $2M+L$ users, and make the following definitions. For a user $i\in[{\cal S}]$, let ${\cal U}_i$ be the set of active users in ${\cal S}$ with an index $j > i$, i.e., 
\begin{equation*}
{\cal U}_i=\{j: j>i, j\in{\cal S}, I(\tilde{Y}_j;W_j)>0\}.
\end{equation*}
Similarly, let ${\cal D}_i$ be the set of active users in ${\cal S}$ with an index $j < i$,
\begin{equation*}
{\cal D}_i=\{j: j<i, j\in{\cal S}, I(\tilde{Y}_j;W_j)>0\}.
\end{equation*}
Assume that ${\cal S}$ has at least $2M+1$ active users, then there is an active user in ${\cal S}$ that lies in the middle of a subset of $2M+1$ active users in ${\cal S}$.
More precisely, $\exists i \in {\cal S} : |{\cal T}_i|>0, |{\cal U}_i|\geq M, |{\cal D}_i| \geq M$, we let this middle user have the $i^{\mathrm{th}}$ index for the rest of the proof.

Let $s_{min}$ and $s_{max}$ be the users in ${\cal S}$ with minimum and maximum indices, respectively, i.e., $s_{min} = \min_s \{s: s \in {\cal S}\}$ and $s_{max} = \max_s \{s: s \in {\cal S}\}$, we then consider the following cases to complete the proof,

{\bf Case 1:} $W_i$ is being transmitted from a transmitter that is connected to the receiver with index $s_{min}$, i.e., $\exists s \in {\cal T}_i: s \in \{s_{min},s_{min}-1,\ldots,s_{min}-L\}$. It follows from Lemma~\ref{lem:usefulmsgdistribution} that ${\cal V}_{{\cal T}_i} \supseteq {\cal D}_i \cup \{i\}$, and hence, $|{\cal V}_{{\cal T}_i}| \geq M+1$, which contradicts~\eqref{eq:lemoneproof}, as $|{\cal T}_i|\leq M$. 

{\bf Case 2:} $W_i$ is being transmitted from a transmitter that is connected to the receiver with index $s_{max}$, i.e., $\exists s \in {\cal T}_i: s \in \{s_{max},s_{max}-1,\ldots,s_{max}-L\}$. It follows from Lemma~\ref{lem:usefulmsgdistribution} that ${\cal V}_{{\cal T}_i} \supseteq {\cal U}_i \cup \{i\}$, and hence, $|{\cal V}_{{\cal T}_i}| \geq M+1$, which again contradicts~\eqref{eq:lemoneproof}. 

{\bf Case 3:} For the remaining case, there is no transmitter in ${\cal T}_i$ that is connected to any of the receivers with indices $s_{min}$ and $s_{max}$. In this case, it follows from Lemma~\ref{lem:usefulmsgdistribution} that ${\cal T}_i$ does not contain a transmitter that is connected to a receiver with an index less than $s_{min}$ or greater than $s_{max}$, and hence, all the receivers connected to transmitters carrying $W_i$ belong to ${\cal S}$. It follows that at least $L+|{\cal T}_i|$ receivers in ${\cal S}$ are connected to one or more transmitter in ${\cal T}_i$, and since ${\cal S}$ has at least $2M+1$ active receivers, then any subset of $L+|{\cal T}_i|$ receivers in ${\cal S}$ has to have at least $2M+1-((2M+L)-(L+|{\cal T}_i|))=|{\cal T}_i|+1$ active receivers, and the statement is proved by reaching a contradiction to~\eqref{eq:lemoneproof} in the last case.
\end{proof}

\subsubsection{Wyner's Asymmetric Model}
Now, we consider the special case of $L=1$, and prove that the lower bound stated in Theorem~\ref{thm:lowerbound} is tight in this case. We start by stating the following auxiliary lemma for any $K$-user Gaussian interference channel with a DoF number of $\eta$. For any set of receiver indices ${\cal A} \subseteq [K]$, define $U_{\cal A}$ as the set of indices of transmitters that exclusively carry the messages for the receivers in ${\cal A}$, and the complement set $\bar{U}_{\cal A}$ is the set of indices of transmitters that carry messages for receivers outside ${\cal A}$. More precisely, $U_{\cal A} = [K]\backslash\cup_{i \notin {\cal A}} {\cal T}_i$, then, 
\begin{lem}\label{lem:dofouterbound}
If there exists a set ${\cal A}\subseteq [K]$, a function $f_1$, and a function $f_2$ whose definition does not depend on the transmit power constraint $P$, and $f_1\left(Y_{\cal A},X_{U_{\cal A}}\right)=X_{\bar{U}_{\cal A}}+f_2(Z_{\cal A})$, then $\eta \leq |{\cal A}|$. 
\end{lem}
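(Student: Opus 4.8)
The plan is to prove this degrees-of-freedom converse by a genie-aided reconstruction argument: I will show that, given the received signals $Y_{\mathcal{A}}^n$ of the receivers in $\mathcal{A}$ together with a small amount of side information, a central decoder can recover \emph{all} $K$ messages, and since $Y_{\mathcal{A}}^n$ carries at most $|\mathcal{A}|$ degrees of freedom, it will follow that $\eta \leq |\mathcal{A}|$.

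First I would hand the genie $Z_{\mathcal{A}}^n$ (the noise realizations at the receivers in $\mathcal{A}$) to a central decoder that also observes $Y_{\mathcal{A}}^n$, and decode in stages. Since each receiver $i \in \mathcal{A}$ can reliably decode $W_i$ from its own output $Y_i^n$, Fano's inequality gives $\textsf{H}(W_{\mathcal{A}} \mid Y_{\mathcal{A}}^n) \leq n\epsilon_n$, so $W_{\mathcal{A}}$ is available. By the definition $U_{\mathcal{A}} = [K]\setminus \bigcup_{i \notin \mathcal{A}} {\cal T}_i$, every transmitter in $U_{\mathcal{A}}$ carries only messages intended for $\mathcal{A}$, hence $X_{U_{\mathcal{A}}}^n$ is a deterministic function of $W_{\mathcal{A}}$ and can be reconstructed. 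Next I would invoke the hypothesis $f_1(Y_{\mathcal{A}},X_{U_{\mathcal{A}}}) = X_{\bar{U}_{\mathcal{A}}} + f_2(Z_{\mathcal{A}})$: applying $f_1$ to the reconstructed block yields $X_{\bar{U}_{\mathcal{A}}}^n + f_2(Z_{\mathcal{A}}^n)$, and because $f_2$ is a known mapping and $Z_{\mathcal{A}}^n$ has been supplied by the genie, the term $f_2(Z_{\mathcal{A}}^n)$ can be computed and subtracted to recover $X_{\bar{U}_{\mathcal{A}}}^n$ \emph{exactly}. Since $[K] = U_{\mathcal{A}} \cup \bar{U}_{\mathcal{A}}$, all transmit signals $X_{[K]}^n$ are now known, and in particular all signals carrying the remaining messages $W_{\bar{\mathcal{A}}}$ (each ${\cal T}_i \subseteq \bar{U}_{\mathcal{A}}$ for $i\notin\mathcal{A}$) are in hand.

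The final stage is to recover $W_{\bar{\mathcal{A}}}$. With all transmit signals available, the decoder regenerates, for each $i \notin \mathcal{A}$, a synthetic output $Y_i^{\prime n} = \sum_j H_{ij} X_j^n + Z_i^{\prime n}$ using locally drawn fresh noise $Z_i^{\prime n}$ with the true noise law; then $Y_{\bar{\mathcal{A}}}^{\prime n}$ is statistically identical to $Y_{\bar{\mathcal{A}}}^n$, so the original reliable decoders applied to it recover $W_{\bar{\mathcal{A}}}$ with vanishing error (fixing the fresh randomness to a good realization makes the estimator deterministic so that Fano applies). This establishes $\textsf{H}(W_{[K]} \mid Y_{\mathcal{A}}^n, Z_{\mathcal{A}}^n) \leq n\epsilon_n$. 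I would then close the bound by writing $n\sum_i R_i = \textsf{H}(W_{[K]}) \leq I(W_{[K]}; Y_{\mathcal{A}}^n, Z_{\mathcal{A}}^n) + n\epsilon_n$, and using that conditioning on all messages renders $X_{[K]}^n$ deterministic so that $(Y_{\mathcal{A}}^n,Z_{\mathcal{A}}^n)$ given $W_{[K]}$ is an invertible shift of $Z_{\mathcal{A}}^n$, which gives $I(W_{[K]}; Y_{\mathcal{A}}^n, Z_{\mathcal{A}}^n) = h(Y_{\mathcal{A}}^n \mid Z_{\mathcal{A}}^n) \leq h(Y_{\mathcal{A}}^n) \leq \sum_{i\in\mathcal{A}} h(Y_i^n) \leq n|\mathcal{A}|\log P + n\,o(\log P)$, since each $Y_i$ has power $O(P)$. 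Dividing by $n\log P$ and letting $P\to\infty$ yields $\eta = \sum_i d_i \leq |\mathcal{A}|$.

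The step I expect to be the main obstacle is the clean handling of the noise term $f_2(Z_{\mathcal{A}})$: reconstructing $X_{\bar{U}_{\mathcal{A}}}^n$ from $Y_{\mathcal{A}}^n$ alone is correct only up to the additive perturbation $f_2(Z_{\mathcal{A}}^n)$, and arguing directly that such a perturbation is harmless would require a degradedness or SNR-penalty argument that is delicate because $f_2(Z_{\mathcal{A}})$ need not share the law of the channel noise. Supplying $Z_{\mathcal{A}}^n$ as a genie circumvents this entirely, making the reconstruction exact at a cost of only $h(Z_{\mathcal{A}}^n) = n\,O(1) = n\,o(\log P)$; this is precisely where the assumption that $f_2$ does not depend on $P$ enters, since it guarantees both that $f_2(Z_{\mathcal{A}}^n)$ is computable from the genie and that the genie itself carries no degrees of freedom.
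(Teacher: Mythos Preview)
Your overall reconstruction strategy mirrors the paper's: decode $W_{\cal A}$ from $Y_{\cal A}^n$, rebuild $X_{U_{\cal A}}^n$, invoke $f_1$ to recover $X_{\bar U_{\cal A}}^n$ up to $f_2(Z_{\cal A})$, then synthesize $Y_{\bar{\cal A}}^n$ and decode the remaining messages. The gap is in how you dispose of the residual $f_2(Z_{\cal A})$.

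Handing $Z_{\cal A}^n$ to the decoder as a genie is too generous, and your final entropy identity is where the argument actually breaks. Conditioned on $W_{[K]}$, the pair $(Y_{\cal A}^n,Z_{\cal A}^n)$ lives on an $n|{\cal A}|$-dimensional manifold inside a $2n|{\cal A}|$-dimensional space, so its joint differential entropy is $-\infty$; equivalently, given $Z_{\cal A}^n$ the signal $Y_{\cal A}^n$ is a discrete random variable (one value per message tuple) and $h(Y_{\cal A}^n\mid Z_{\cal A}^n)=-\infty$. Hence neither the equality $I(W_{[K]};Y_{\cal A}^n,Z_{\cal A}^n)=h(Y_{\cal A}^n\mid Z_{\cal A}^n)$ nor the subsequent bound survives. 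More fundamentally, $(Y_{\cal A}^n,Z_{\cal A}^n)$ determines the noiseless linear combinations ${\bm H}_{{\cal A},[K]}X_{[K]}^n$, so $I(W_{[K]};Y_{\cal A}^n,Z_{\cal A}^n)$ can equal $\textsf{H}(W_{[K]})=n\sum_i R_i$ exactly, which yields no converse at all. The intuition that ``$h(Z_{\cal A}^n)=nO(1)$ so the genie costs no DoF'' fails because revealing $Z_{\cal A}^n$ jointly with $Y_{\cal A}^n$ creates a noiseless observation, not a bounded-entropy side channel.

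The fix is precisely the route the paper takes and that you flagged as ``delicate'': do \emph{not} supply the noise, and instead show directly that $h(Y_{\bar{\cal A}}^n\mid W_{\cal A},Y_{\cal A}^n)\leq n\,o(\log P)$. From $(W_{\cal A},Y_{\cal A}^n)$ one can form $f_1(Y_{\cal A},X_{U_{\cal A}})=X_{\bar U_{\cal A}}+f_2(Z_{\cal A})$ and hence reconstruct $Y_{\bar{\cal A}}$ up to the additive error $Z_{\bar{\cal A}}-{\bm H}_{\bar{\cal A},\bar U_{\cal A}}f_2(Z_{\cal A})$, whose variance does not grow with $P$ because $f_2$ is $P$-independent; this bounds the conditional differential entropy by a constant per symbol. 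Combining with $h(Y_{\cal A}^n)\leq n|{\cal A}|\log P+n\,o(\log P)$ in the split $I(W_{\cal A};Y_{\cal A}^n)+I(W_{\bar{\cal A}};Y_{\bar{\cal A}}^n)$ (rather than via a single mutual-information term) completes the bound. This is exactly where the hypothesis on $f_2$ is used, and it avoids any degradedness argument: one only needs that a bounded-variance reconstruction error contributes $O(1)$, not $O(\log P)$, to differential entropy.
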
 
\begin{proof}
The proof is available in the Appendix. Here, we provide a sketch. Recall that $Y_{\cal A}=\{Y_i, i\in {\cal A}\}$, and $W_{\cal A}=\{W_i, i\in {\cal A}\}$, and note that $X_{U_{\cal A}}$ is the set of transmit signals that do not carry messages outside $W_{\cal A}$. Fix a reliable communication scheme for the considered $K-$user channel, and assume that there is only one centralized decoder that has access to the received signals $Y_{\cal A}$. We show that using the centralized decoder, the only uncertainty in recovering all the messages $W_{[K]}$ is due to the Gaussian noise signals. In this case, the sum DoF is bounded by $|{\cal A}|$, as it is the number of received signals used for decoding. 

Using $Y_{\cal A}$, the messages $W_{\cal A}$ can be recovered reliably, and hence, the signals $X_{\bar{U}_{\cal A}}$ can be reconstructed. Using $Y_{\cal A}$ and $X_{\bar{U}_{\cal A}}$, the remaining transmit signals can be approximately reconstructed using the function $f_1$ of the hypothesis. Finally, using all transmit signals, the received signals $Y_{\bar{\cal A}}$ can be approximately reconstructed, and the messages $W_{\bar{\cal A}}$ can then be recovered.  
\end{proof}

We note that Lemma~\ref{lem:dofouterbound} applies to all considered channel models. Now, we use it to prove a DoF upper bound for Wyner's model.

\begin{thm}\label{thm:Asymmetric Model}
The asymptotic per user DoF for Wyner's asymmetric model with CoMP transmission is given by,
\begin{equation}
\tau_1(M)=\frac{2M}{2M+1}, \forall M \in {\bm Z}^+.
\end{equation}
\end{thm}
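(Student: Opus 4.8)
The plan is to pair the achievability already in hand with a matching converse built on Lemma~\ref{lem:dofouterbound}. The lower bound $\tau_1(M)\ge \frac{2M}{2M+1}$ is immediate from Theorem~\ref{thm:lowerbound}, since $\frac{2M}{2M+1}\ge\frac12$ for every $M\ge1$; so the whole task is the converse $\tau_1(M)\le\frac{2M}{2M+1}$, i.e. $\eta_1(K,M)\le \frac{2M}{2M+1}K + o(K)$. Because reducing a message assignment never decreases the sum rate (Lemma~\ref{lem:usefulmsgdistribution}) and never violates the cooperation order, it suffices to bound $\eta$ over irreducible assignments. For $L=1$, Corollary~\ref{cor:usefulmsgdistribution} then forces ${\cal T}_i \subseteq \{i-M,\ldots,i+M-1\}$, which equivalently says that transmitter $j$ can carry only the messages $\{W_i : i \in \{j-M+1,\ldots,j+M\}\}$, a window of just $2M$ receiver indices. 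This window constraint is the structural fact the converse rests on.

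Next I would fix the receivers to discard once and for all, independently of the assignment: set $\bar{U}$-inducing discard set $\bar{\cal A} = \{(2M+1),\,2(2M+1),\,3(2M+1),\ldots\}\cap[K]$, one dropped receiver in every block of $2M+1$ consecutive indices, and ${\cal A}=[K]\setminus\bar{\cal A}$, so that $|{\cal A}| = \frac{2M}{2M+1}K + O(1)$. I then verify the hypothesis of Lemma~\ref{lem:dofouterbound} for this ${\cal A}$: I must exhibit $f_1,f_2$ with $f_2$ independent of $P$ and $f_1(Y_{\cal A},X_{U_{\cal A}}) = X_{\bar{U}_{\cal A}} + f_2(Z_{\cal A})$. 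The idea is to reconstruct every transmit signal $X_1,\ldots,X_K$ by zero-forcing-style chaining along $Y_j = H_{jj}X_j + H_{j,j+1}X_{j+1}+Z_j$: knowing $X_j$ and having $Y_j$ (i.e. $j\in{\cal A}$) yields $X_{j+1}$ up to the additive term $Z_j$, and symmetrically $X_{j+1}$ together with $Y_j$ yields $X_j$. Each dropped receiver severs the chain, splitting the transmit indices into runs lying strictly between consecutive elements of $\bar{\cal A}$.

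The heart of the argument is to seed each run with a transmit signal already known from the decoded messages, i.e. one in $U_{\cal A}$. In the run between the dropped receivers $k(2M+1)$ and $(k+1)(2M+1)$, take transmitter $j=k(2M+1)+M$. Its message window $\{j-M+1,\ldots,j+M\} = \{k(2M+1)+1,\ldots,k(2M+1)+2M\}$ contains no multiple of $2M+1$, hence no dropped receiver; therefore $j\in U_{\cal A}$ and $X_j$ is determined by $W_{\cal A}$. Chaining forward from $X_j$ using $Y_j,\ldots,Y_{(k+1)(2M+1)-1}$ and backward using $Y_{j-1},\ldots,Y_{k(2M+1)+1}$, all of which are kept receivers and hence available, reconstructs every $X$ with index in $\{k(2M+1)+1,\ldots,(k+1)(2M+1)\}$. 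Since these runs partition $[K]$, all of $X_{[K]}$ is recovered, in particular $X_{\bar{U}_{\cal A}}$, as a fixed linear combination of $Y_{\cal A}$, $X_{U_{\cal A}}$ and the noises $Z_{\cal A}$; the noise part defines $f_2$, whose coefficients are products of channel-gain ratios and so do not depend on $P$. Lemma~\ref{lem:dofouterbound} then gives $\eta_1(K,M)\le|{\cal A}| = \frac{2M}{2M+1}K + O(1)$, and letting $K\to\infty$ yields the converse, completing the characterization.

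I expect the main obstacle to be the combinatorial bookkeeping of the last paragraph: showing that the single uniform choice $\bar{\cal A}=\{m(2M+1)\}$ simultaneously (a) attains the target discarded fraction $\frac{1}{2M+1}$ and (b) guarantees, for every irreducible assignment at once, a clean seed transmitter in each run so that the chaining genuinely reaches all transmit signals; it is precisely the match between the block length $2M+1$ and the $2M$-wide message window of each transmitter that makes both hold without adapting $\bar{\cal A}$ to the assignment. A secondary point needing care is confirming that the accumulated-noise map $f_2$ is truly $P$-independent and that the first/last block boundary effects contribute only $o(K)$.
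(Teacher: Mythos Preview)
Your proposal is correct and mirrors the paper's own proof: both cite Theorem~\ref{thm:lowerbound} for achievability and, for the converse, apply Lemma~\ref{lem:dofouterbound} by discarding one receiver out of every block of $2M{+}1$, invoke Corollary~\ref{cor:usefulmsgdistribution} to guarantee a seed transmitter in $U_{\cal A}$ within each block, and then chain along the Wyner links to reconstruct all remaining transmit signals up to a $P$-independent linear function of $Z_{\cal A}$. The only differences are a cosmetic index shift---the paper drops receivers at positions $(2M{+}1)(j{-}1)+M{+}1$ with seeds at the multiples of $2M{+}1$ (and trims the first $M$ transmitters to handle the boundary), whereas you drop the multiples of $2M{+}1$ with seeds at $k(2M{+}1)+M$---and a minor convention slip in your link equation (in the paper's $L{=}1$ model $Y_j=H_{j,j}X_j+H_{j,j-1}X_{j-1}+Z_j$, so your forward/backward chaining directions should be swapped accordingly).
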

\begin{proof}
The lower bound follows from Theorem~\ref{thm:lowerbound}. In order to prove the converse, we use Lemma~\ref{lem:dofouterbound} with a set ${\cal A}$ of size $K\frac{2M}{2M+1}+o(K)$. We also prove the upper bound for the channel after removing the first $M$ transmitters $\left(X_{[M]}\right)$, while noting that this will be a valid bound on $\tau_1(M)$ since the number of removed transmitters is $o(K)$. 

Inspired by the coding scheme in the proof of Theorem~\ref{thm:lowerbound}, we define the set ${\cal A}$ as the set of receivers that are {\em active} in the coding scheme. i.e., the complement set ${\bar{\cal A}}=\{i: i\in[K], i= (2M+1)(j-1)+M+1, j \in {\bm{Z}^+}\}$. We know from Corollary~\ref{cor:usefulmsgdistribution} that messages belonging to the set $W_{\bar{\cal A}}$ do not contribute to transmit signals with indices that are multiples of $2M+1$, i.e., $i \in U_{\cal A}$ for all $i\in[K]$ that is a multiple of $2M+1$. More precisely, let the set ${\cal S}$ be defined as follows:
\begin{equation*}
{\cal S} = \{i: i\in[K], i \text{ is a multiple of } 2M+1\},
\end{equation*}
then ${\cal S}\subseteq {U_{\cal A}}$. In particular, $X_{\cal S} \subseteq X_{U_{\cal A}}$, and hence it suffices using Lemma~\ref{lem:dofouterbound} to show the existence of linear functions $f_1$ and $f_2$ such that $f_1\left(Y_{\cal A},X_{\cal S}\right)=X_{\bar{S}}\backslash X_{[M]}+f_2(Z_{\cal A})$. 

In what follows we show how to reconstruct a noisy version of the signals in the set 
$\{X_{M+1},X_{M+2},\ldots,X_{2M}\}$ $\cup$
$\{X_{2M+2},X_{2M+3},\ldots,X_{3M+1}\}$, where the reconstruction noise depends only on $Z_{\cal A}$ in a linear fashion. Then it will be clear by symmetry how to reconstruct the rest of transmit signals in the set $X_{\bar{\cal S}}\backslash X_{[M]}$. Since $X_{2M+1} \in X_{\cal S}$ and ${Y_{2M+1}}$ is also given, $X_{2M}+Z_{2M+1}$ can be reconstructed. Now, with the knowledge of $X_{2M}+Z_{2M+1}$ and $Y_{2M}$, we can reconstruct $X_{2M-1}+Z_{2M}-Z_{2M+1}$, and so by iterative processing, a noisy version of all transmit signals in the set $\{X_{M+1},X_{M+2},\ldots,X_{2M}\}$ can be reconstructed, where the noise is a linear function of the signals $\{Z_{M+2},Z_{M+3},\ldots,Z_{2M+1}\}$. In a similar fashion, given $X_{2M+1}$ and $Y_{2M+2}$, the signal $X_{2M+2}+Z_{2M+2}$ can be reconstructed. Then with the knowledge of $Y_{2M+3}$, we can reconstruct $X_{2M+3}+Z_{2M+3}-Z_{2M+2}$, and we can proceed along this path to reconstruct a noisy version of all transmit signals in the set $\{X_{2M+2},X_{2M+3},\ldots,X_{3M+1}\}$, where the noise is a linear function of the signals $\{Z_{2M+2},Z_{2M+3},\ldots,Z_{3M+1}\}$. This proves the existence of linear functions $f_1$ and $f_2$ such that $f_1(Y_1,X_{\cal S})=X_{\bar{S}}\backslash X_{[M]}+f_2(Z_{\cal A})$, and the coefficients for $f_2$ do not depend on the transmit power constraint $P$, and so by Lemma~\ref{lem:dofouterbound} we obtain the converse of Theorem~\ref{thm:Asymmetric Model}.
\end{proof}

In Figure~\ref{fig:mthree} (b), we illustrate how the proof works for the case where $M=3$. Note that the missing received signals $\{Y_4,Y_{11},\ldots\}$ in the upper bound proof correspond to the inactive receivers in the coding scheme.

\begin{figure}
  \centering 
\subfloat[]{\label{fig:mthreejonecs}\includegraphics[height=0.283\textwidth]{M3J1CodingSchemeV2.eps}}                
\quad\quad\quad\quad\subfloat[]{\label{fig:mthreejoneub}\includegraphics[width=0.17\textwidth]{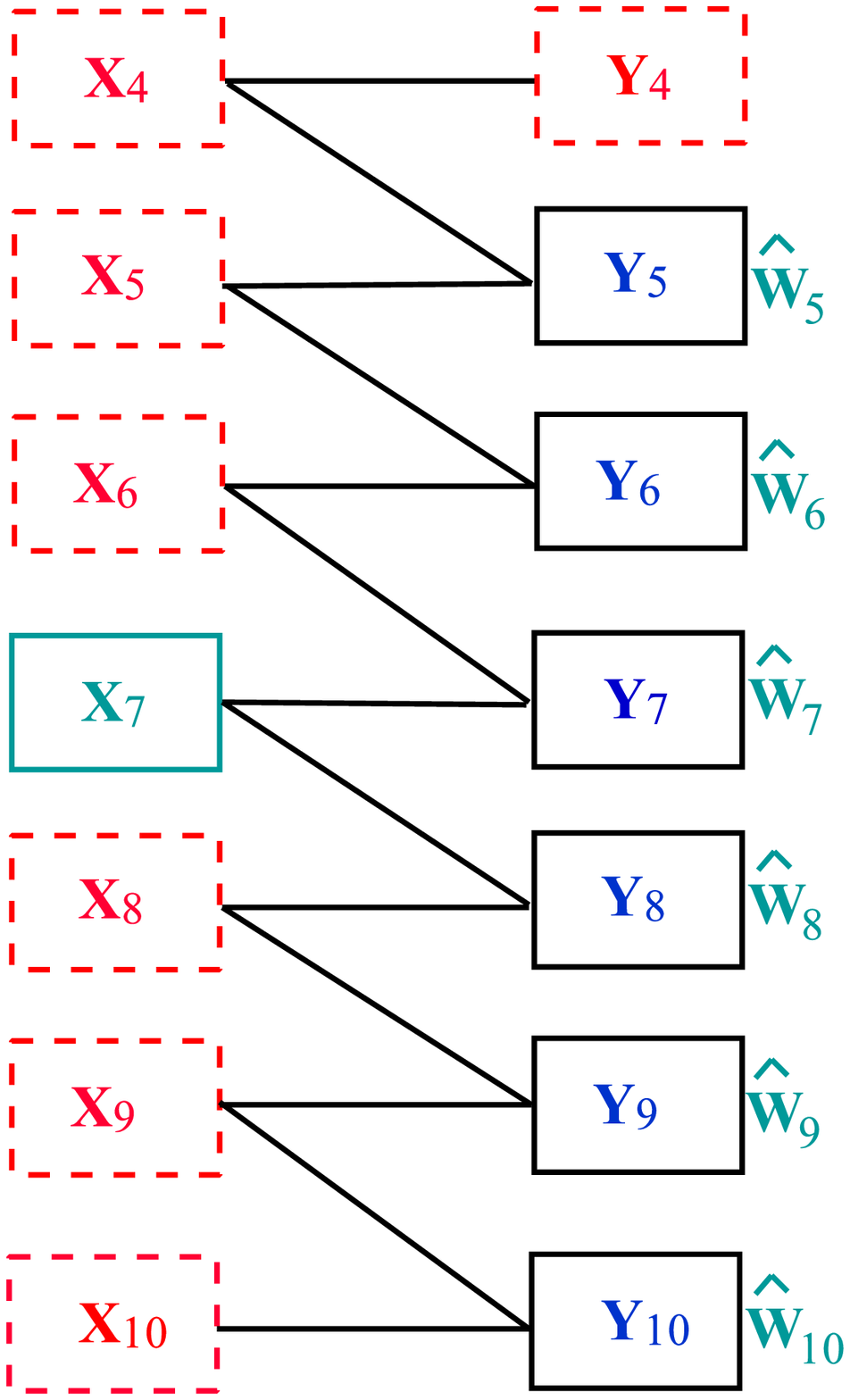}}
  \caption{Figure illustrating the proof of Theorem~\ref{thm:Asymmetric Model} for $M=3$, $\tau(3)=\frac{6}{7}$. In ($a$), the message assignments in the first cluster for the proposed coding scheme are illustrated. Note that both $X_7$ and $Y_4$ are deactivated. In ($b$), an illustration of the upper bound is shown. The messages $W_4$ and $W_{11}$ cannot be available at $X_7$, hence it can be reconstructed from $W_{\cal A}$. A noisy version of all transmit signals shown in figure can be reconstructed from $X_7$ and the signals $\{Y_5,\ldots,Y_{10}\}$, where the reconstruction noise is a linear function of $\{Z_5,\ldots,Z_{10}\}$.}
  \label{fig:mthree}
\end{figure}

\subsubsection{No Cooperation}
We note that even for the case of {\em no cooperation}, an asymptotic per user DoF of more than $\frac{1}{2}$ per user DoF is achievable, i.e., $\tau_1(1) = \frac{2}{3}$. Also, it is straightforward to see that the interference alignment scheme can be generalized to show that $\tau_L(1) \geq \frac{1}{2}$ for any locally connected channel with parameter $L$. The next theorem generalizes the upper bound in~\cite{Madsen-Nosratinia} for locally connected channels, where each message can be available at one transmitter that is not necessarily the transmitter carrying its own index. In particular, we show that $\tau_L(1) > \frac{1}{2}$ only if $L = 1$. 

The following lemma serves as a building block for the upper bound proof in Theorem~\ref{thm:nocooperation} below. We define ${\cal R}_i$ as the set of indices of received signals that are connected to transmitter $X_i$, i.e., ${\cal R}_i = \{i,i+1,\ldots,i+L\}$. Note that as we are considering the case of {\em no cooperation}, hence, ${\cal T}_i$ contains only one element. Recall that $d_i$ denotes the available DoF for the communication of message $W_i$.

\begin{lem}\label{lem:cornerstone}
If ${\cal T}_i = \{X_j\}$, then $d_i + d_s \leq 1, \forall s \in {\cal R}_j, s \neq i$.
\end{lem}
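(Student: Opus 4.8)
The plan is to obtain this pairwise bound as a direct consequence of the reconstruction-based outer bound already available as Lemma~\ref{lem:dofouterbound}, after first collapsing the network down to just the two messages $W_i$ and $W_s$ by a genie step. The observation that makes everything work is that the hypothesis $s\in{\cal R}_j$ is precisely the condition $H_{sj}\not\equiv 0$, i.e. $Y_s$ genuinely carries the single transmit signal $X_j$ on which $W_i$ lives.

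First I would fix an achievable DoF tuple and hand every receiver the genie $W_{[K]\setminus\{i,s\}}$; since the messages are independent this cannot decrease $d_i$ or $d_s$. In the resulting channel only $W_i$ and $W_s$ remain unknown, and every transmit signal carrying neither of them is a deterministic function of the genie-known messages. Because ${\cal T}_i=\{X_j\}$, the \emph{only} transmit signal depending on $W_i$ is $X_j$. The maximum sum DoF $\eta$ of this two-message channel is at least $d_i+d_s$, so it suffices to show $\eta\leq 1$ for it. I would then apply Lemma~\ref{lem:dofouterbound} with ${\cal A}=\{s\}$. Here $U_{\cal A}=[K]\setminus{\cal T}_i=[K]\setminus\{j\}$, so $\bar U_{\cal A}=\{j\}$ and $X_{\bar U_{\cal A}}=X_j$, and the lemma only asks that $X_j$ be reconstructible from $Y_s$ and $X_{U_{\cal A}}$ up to an additive, power-independent noise term. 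Since $s\in{\cal R}_j$ gives $H_{sj}\not\equiv 0$, the single linear elimination $f_1(Y_s,X_{U_{\cal A}})=\bigl(Y_s-\sum_{k\neq j}H_{sk}X_k\bigr)/H_{sj}=X_j+Z_s/H_{sj}$ does exactly this, with $f_2(Z_s)=Z_s/H_{sj}$ whose coefficient does not depend on $P$. Lemma~\ref{lem:dofouterbound} then yields $\eta\leq|{\cal A}|=1$, hence $d_i+d_s\leq 1$.

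I expect the genie reduction to be the step needing the most care: it is precisely what enlarges $U_{\cal A}$ to $[K]\setminus\{j\}$ so that the lone guaranteed cross link $X_j\to Y_s$ suffices for the single-equation reconstruction. Applied to the full $K$-user network the set $U_{\cal A}$ would be far too small and the reconstruction would fail, so the whole argument hinges on isolating $W_i$ and $W_s$ first and on verifying that doing so leaves $d_i+d_s$ bounded by the $\eta$ that Lemma~\ref{lem:dofouterbound} controls. A small remark I would add is that choosing ${\cal A}=\{s\}$ rather than ${\cal A}=\{i\}$ is deliberate: the cross link from $W_i$'s transmitter $j$ into receiver $s$ is guaranteed by the hypothesis, whereas the reverse link (from $W_s$'s transmitter into receiver $i$) need not exist, so routing the reconstruction through $Y_s$ handles all cases uniformly. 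The remaining linear-algebra in $f_1$ is routine and I would not belabor it.
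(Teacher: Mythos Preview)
Your proposal is correct and is essentially identical to the paper's own proof: the paper also sets all messages other than $W_i,W_s$ to be deterministic, applies Lemma~\ref{lem:dofouterbound} with ${\cal A}=\{s\}$, and uses the same single-equation reconstruction $f_1(Y_s,X_{[K]\setminus\{j\}})=H_{s,j}^{-1}(Y_s-{\bm H}_{\{s\},[K]\setminus\{j\}}X_{[K]\setminus\{j\}})=X_j+H_{s,j}^{-1}Z_s$. Your additional remarks on why the genie step is needed and why one must take ${\cal A}=\{s\}$ rather than ${\cal A}=\{i\}$ are useful elaborations but do not change the underlying argument.
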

\begin{proof}
We assume that all messages other than $W_i$ and $W_s$ are deterministic, and then apply Lemma~\ref{lem:dofouterbound} with the set ${\cal A}=\{s\}$, and functions $f_1$ and $f_2$ defined such that the following holds,
\begin{eqnarray*}
f_1\left(Y_s,X_{[K]\backslash\{j\}}\right)&=& H_{s,j}^{-1}\left(Y_s-{\bm H}_{\{s\},[K]\backslash\{j\}}X_{[K]\backslash\{j\}}\right)
\\&=& X_j + H_{s,j}^{-1}Z_s
\\&=& X_j + f_2(Z_s),
\end{eqnarray*}
and then the bound follows.
\end{proof}

\begin{thm}\label{thm:nocooperation}
Without cooperation ($M=1$), the asymptotic per user DoF of locally connected channels is given by,

$\tau_L(1)=
\begin{cases}
\frac{2}{3},\quad  &\text{if} \quad L=1,\\
\frac{1}{2},\quad &\text{if} \quad L \geq 2.
\end{cases}$

\end{thm}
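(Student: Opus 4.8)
The plan is to dispatch the two cases separately. For $L=1$ the claim $\tau_1(1)=\frac{2}{3}$ is exactly Theorem~\ref{thm:Asymmetric Model} specialized to $M=1$, so nothing new is needed. For $L\ge 2$ the lower bound $\tau_L(1)\ge\frac12$ is the interference-alignment bound already recorded in Theorem~\ref{thm:lowerbound}, so the whole task reduces to proving the matching upper bound $\tau_L(1)\le\frac12$ for every $L\ge 2$. Throughout I would work with an arbitrary message assignment and, invoking Corollary~\ref{cor:usefulmsgdistribution}, assume it is irreducible; since $M=1$ this means each message $W_i$ sits at a single transmitter $X_{j(i)}$ connected to $Y_i$, i.e. $i\in{\cal R}_{j(i)}=\{j(i),\dots,j(i)+L\}$, equivalently $j(i)\in[i-L,i]$.

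The engine of the bound is Lemma~\ref{lem:cornerstone}: for each $i$ it yields $d_i+d_s\le 1$ for every $s\in{\cal R}_{j(i)}\setminus\{i\}$, i.e. $L$ pairwise constraints per message. I would encode these as a graph $G$ on vertex set $[K]$ whose edges are exactly the pairs $\{i,s\}$ with $s\in{\cal R}_{j(i)}\setminus\{i\}$, and then bound the DoF by the linear program $\max\sum_i d_i$ subject to $d_i+d_s\le 1$ on edges and $d\ge 0$ (the constraint $d_i\le1$ is automatic since every interior vertex has degree at least $L\ge 2$). By LP duality this maximum equals the minimum fractional edge cover of $G$, and a one-line double count shows the fractional edge cover is always at least $K/2$; the entire content of the theorem is therefore that for $L\ge2$ it is at most $K/2+o(K)$, i.e. that $G$ admits an (almost) fractional perfect matching.

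The crux is a Hall-type deficiency estimate: I claim that for every independent set $S$ of $G$ one has $|N(S)|\ge|S|$. Each $i\in S$ contributes exactly $L$ window-mates ${\cal R}_{j(i)}\setminus\{i\}$, all of which lie in $N(S)$ (a mate inside $S$ would be an edge inside $S$). The key observation is that a fixed vertex $v$ can be a window-mate of at most two elements of $S$: if $s_1<s_2<v$ both had $v$ in their window, then since ${\cal R}_{j(s_1)}$ is an interval containing both $s_1$ and $v$ it would also contain $s_2$, forcing the edge $\{s_1,s_2\}$ and contradicting independence, so at most one element of $S$ below $v$ and, symmetrically, at most one above $v$ can use $v$ as a mate. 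Double counting window-mates then gives $L|S|\le 2|N(S)|$, whence $|N(S)|\ge\frac{L}{2}|S|\ge|S|$ for $L\ge2$. This is precisely the condition guaranteeing a fractional perfect matching, so the fractional edge cover equals $K/2$ up to the $O(L)=o(K)$ correction from windows spilling past the ends of the line, giving $\sum_i d_i\le K/2+o(K)$ and hence $\tau_L(1)\le\frac12$.

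The main obstacle, and the step I would be most careful about, is the passage from the deficiency inequality $|N(S)|\ge|S|$ to the numerical DoF bound: one must either invoke the fractional-perfect-matching criterion or, self-contained, argue through LP duality that a fractional edge cover exceeding $K/2$ would force an independent set (e.g. $\{v:d_v^*>1/2\}$ for an optimal solution) violating the deficiency bound. It is also worth stressing where $L\ge2$ is essential: the estimate $|N(S)|\ge\frac{L}{2}|S|$ degenerates at $L=1$, and indeed for $L=1$ the adversary can decouple the line into independent $3$-vertex paths, whose fractional edge cover is $\frac{2}{3}$ per vertex --- exactly the mechanism making $\tau_1(1)=\frac{2}{3}$ strictly larger than $\frac12$, so the proof must genuinely use the extra window-mate available only when $L\ge2$.
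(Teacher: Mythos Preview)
Your argument is correct and genuinely different from the paper's. Both proofs rest on the same pairwise inequality from Lemma~\ref{lem:cornerstone}, namely $d_i+d_s\le 1$ whenever $s\in{\cal R}_{j(i)}\setminus\{i\}$, but they exploit it in different ways. The paper works in the original model~\eqref{eq:channel} and runs a bare-hands induction along the line: it carries the invariant that $\sum_{i\le k}d_i\le k/2$ together with $d_{k-1}+d_k\le1$, and advances by one or two steps depending on whether the transmitter carrying $W_{k+1}$ is connected to $Y_{k+2}$ (yielding $d_{k+1}+d_{k+2}\le1$) or to both $Y_k$ and $Y_{k-1}$ (yielding two new pairwise bounds which, averaged with the carried inequality, give $d_{k-1}+d_k+d_{k+1}\le 3/2$). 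This is elementary, uses no LP machinery, and produces the slightly sharper finite-$K$ bound $\eta_L(K,1)\le (K+1)/2$.

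Your route packages all the pairwise constraints into a graph and reduces the DoF bound to showing that this graph has a fractional perfect matching up to boundary effects; the Hall-type deficiency estimate $|N(S)|\ge\frac{L}{2}|S|$ via the ``at most one window from each side'' observation is clean and makes transparent exactly where $L\ge2$ enters. The price is that you must either cite or prove the equivalence between the condition $|N(S)|\ge|S|$ for all independent $S$ and the existence of a fractional perfect matching; the half-integrality argument you sketch (an LP optimum has $d_i\in\{0,\tfrac12,1\}$, and the set $\{i:d_i=1\}$ is independent with neighborhood contained in $\{i:d_i=0\}$) is the right way to make this self-contained and should be spelled out in a full write-up. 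Your approach buys modularity and would adapt to other connectivity patterns; the paper's buys a completely elementary proof with a sharper additive constant.
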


\vspace{5 mm}
\begin{proof}
The case where $L=1$ is a special case of the result in Theorem~\ref{thm:Asymmetric Model}. The lower bound for the case where $L \geq 2$ follows by assigning each message to the transmitter with the same index, and a simple extension of the asymptotic interference alignment scheme of~\cite{Cadambe-IA}, and hence, it suffices to show that,
\begin{equation}
\tau_L(1) \leq \frac{1}{2}, \forall L\geq 2.
\end{equation}
In this proof, we use the locally connected channel model defined in~\eqref{eq:channel}. Each transmitter is connected to $\left \lfloor \frac{L}{2} \right \rfloor$ preceding receivers and $\left \lceil \frac{L}{2} \right \rceil$ succeeding receivers. In order to prove the theorem statement, we establish the stronger statement,
\begin{equation}\label{eq:stronger}
\eta_L(K,1) \leq \frac{K+1}{2}, \forall K, \forall L \geq 2.
\end{equation}
We prove~\eqref{eq:stronger} by induction. The basis to the induction step is given by the following.
\begin{equation}\label{eq:basisnocoop}
\text{For }M=1, \forall L, d_1 + d_2 \leq 1.
\end{equation}
The proof of~\eqref{eq:basisnocoop} follows from Lemma~\ref{lem:cornerstone} and the fact that all transmitters connected to $Y_1$ are also connected to $Y_2$. In order to state the induction step, we first define $B_k$ as a boolean variable that is true if and only if the following is true:
\begin{itemize}
\item $\frac{\sum_{i=1}^{k} d_i}{k} \leq \frac{1}{2}$.
\item $d_{k-1} + d_k \leq 1$.
\end{itemize}
The induction step is given by the following.
\begin{equation}\label{eq:inductionstepnocoop}
\text{For }L\geq 2, k \geq 2, \text{ if } B_k \text{ is true}, \text{ then either } B_{k+1} \text{ or } B_{k+2} \text{ is true.}
\end{equation}
In order to prove~\eqref{eq:inductionstepnocoop}, consider the assignment of message $W_{k+1}$, and note that $W_{k+1}$ is available at a transmitter connected to $Y_{k+1}$. Now, note that $\forall L \geq 2$, the channel model of~\eqref{eq:channel} implies that any transmitter connected to $Y_{k+1}$ is either connected to $Y_{k+2}$, or to both $Y_k$ and $Y_{k-1}$. The proof follows by considering these two cases separately.

{\bf Case 1:} If $W_{k+1}$ is available at a transmitter that is connected to $Y_{k+2}$, then it follows from Lemma~\ref{lem:cornerstone} that $d_{k+1}+d_{k+2} \leq 1$. Since $B_k$ is true, it follows that $\frac{\sum_{i=1}^{k} d_i}{k} \leq \frac{1}{2}$, and hence, $\frac{\sum_{i=1}^{k+2} d_i}{k+2} \leq \frac{1}{2}$. In this case,~\eqref{eq:inductionstepnocoop} holds since $B_{k+2}$ is true.

{\bf Case 2:} If $W_{k+1}$ is available at a transmitter that is connected to both $Y_k$ and $Y_{k-1}$, then it follows from Lemma~\ref{lem:cornerstone} that $d_{k+1}+d_k \leq 1$, and $d_{k+1} + d_{k-1} \leq 1$. Now, since $B_k$ is true, it follows that $d_k+d_{k-1} \leq 1$, and hence,
\begin{equation}\label{eq:proofone}
\frac{d_{k+1}+d_k+d_{k-1}}{3} \leq \frac{1}{2}.
\end{equation}
Also, since $B_k$ is true, we know that $\frac{\sum_{i=1}^{k-2} d_i}{k-2} \leq \frac{1}{2}$, and hence, we get from~\eqref{eq:proofone} that $\frac{\sum_{i=1}^{k+1} d_i}{k+1} \leq \frac{1}{2}$. In this case,~\eqref{eq:inductionstepnocoop} holds since $B_{k+1}$ is true. 

It follows by induction from~\eqref{eq:basisnocoop} and~\eqref{eq:inductionstepnocoop} that it is either the case that $B_{K-1}$ is true, or $B_{K}$ is true. If $B_{K-1}$ is true, then $\sum_{i=1}^{K-1} d_i \leq \frac{K-1}{2}$, and the DoF number $\eta \leq \frac{K+1}{2}$. If $B_K$ is true, then it follows that the dof number $\eta \leq \frac{K}{2}$.

\end{proof}

\section{Discussion}\label{sec:discussion}

There are two design parameters in the considered problem, the message assignment strategy satisfying the cooperation order constraint, and the design of transmit beams. We characterized the asymptotic per user DoF when one of the design parameters is restricted to a special choice, i.e., restricting message assignment strategies by a local cooperation constraint or restricting the design of transmit beams to zero-forcing transmit beams. The restriction of one of the design parameters can significantly simplify the problem because of the inter-dependence of the two design parameters.  On one hand, the achievable scheme is enabled by the choice of the message assignment strategy, and on the other hand, the assignment of messages to transmitters is governed by the technique followed in the design of transmit beams, e.g. zero-forcing transmit beamforming or interference alignment. In the following, we discuss each of the design parameters.

\subsection{Message Assignment Strategy}

The assignment of each message to more than one transmitter (CoMP transmission) creates a virtual Multiple Input Single Output (MISO) network. A real MISO network, where multiple dedicated antennas are assigned to the transmission of each message, differs from the created virtual one in two aspects. First, in a CoMP transmission setting, the same transmit antenna can carry more than one message. Second, for locally connected channels, the number of receivers at which a message causes undesired interference depends on the number of transmit antennas carrying the message. 

For fully connected channels, the number of receivers at which a message causes undesired interference is the same regardless of the size of the transmit set as long as it is non-empty. The only aspect that governs the assignment of messages to transmitters is the pattern of overlap between transmit sets corresponding to different messages. It is expected that the larger the sizes of the intersections between sets of messages carried by different transmit antennas, the more dependent the coefficients of the virtual MISO channel are, and hence, the lower the available DoF. For the spiral assignments of messages considered in~\cite{Annapureddy-ElGamal-Veeravalli-IT11}, $|{\cal T}_i \cap {\cal T}_{i+1}| = M-1$, and the same value holds for the size of the intersection between sets of messages carried by successive transmitters. In general, local cooperation implies large intersections between sets of messages carried by different transmitters, and hence, the negative conclusion we reached for $\tau^{(\textrm{loc})}(M)$.  

For the case where we are restricted to zero-forcing transmit beamforming as in Section~\ref{sec:dofgains}, the number of receivers at which each message causes undesired interference governs the choice of transmit sets, and hence, we saw that for locally connected channels, the message assignment strategy illustrated in Theorem~\ref{thm:lowerbound} selects transmit sets that consist of successive transmitters, to minimize the number of receivers at which each message should be cancelled. This strategy is optimal under the restriction to zero-forcing transmit beamforming schemes.

\subsection{Design of Transmit Beams}

While it was shown in~\cite{Annapureddy-ElGamal-Veeravalli-IT11} that CoMP transmission accompanied by both zero-forcing transmit beams and asymptotic interference alignment can achieve a DoF cooperation gain beyond what can be achieved using only transmit zero-forcing, this is not obvious for locally connected channels. Unlike in the fully connected channel, the addition of a transmitter to a transmit set in a locally connected channel may result in an increase in the number of receivers at which the message causes undesired interference. 

We note that unlike asymptotic interference alignment scheme, the zero-forcing transmit beamforming scheme illustrated in Section~\ref{sec:dofgains} does not need symbol extensions, since it achieves the stated DoF of Theorem~\ref{thm:lowerbound} in one channel realization. 
However, it is not clear whether asymptotic interference alignment can be used to show an asymptotic per user DoF cooperation gain beyond that achieved through simple zero-forcing transmit beamforming; we believe that the answer to this question is closely related to both problems that remain open after this work, i.e., characterizing $\tau(M)$ and $\tau_L(M)$. 

\section{Conclusions}\label{sec:conclusions}

We studied the DoF gain achieved through CoMP transmission. In particular, it was of interest to know whether the achievable gain scales linearly with $K$ as it goes to infinity, under a cooperation constraint that only limits the number of transmitters at which any message can be available by a cooperation order $M$. We showed that the answer is negative for the fully connected channel where message assignment strategies satisfy the local cooperation constraint, as well as all possible message assignments for the case where $M=2$. The problem is still open for fully connected channels and values of $M \geq 3$.

For locally connected channels where each transmitter is connected to the receiver carrying the same index as well as $L$ neighboring receivers, we showed that the asymptotic per user DoF is lower bounded by $\max\left\{\frac{1}{2},\frac{2M}{2M+L}\right\}$. The achieving coding scheme is simple as it relies only on zero-forcing transmit beamforming. We showed that this lower bound is tight for the case where $L=1$. In particular, the characterized asymptotic per user DoF for that case is $\frac{2M}{2M+1}$, and is higher than previous results in~\cite{Lapidoth-Shamai-Wigger-ISIT07}, and~\cite{Shamai-Wigger-ISIT11}.

We also revealed insights on the optimal way of assigning messages to transmitters under a cooperation order constraint. For instance, we considered a local cooperation constraint, where each message can only be available at a neighborhood of transmitters whose size does not scale linearly with the number of users. While we showed that local cooperation does not achieve a scalable DoF gain for the fully connected channel, we also showed that local cooperation is optimal for locally connected channels. Furthermore, we have shed light on the intimate relation between the selection of message assignments and the design of transmit beams. We have shown that assigning messages to successive transmitters is beneficial for zero-forcing transmit beamforming in locally connected channels as it minimizes the number of receivers at which each message causes undesired interference. However, the same message assignment strategy can be an impediment to other techniques such as asymptotic interference alignment, because the overlap of sets of messages carried by transmit antennas is large for this assignment of messages.

\section*{Acknowledgment}
The authors thank Dr. Mich\`{e}le Wigger and Dr. Shlomo Shamai (Shitz) for interesting discussions related to this paper.

\appendix
\section*{Auxiliary Lemmas for Large Fully Connected Networks Upper Bounds}

\begin{lem}\label{lem:basis}
\begin{equation*}
\text{There exists } i \in [K] \text{ such that } |C_{\{i\}}| \leq M.
\end{equation*}
\end{lem}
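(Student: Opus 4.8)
The plan is to prove this basis lemma by a straightforward double-counting (averaging) argument; no induction or case analysis is needed here. The key first step is to reinterpret the quantity $|C_{\{i\}}|$. By the definition of $C_{\cal A}$ as the set of messages carried by transmitters with indices in ${\cal A}$, we have $C_{\{i\}} = \{j : i \in {\cal T}_j\}$, so $|C_{\{i\}}|$ is precisely the number of messages that transmitter $i$ carries. The lemma therefore asserts that at least one transmitter carries no more than $M$ messages.

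Next I would count the total number of message-transmitter incidences, i.e.\ the pairs $(i,j)$ with $i \in {\cal T}_j$, in two different ways. Summing over the receivers $j$ gives $\sum_{j=1}^{K} |{\cal T}_j| \leq KM$, since each transmit set has size at most $M$ by the cooperation order constraint~\eqref{eq:coop_order}. Summing over the transmitters $i$ gives $\sum_{i=1}^{K} |C_{\{i\}}| = \sum_{i=1}^{K} |\{j : i \in {\cal T}_j\}|$. Both expressions enumerate the same set of incidences, so they coincide, yielding
\begin{equation}
\sum_{i=1}^{K} |C_{\{i\}}| = \sum_{j=1}^{K} |{\cal T}_j| \leq KM.
\end{equation}

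It follows that the average value of $|C_{\{i\}}|$ over $i \in [K]$ is at most $M$, and since the minimum over $i$ cannot exceed the average, there must exist an index $i \in [K]$ with $|C_{\{i\}}| \leq M$, which is the claim. There is no genuine obstacle in this argument; the only point requiring a moment of care is the correct identification of $|C_{\{i\}}|$ with the load on transmitter $i$ in the bipartite message-transmitter incidence structure, after which the statement is an immediate pigeonhole consequence of the bound $|{\cal T}_j| \leq M$. This lemma then serves, together with the induction step in Lemma~\ref{lem:inductionstep}, as the base case for the inductive construction of the set ${\cal S}$ used in the proofs of Theorems~\ref{thm:tauouterbound} and~\ref{thm:taufthree}.
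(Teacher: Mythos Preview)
Your proof is correct and follows essentially the same approach as the paper: both establish the identity $\sum_{i=1}^{K} |C_{\{i\}}| = \sum_{i=1}^{K} |{\cal T}_i| \leq MK$ and conclude by the pigeonhole principle. Your version is slightly more detailed in justifying the double-counting step, but the argument is the same.
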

\begin{proof}
The statement follows by the pigeonhole principle, since the following holds,
\begin{equation}
\sum_{i=1}^{K} |C_{\{i\}}| = \sum_{i=1}^{K} |{\cal T}_i| \leq MK.
\end{equation} 
\end{proof}

\begin{lem}\label{lem:inductionstep}
For $M\geq 2$, if $\exists {\cal A} \subset [K]$ such that $|{\cal A}|=n < K$, and $|C_{\cal A}| \leq (M-1)n+1$, then $\exists {\cal B} \subseteq [K]$ such that $|{\cal B}|=n+1$, and $|C_{\cal B}| \leq (M-1)(n+1)+1$. 
\end{lem}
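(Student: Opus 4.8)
The plan is to realize the passage from ${\cal A}$ to ${\cal B}$ by adjoining a single transmitter index. Given ${\cal A}$ with $|{\cal A}|=n<K$ and $|C_{\cal A}|\leq (M-1)n+1$, I would add a carefully chosen $j\in[K]\setminus{\cal A}$ and set ${\cal B}={\cal A}\cup\{j\}$; note $[K]\setminus{\cal A}\neq\phi$ since $n<K$. Writing $N=\{i:{\cal T}_i\cap{\cal A}=\phi\}$ for the uncovered messages, adjoining $j$ covers exactly the messages in $N$ carried by $j$, so with
\[
a_j=\big|\{\,i\in N:j\in{\cal T}_i\,\}\big|
\]
we have $|C_{\cal B}|=|C_{\cal A}|+a_j$. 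Setting $R=(M-1)(n+1)+1$, the goal is to find $j$ with $a_j\leq R-|C_{\cal A}|$; since $|C_{\cal A}|\leq(M-1)n+1$ gives $R-|C_{\cal A}|\geq M-1$, the heuristic target is $a_j\leq M-1$, which is exactly the increment a ``tight'' set can afford.

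The core is a double-counting argument over the $K-n$ candidates in $[K]\setminus{\cal A}$. Every $i\in N$ satisfies ${\cal T}_i\subseteq[K]\setminus{\cal A}$ and $|{\cal T}_i|\leq M$, so
\[
\sum_{j\in[K]\setminus{\cal A}}a_j=\sum_{i\in N}|{\cal T}_i|\leq M\,|N|=M\big(K-|C_{\cal A}|\big).
\]
I would argue by contradiction: suppose every candidate fails, i.e.\ $|C_{{\cal A}\cup\{j\}}|\geq R+1$, equivalently $a_j\geq R-|C_{\cal A}|+1$ for all $j$. Summing gives $(R-|C_{\cal A}|+1)(K-n)\leq M(K-|C_{\cal A}|)$. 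The difference $F(c)=(R-c+1)(K-n)-M(K-c)$ is linear in $c=|C_{\cal A}|$ with slope $M-(K-n)$; when $K-n\geq M$ this slope is nonpositive, so $F$ is minimized over the admissible range $c\leq(M-1)n+1$ at the extreme $c=(M-1)n+1$, where a short computation yields $F=M\big[(M-2)n+1\big]>0$ for $M\geq 2$. This contradicts $F\leq 0$, so some $j$ succeeds.

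The remaining obstacle is purely the degenerate regime $K-n\leq M-1$, where the slope is positive and the minimization above no longer applies. Here I would instead exploit the slack in the target directly: since $C_{\cal B}\subseteq[K]$ we always have $|C_{\cal B}|\leq K$, and $K\leq n+M-1\leq(M-1)n+M=R$ for $M\geq 2$, so \emph{any} added $j$ already satisfies $|C_{\cal B}|\leq R$. Thus the two regimes fit together: in the main regime ($K-n\geq M$) the sign of $F$ forces a good transmitter to exist, while in the short regime ($K-n\leq M-1$) the bound is automatic. I expect the only delicate point to be verifying the strict positivity $F=M[(M-2)n+1]>0$ at the extremal $c$ and confirming that this extremal value is indeed where $F$ is smallest; the combinatorial counting itself is routine, and the clean induction of Theorem~\ref{thm:tauouterbound} then follows from this step together with Lemma~\ref{lem:basis}.
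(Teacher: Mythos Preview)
Your proposal is correct and follows essentially the same approach as the paper: both add one transmitter to ${\cal A}$, double-count message instances outside $C_{\cal A}$ to get $\sum_{j\notin{\cal A}} a_j \le M(K-|C_{\cal A}|)$, and then show the key strict inequality $M(K-|C_{\cal A}|)<(K-n)\big(R+1-|C_{\cal A}|\big)$ by reducing to the extremal value $|C_{\cal A}|=(M-1)n+1$ and checking the slope in $|C_{\cal A}|$ (the paper packages this as Lemma~\ref{lem:genineq}). The only cosmetic difference is your case split at $K-n\ge M$ versus the paper's split at $K>(M-1)(n+1)+1$; since $K-n\le M-1$ implies $K\le R$, your degenerate regime is contained in the paper's trivial regime, and your main-regime argument covers the remaining overlap without change.
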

\begin{proof}
We only consider the case where $K > (M-1)(n+1)+1$, as otherwise, the statement trivially holds. In this case, we can show that
\begin{equation}\label{eq:inequalityone}
M(K-|C_{\cal A}|) < (K-n)((M-1)(n+1)+2-|C_{\cal A}|).
\end{equation}
The proof of ~\eqref{eq:inequalityone} is available in Lemma~\ref{lem:genineq} below. Note that the left hand side in the above equation is the maximum number of message instances for messages outside the set $C_{\cal A}$, i.e.,
\begin{eqnarray}
\sum_{i\in[K], i \notin {\cal A}} |C_{\{i\}} \backslash C_{\cal A}| &\leq& M(K-|C_{\cal A}|)\nonumber
\\&<&(K-n)((M-1)(n+1)+2-|C_{\cal A}|).\nonumber
\\
\end{eqnarray}
Since the number of transmitters outside the set ${\cal A}$ is $K-n$, it follows by the pigeonhole principle that there exists a transmitter whose index is outside ${\cal A}$ and carries at most $(M-1)(n+1)+1-|C_{\cal A}|$ messages whose indices are outside $C_{\cal A}$. More precisely,
\begin{equation}
\exists i \in [K]\backslash{\cal A}: |C_{\{i\}}\backslash{C_{\cal A}}| \leq (M-1)(n+1)+1-|C_{\cal A}|.
\end{equation}
It follows that there exists a transmitter whose index is outside the set ${\cal A}$ and can be added to the set ${\cal A}$ to form the set ${\cal B}$ that satisfies the statement.  
\end{proof}

\begin{lem}\label{lem:genineq}
If $K \geq (M-1)(n+1)+1$, $M \geq 2$, and $\exists {\cal S}\subseteq [K]$ such that $|{\cal S}| \leq (M-1)n+1$, then the following holds,
\begin{equation}
M(K-|{\cal S}|) < (K-n)\left((M-1)(n+1) + 2 - |{\cal S}|\right).
\end{equation}
\end{lem}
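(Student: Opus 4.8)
The plan is to treat the claimed inequality as a statement about an affine function of $K$ and reduce it to a single boundary check. Write $q = M-1 \ge 1$ and $s = |{\cal S}|$, and set
\[
g(K) = (K-n)\bigl(q(n+1) + 2 - s\bigr) - (q+1)(K-s),
\]
so that the goal is precisely $g(K) > 0$ under the two hypotheses $K \ge q(n+1)+1$ and $s \le qn+1$.

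First I would observe that $g$ is affine in $K$: the coefficient of $K$ equals $\bigl(q(n+1)+2-s\bigr) - (q+1) = qn + 1 - s$. The hypothesis $s \le qn+1$ says exactly that this slope is nonnegative, so $g$ is nondecreasing in $K$. Consequently it suffices to verify the strict inequality at the smallest admissible value $K_0 = q(n+1)+1$; for every larger $K$ the inequality then follows from monotonicity. This is the conceptual core of the argument, and the step I would emphasize: the difference of the two sides is affine in $K$ with slope $qn+1-s$, which the hypothesis forces to be nonnegative, so a single boundary evaluation settles all admissible $K$.

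The remaining step is to evaluate $g(K_0)$ and check its sign. Substituting $K_0 = qn+q+1$ and simplifying — it is convenient to put $a = qn+1-s \ge 0$, so that the linear factor becomes $q(n+1)+2-s = q+1+a$ — the expression collapses to
\[
g(K_0) = (q-1)\,n\,(q+1+a) + (q+1).
\]
Both summands are nonnegative, and the second is at least $q+1 \ge 2 > 0$ because $M \ge 2$; hence $g(K_0) > 0$, which closes the argument.

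I expect the only genuine labor to be this algebraic simplification at $K_0$, which is routine but must be carried out carefully to land on the clean form above (a couple of small numerical instances, e.g.\ $M=2,n=1$ and $M=3,n=1$, can be used as a sanity check on the collapse). Beyond this bookkeeping there is no real obstacle, since the affine-in-$K$ observation removes any need to reason about the full range of $K$ directly.
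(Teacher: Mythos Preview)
Your argument is correct. Both proofs exploit that the gap between the two sides is an affine function and reduce the claim to a single boundary evaluation, but they pin a different variable: the paper fixes $|{\cal S}|$ at its maximum value $(M-1)n+1$ and observes that decreasing $|{\cal S}|$ only widens the gap (since each decrement raises the right side by $K-n \ge M$ and the left by $M$), whereas you fix $K$ at its minimum $K_0=(M-1)(n+1)+1$ and use the nonnegative slope $qn+1-s$ in $K$. The paper's boundary check is marginally cleaner --- at $|{\cal S}|=(M-1)n+1$ the right-hand factor collapses to $M$, so the inequality becomes $K-(M-1)n-1<K-n$, immediate from $M\ge 2$ --- while your evaluation at $K_0$ needs the substitution $a=qn+1-s$ to reach the tidy form $(q-1)n(q+1+a)+(q+1)$. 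Either route works; yours implicitly uses $n\ge 0$ (true in context since $n$ is a set size) to ensure the first summand is nonnegative, which is worth stating explicitly.
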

\begin{proof}
We first prove the statement for the case where $|{\cal S}| = (M-1)n+1$. This directly follows as,
\begin{eqnarray}
M(K-|{\cal S}|) &=& M(K-((M-1)n+1))\nonumber
\\&\leq& M(K-(n+1))\nonumber
\\&<& M(K-n)\nonumber
\\&=&(K-n)\left((M-1)(n+1) + 2 - |{\cal S}|\right).\nonumber
\\
\end{eqnarray}
In order to complete the proof, we note that each decrement of $|{\cal S}|$ leads to an increase in the left hand side by $M$, and in the right hand side by $K-n$, and,
\begin{eqnarray}
K-n &\geq&  (M-1)(n+1)+1 - n\nonumber
\\&=&(M-2)n + M\nonumber
\\&\geq& M.
\end{eqnarray} 
\end{proof}

\begin{lem}\label{lem:mthreeinductionstep}
For $M=3$, If $\exists {\cal A} \subset [K]$ such that $|{\cal A}|=n$, and $\frac{K+1}{4} \leq n < K$, $|C_{\cal A}| \leq n + \frac{K+1}{4}+ 1$, then $\exists {\cal B} \subset [K]$ such that $|{\cal B}|=n+1$, $|C_{\cal B}| \leq n + \frac{K+1}{4} + 2$.
\end{lem}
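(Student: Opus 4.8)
The plan is to prove the statement exactly in the spirit of Lemma~\ref{lem:inductionstep}: I would exhibit a single transmitter $i \in [K]\setminus{\cal A}$ that can be adjoined to ${\cal A}$ while introducing only a controlled number of new messages, and set ${\cal B} = {\cal A}\cup\{i\}$. Writing $c = |C_{\cal A}|$, the target $|C_{\cal B}| \le n + \frac{K+1}{4} + 2$ is equivalent to requiring $|C_{\{i\}}\setminus C_{\cal A}| \le t$, where $t := n + \frac{K+1}{4} + 2 - c \ge 1$ by hypothesis. The key point that separates this from the generic step is that a \emph{smaller} value of $c$ grants a \emph{larger} budget $t$; so rather than always seeking a transmitter that introduces a single new message, I allow up to $t$ new messages and let the budget grow as $c$ shrinks.

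First I would dispose of a trivial regime. Since $C_{\cal B} \subseteq [K]$ forces $|C_{\cal B}| \le K$ for any choice of ${\cal B}$, the conclusion holds for an arbitrary superset of ${\cal A}$ of size $n+1$ whenever $K \le n + \frac{K+1}{4} + 2$, i.e. whenever $n \ge \frac{3K-9}{4}$. This covers in particular every case with $K - n \le 2$. Hence I may assume $K > n + \frac{K+1}{4} + 2$, which forces $K - n \ge 3$.

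Next comes the counting argument. Every message $W_j$ with $j \notin C_{\cal A}$ satisfies ${\cal T}_j \cap {\cal A} = \phi$, so ${\cal T}_j \subseteq [K]\setminus{\cal A}$ and $|{\cal T}_j| \le M = 3$. Thus the total number of incidences between the $K - c$ outside messages and the $K - n$ outside transmitters is at most $3(K - c)$. If every outside transmitter carried at least $t+1$ outside messages, this total would be at least $(t+1)(K-n)$, so it suffices to establish the strict inequality
\begin{equation}\label{eq:mthreekey}
3(K - c) < (K-n)\left(n + \frac{K+1}{4} + 3 - c\right),
\end{equation}
after which the pigeonhole principle yields an outside transmitter carrying at most $t$ outside messages, and adjoining it produces the desired ${\cal B}$.

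Finally I would prove~\eqref{eq:mthreekey} by the same monotonicity device used in Lemma~\ref{lem:genineq}. At the extreme value $c = n + \frac{K+1}{4} + 1$ (where $t = 1$), the right-hand side of~\eqref{eq:mthreekey} collapses to $2(K-n)$ and the inequality reduces to $K < 4n + 15$, which holds because $n \ge \frac{K+1}{4}$ gives $4n \ge K + 1$. Decreasing $c$ by one raises the left-hand side by $3$ and the right-hand side by $K - n \ge 3$, so the strict inequality is preserved for every admissible $c$ below the extreme value. The main obstacle here is conceptual rather than computational: one must notice that the budget $t$ scales like $n + \frac{K+1}{4} - c$ so that the slack in the induction hypothesis is fully exploited, and one must choose the trivial-case threshold so that exactly the range $K - n \le 2$, where the monotonicity step would fail, is absorbed. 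With~\eqref{eq:mthreekey} in hand, the statement follows.
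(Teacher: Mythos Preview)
Your proposal is correct and follows essentially the same approach as the paper's proof: both handle the trivial regime $K \le n + \frac{K+1}{4} + 2$ first, then establish the strict pigeonhole inequality $3(K-c) < (K-n)\bigl(n + \frac{K+1}{4} + 3 - c\bigr)$ by checking it at the extreme $c = n + \frac{K+1}{4} + 1$ (where it reduces to $K < 4n + 15$) and propagating downward using $K - n \ge 3$. Your exposition of the ``budget'' $t$ is a nice touch, but the underlying argument is the same as the paper's.
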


\begin{proof}
The proof follows in a similar fashion to that of Lemma~\ref{lem:inductionstep}. Let $x=n+\frac{K+1}{4} + 1$. We only consider the case where $K > x+1$, as otherwise, the proof is trivial. We first assume the following,
\begin{equation}\label{eq:inequalitytwo}
3(K-|C_{\cal A}|) < (K-n)\left(n+\frac{K+1}{4}+3-|C_{\cal A}|\right).
\end{equation}
Now, it follows that,
\begin{eqnarray}
\sum_{i\in[K], i \notin {\cal A}} |C_{\{i\}} \backslash C_{\cal A}| &\leq& M(K-|C_{\cal A}|)\nonumber
\\&<&(K-n)\left(n+\frac{K+1}{4}+3-|C_{\cal A}|\right),\nonumber
\\
\end{eqnarray}
and hence,
\begin{equation}
\exists i \in [K]\backslash{\cal A}: |C_{\{i\}}\backslash{C_{\cal A}}| \leq n+\frac{K+1}{4}+2-|C_{\cal A}|,
\end{equation}
and then the set ${\cal B}={\cal A}\cup \{i\}$ satisfies the statement of the lemma. Finally, we need to show that~\eqref{eq:inequalitytwo} is true. For the case where $|C_{\cal A}|=x$,
\begin{eqnarray}
3x &=& \frac{3K}{4} + \frac{15}{4} + 3n\nonumber
\\&=& (2n+K) + \left(n - \frac{K}{4} + \frac{15}{4}\right)\nonumber
\\&>& 2n+K,
\end{eqnarray}
and hence, $3(K-x) < 2(K-n)$, which implies~\eqref{eq:inequalitytwo} for the case where $|C_{\cal A}|=x$. Moreover, we note that each decrement of $|C_{\cal A}|$ increases the left hand side of~\eqref{eq:inequalitytwo} by $3$ and the right hand side by $(K-n)$, and we know that,
\begin{eqnarray}
K &>& x+1\nonumber
\\&=& n+\frac{K+1}{4}+2\nonumber
\\&\geq& n+2,
\end{eqnarray}
and hence, $K-n \geq 3$, so there is no loss of generality in assuming that $|C_{\cal A}|=x$ in the proof of~\eqref{eq:inequalitytwo}, and the proof is complete.
\end{proof}

\section*{Proof of Lemma~\ref{lem:dofouterbound}}
In order to prove the lemma, we show that using a reliable communication scheme with the aid of a signal that is within $o(\log P)$, all the messages can be recovered from the set of received signals $Y_{\cal A}$. It follows that any achievable degree of freedom for the channel is also achievable for another channel that has only those receivers, thus proving the upper bound. 

In any reliable $n$-block coding scheme, \[\mathsf{H}(W_i|Y_i^n) \leq n\epsilon, \forall i \in [K].\]
Therefore, \[\mathsf{H}(W_{\cal A}|Y_{\cal A}^n) \leq \sum_{i \in {\cal A}} \mathsf{H}(W_i|Y_i^n) \leq n|{\cal A}|\epsilon.\]
Now, the sum $\sum_{i \in [K]} R_i = \sum_{i \in \bar{{\cal A}}} R_i + \sum_{i \in {\cal A}}R_i$ can be bounded as
\begin{eqnarray}\label{eq:lemma_tmp1} 
n\left(\sum_{i \in \bar{\cal A}} R_i + \sum_{i \in {\cal A}}R_i\right) & = & \mathsf{H}(W_{\bar{\cal A}}) + \mathsf{H}(W_{\cal A})\notag  \\
& \leq & I\left(W_{\bar{\cal A}};Y_{\bar{\cal A}}^n\right) + I\left(W_{\cal A};Y_{\cal A}^n\right)\notag\\&&+nK \epsilon,
\end{eqnarray}
where $\epsilon$ can be made arbitrarily small, by choosing $n$ large enough. The two terms on the right hand side of \eqref{eq:lemma_tmp1} can be bounded as
\begin{eqnarray*}
I\left(W_{\cal A};Y_{\cal A}^n\right) 
& = &  \entropy{Y_{\cal A}^n} - \entropy{Y_{\cal A}^n|W_{\cal A}}\\
& \leq &  \sum_{i \in {\cal A}}\sum_{t = 1}^n \entropy{Y_i(t)} - \entropy{Y_{\cal A}^n|W_{\cal A}} \\
& = &  |{\cal A}|n\log P + n(o(\log P)) - \entropy{Y_{\cal A}^n|W_{\cal A}},
\end{eqnarray*}
\begin{eqnarray*}
I\left(W_{\bar{\cal A}};Y_{\bar{\cal A}}^n\right) & \leq & \ I\left(W_{\bar{\cal A}};Y_{\bar{\cal A}}^n,Y_{\cal A}^n,W_{\cal A}\right) \\
& = &  I(W_{\bar{\cal A}};Y_{\cal A}^n|W_{\cal A}) + I(W_{\bar{\cal A}};Y_{\bar{\cal A}}^n|W_{\cal A},Y_{\cal A}^n) \\
& \leq &  \entropy{Y_{\cal A}^n|W_{\cal A}}  - \entropy{Z_{\cal A}^n} + \entropy{Y_{\bar{\cal A}}^n|W_{\cal A},Y_{\cal A}^n}-\entropy{Z_{\bar{A}}^n}.
\end{eqnarray*}
Now, we have
\begin{eqnarray*}
I\left(W_{\cal A};Y_{\cal A}^n\right)  + I\left(W_{\bar{\cal A}};Y_{\bar{\cal A}}^n\right) &\leq&
  |{\cal A}|n\log P  + \entropy{Y_{\bar{\cal A}}^n|W_{\cal A},Y_{\cal A}^n}\\&&  + n(o(\log P)).
\end{eqnarray*}

Therefore, if we show that \[\entropy{Y_{\bar{\cal A}}^n|W_{\cal A},Y_{\cal A}^n} = n(o(\log P)),\] then from \eqref{eq:lemma_tmp1}, we have the required outer bound. Since $W_{\cal A}$ contains all the messages carried by transmitters with indices $U_{\cal A}$, they determine $X_{U_{\cal A}}^n$. Therefore,
\begin{eqnarray*}
\entropy{Y_{\bar{\cal A}}^n|W_{\cal A},Y_{\cal A}^n}  &=&  \entropy{Y_{\bar{\cal A}}^n|W_{\cal A},Y_{\cal A}^n,X_{U_{\cal A}}^n} \\
&\leq& \entropy{Y_{\bar{\cal A}}^n|Y_{\cal A}^n,X_{U_{\cal A}}^n} \\
& \leq & \sum_{t = 1}^{n} \entropy{Y_{\bar{\cal A}}(t)|Y_{\cal A}(t),X_{U_{\cal A}}(t)} \\
&\overset{(a)}{\leq}& \sum_{t = 1}^{n} \entropy{Y_{\bar{\cal A}}(t)|X_{U_{\cal A}}(t),X_{\bar{U}_{\cal A}}(t)+f_2(Z_{\cal A}(t))}\\
&\overset{(b)}{\leq}& n(o(\log P))
\end{eqnarray*}
where $(a)$ follows from the existence of the function $f_1$ such that $f_1(Y_{\cal A},X_{U_{\cal A}})=X_{\bar{U}_{\cal A}}+f_2(Z_{\cal A})$. Recall that for ${\cal S}_1\subseteq[K],{\cal S}_2\subseteq[K]$, ${\bm H}_{{\cal S}_1,{\cal S}_2}$ denotes the $|{\cal S}_1|\times|{\cal S}_2|$ matrix of channel coefficients between $X_{{\cal S}_2}$ and $Y_{{\cal S}_1}$, then $(b)$ follows as,
\begin{eqnarray*}
Y_{\bar{\cal A}} &=& {\bm H}_{\bar{\cal A},U_{\cal A}}X_{U_{\cal A}}+{\bm H}_{\bar{\cal A},\bar{U}_{\cal A}}X_{\bar{U}_{\cal A}}+Z_{\bar{\cal A}}
\\&=&{\bm H}_{\bar{\cal A},U_{\cal A}}X_{U_{\cal A}}+{\bm H}_{\bar{\cal A},\bar{U}_{\cal A}}\left(X_{\bar{U}_{\cal A}}+f_2(Z_{\cal A})\right)+Z_{\bar{\cal A}}-{\bm H}_{\bar{\cal A},\bar{U}_{\cal A}}f_2(Z_{\cal A}),
\end{eqnarray*}
and hence,
\begin{eqnarray*}
\entropy{Y_{\bar{\cal A}}|X_{U_{\cal A}},X_{\bar{U}_{\cal A}}+f_2(Z_{\cal A})}&\leq&\entropy{Y_{\bar{\cal A}}|{\bm H}_{\bar{\cal A},U_{\cal A}}X_{U_{\cal A}}+{\bm H}_{\bar{\cal A},\bar{U}_{\cal A}}\left(X_{\bar{U}_{\cal A}}+f_2(Z_{\cal A})\right)}
\\&\leq&\entropy{Z_{\bar{\cal A}}-{\bm H}_{\bar{\cal A},\bar{U}_{\cal A}}f_2(Z_{\cal A})}
\\&=&o(\log P)
\end{eqnarray*}


\begin{thebibliography}{6}

\bibitem{ElGamal-Annapureddy-Veeravalli-CISS12}
{A.~El Gamal}, V.~S.~Annapureddy, and V.~V.~Veervalli, ``On optimal message assignments for interference channel with CoMP transmission,'' in \emph{Proc. $46^{\mathrm{th}}$ Annual Conference on Information Sciences and Systems}, Princeton, NJ, Mar. 2012.


\bibitem{ElGamal-Annapureddy-Veeravalli-ICC12}
{A.~El Gamal}, V.~S.~Annapureddy, and V.~V.~Veervalli, ``Degrees of freedom (DoF) of Locally Connected Interference Channels with Coordinated Multi-Point (CoMP) Transmission,'' \emph{in Proc. IEEE International Conference on Communications (ICC)}, Ottawa, Jun. 2012.

\bibitem{Carleial-IT77}
A.~B.~Carleial, ``A case where interference does not reduce capacity,'' \emph{IEEE Trans.
  Inf. Theory}, vol.~21, pp. 294 --304, May. 1977.

\bibitem{Shang-Kramer-Chen-IT09}
X.~Shang, G.~Kramer, and B.~Chen ``A new outer bound and noisy-interference sum-rate capacity for the Gaussian interference channel,'' \emph{IEEE Trans.
  Inf. Theory}, vol.~55, no.~2, pp. 689 --699, Feb. 2009.

\bibitem{Motahari-Khandani-IT09}
A.~S.~Motahari and A.~K.~Khandani ``Capacity bounds for the Gaussian interference channel,'' \emph{IEEE Trans.
  Inf. Theory}, vol.~55, no.2, pp. 620 --643, Feb. 2009.

\bibitem{Annapureddy-Veeravalli-IT09}
V.~S.~Annapureddy and V.~V.~Veeravalli ``{G}aussian interference networks: {S}um capacity in the low interference regime and new outerbounds on the capacity region,'' \emph{IEEE Trans.
  Inf. Theory}, vol.~55, no.~6, pp. 3032 --3050, Jun. 2009.




\bibitem{Madsen-Nosratinia}
A.~Host-Madsen and A.~Nosratinia, ``The multiplexing gain of wireless networks,'' in \emph{Proc. IEEE International Symposium on Information Theory (ISIT)}, Nice, Jun. 2007.

\bibitem{Cadambe-IA}
V.~Cadambe and S.~A.~Jafar, ``{I}nterference {A}lignment and {D}egrees of
  {F}reedom of the {K}-{U}ser {I}nterference {C}hannel,'' \emph{IEEE Trans.
  Inf. Theory}, vol.~54, no.~8, pp. 3425 --3441, Aug. 2008.

\bibitem{Gomadam-Cadambe-Jafar-IT11}
K.~Gomadam, V.~Cadambe, and S.~A.~Jafar ``Approaching the Capacity of Wireless Networks through Distributed Interference Alignment," \emph{IEEE Trans. Inf. Theory}, vol.~57, no.~6, pp.~3309-3322, Jun. 2011.

\bibitem{CoMP-book}
P.~Marsch and G.~P.~Fettweis ``Coordinated Multi-Point in Mobile Communications: from theory to practice,''   First Edition, \emph{Cambridge}, Aug. 2011.

\bibitem{Annapureddy-ElGamal-Veeravalli-IT11}
V.~S.~Annapureddy, {A.~El Gamal}, and V.~V.~Veervalli, ``Degrees of Freedom of Interference Channels with CoMP Transmission and Reception,'' \emph{IEEE Trans. Inf. Theory}, vol.~58, no.~9, pp. 5740-5760, Sep. 2012.


\bibitem{Lapidoth-Shamai-Wigger-ISIT07}
A.~Lapidoth, S.~Shamai (Shitz) and M.~A.~Wigger, ``A linear interference network with local Side-Information,'' in \emph{Proc. IEEE International Symposium on Information Theory (ISIT)}, Nice, Jun. 2007.

\bibitem{Devroye-Mitran-Tarokh-IT06}
N.~Devroye, P.~Mitran, and V.~Tarokh ``Achievable rates in cognitive radio channels,'' \emph{IEEE Trans.
  Inf. Theory}, vol.~52, no.~5, pp. 1813 --1827, May. 2006.

\bibitem{Lapidoth-Shamai-Wigger-ITW07}
A.~Lapidoth, S.~Shamai (Shitz) and M.~A.~Wigger, ``On cognitive interference networks,'' in \emph{Proc. IEEE Information Theory Workshop (ITW)}, Lake Tahoe, California, Sep. 2007.

\bibitem{Jovicic-Viswanath-IT09}
A.~Jovicic and P.~Viswanath ``Cognitive radio: {An} information theoretic perspective,'' \emph{IEEE Trans.
  Inf. Theory}, vol.~55, no.~9, pp. 3945 --3958, 2009.

\bibitem{Lapidoth-Levy-Shamai-Wigger-ISIT09}
A.~Lapidoth, N.~Levy, S.~Shamai (Shitz) and M.~A.~Wigger ``A cognitive network with clustered decoding,'' in \emph{Proc. IEEE International Symposium on Information Theory (ISIT)}, Seoul, Jul. 2009.

\bibitem{Lapidoth-Levy-Shamai-Wigger-arXiv12}
A.~Lapidoth, N.~Levy, S.~Shamai (Shitz) and M.~A.~Wigger ``Cognitive Wyner networks with clustered decoding,'' \emph{Submitted to IEEE Trans. Inf. Theory, available at http://arxiv.org/abs/1203.3659.} Mar. 2012

\bibitem{Maamari-Tuninetti-Devroye-arXiv13}
D.~Maamari, D.~Tuninetti and N.~Devroye ``Approximate Sum-Capacity of K-user Cognitive Interference Channels with Cumulative Message Sharing,'' \emph{available at http://arxiv.org/abs/1301.6198.} Jan. 2013

\bibitem{Devroye-Sharif-ISIT07}
N.~Devroye and M.~Sharif, ``The multiplexing gain of {MIMO} x-channels with partial transmit side information,'' in \emph{Proc. IEEE International Symposium on Information Theory (ISIT)}, Nice, Jun. 2007.


\bibitem{Wang-Tse-ISIT10}
I-Hsiang Wang and D.~Tse ``Interference mitigation through limited transmitter cooperation,'' in \emph{Proc. IEEE International Symposium on Information Theory (ISIT)}, Austin, Jun. 2010.

\bibitem{Prabhakaran-Viswanath-IT11}
P.~Prabhakaran and P.~Viswanath ``Interference channels with source cooperation,'' \emph{IEEE Trans.
  Inf. Theory}, vol.~57, no.~1, pp. 156 --186, 2011.

\bibitem{Ali-Motahari-Khandani-IT08}
M.~Maddah-Ali, A.~Motahari, and A.~Khandani ``Communication over MIMO X channels: Interference alignment, decompositions, and performance analysis,'' \emph{IEEE Trans.
  Inf. Theory}, vol.~54, no.~8, pp. 3457 --3470, 2008.

\bibitem{Jafar-Shamai-IT08}
S.~A.~Jafar and S.~Shamai (Shitz) ``Degrees of freedom region of the MIMO X Channel,'' \emph{IEEE Trans.
  Inf. Theory}, vol.~54, no.~1, pp. 151 --170, 2008.

\bibitem{Annapureddy-ElGamal-Veeravalli-ISIT11}
V.~S.~Annapureddy, {A.~El Gamal}, and V.~V.~Veervalli, ``Degrees of freedom of cooperative interference networks,'' in \emph{Proc. IEEE International Symposium on Information Theory (ISIT)}, Saint Petersburg, Aug. 2011.

\bibitem{Zhang-Cui-SP10}
R.~Zhang and S.~Cui ``Cooperative interference management with MISO beamforming,'' \emph{IEEE Transactions on Signal Processing}, vol.~58, pp. 5450 --5458, Oct. 2010.

\bibitem{Wyner}
A.~Wyner, ``{S}hannon-{T}heoretic {A}pproach to
  a {G}aussian {C}ellular {M}ultiple-{A}ccess {C}hannel,'' \emph{IEEE Trans.
  Info Theory}, vol.~40, no.~5, pp. 1713 --1727, Nov. 1994.



\bibitem{Shamai-Wigger-ISIT11}
S.~Shamai (Shitz) and M.~A.~Wigger, ``Rate-Limited Transmitter-Cooperation in Wyner's asymmetric interference network,'' in \emph{Proc. IEEE International Symposium on Information Theory (ISIT)}, Saint Petersburg, Aug. 2011.

\bibitem{Maleki-Jafar-arXiv13}
H.~Maleki and S.~A.~Jafar, ``Optimality of Orthogonal Access for One-dimensional Convex Cellular Networks,'' \emph{available at http://arxiv.org/abs/1305.2173}, May. 2013.

\bibitem{ElGamal-Annapureddy-Veeravalli-ISIT12}
{A.~El Gamal}, V.~S.~Annapureddy, and V.~V.~Veervalli, ``Degrees of freedom (DoF) of Locally Connected Interference Channels with Cooperating Multiple-Antenna Transmitters,'' \emph{in Proc. IEEE International Symposium on Information Theory}, Cambridge, MA, Jul. 2012.




\end{thebibliography}
\end{document}